\theoremstyle{plain}
\newtheorem{theorem}{Theorem}
\newtheorem{definition}[theorem]{Definition}
\newtheorem{lemma}[theorem]{Lemma}
\newtheorem{corollary}[theorem]{Corollary}
\title{Gathering on Rings for Myopic Asynchronous Robots with Lights\thanks{This work was supported in part by Hiroshima University, project ESTATE (Ref. ANR-16-CE25-0009-03), JSPS KAKENHI No. 17K00019, 18K11167 and 19K11828, and Israel \& Japan Science and Technology Agency (JST) SICORP (Grant\#JPMJSC1806).}}
\author{
Sayaka Kamei\thanks{Graduate School of Engineering, Hiroshima University, Japan}
\and
Anissa Lamani\thanks{Ecole internationale des sciences du traitement de l'information, Cergy, France.}
\and
Fukuhito Ooshita\thanks{Graduate School of Science and Technology, Nara Institute of Science and Technology, Japan}
\and
S\'ebastien Tixeuil\thanks{Sorbonne University, France}
\and 
Koichi Wada\thanks{Faculty of Science and Engineering, Hosei University, Japan}
}
\date{}
\begin{document}

\maketitle

\begin{abstract}
We investigate gathering algorithms for asynchronous autonomous mobile robots moving in uniform ring-shaped networks. Different from most work using the Look-Compute-Move (LCM) model, we assume that robots have limited visibility and lights.
That is, robots can observe nodes only within a certain fixed distance, and emit a color from a set of constant number of colors.
We consider gathering algorithms depending on two parameters related to the initial configuration: $M_{init}$, which denotes the number of nodes between two border nodes, and $O_{init}$, which denotes the number of nodes hosting robots between two border nodes. In both cases, a border node is a node hosting one or more robots that cannot see other robots on at least one side. 
Our main contribution is to prove that, if $M_{init}$ or $O_{init}$ is odd, gathering is always feasible with three or four colors. The proposed algorithms do not require additional assumptions, such as knowledge of the number of robots, multiplicity detection capabilities, or the assumption of towerless initial configurations. These results demonstrate the power of lights to achieve gathering of robots with limited visibility.
\end{abstract}

\section{Introduction}\label{sec:intro}
\subsection{Background and Motivation}

A lot of research about autonomous mobile robots coordination has been conducted by the distributed computing community. The common goal of these research is to clarify the minimum capabilities for robots to achieve a given task. Hence, most work adopts weak assumptions such as: robots are identical (\emph{i.e.}, robots execute the same algorithm and cannot be distinguished), oblivious (\emph{i.e.}, robots have no memory to record past actions), and silent (\emph{i.e.}, robots cannot send messages to other robots). In addition, to model the behavior of robots, most work uses the Look-Compute-Move (LCM) model introduced by Suzuki and Yamashita~\cite{SY99}. In the LCM model, each robot repeats executing cycles of Look, Compute and Move phases. During the Look phase, the robot takes a snapshot to observe the positions of other robots. According to this snapshot, the robot computes the next movement during the Compute phase. If the robot decides to move, it moves to the target position during the Move phase. By using the LCM model, it is possible to clarify problem solvability both continuous environments (\emph{i.e.}, two- or three-dimensional Euclidean space) and discrete environments (\emph{i.e.}, graph networks). State-of-the-art surveys are given in the recent book by Flocchini \emph{et al.}~\cite{FPS19}.

In this paper, we focus on gathering in graph networks. The goal of gathering is to make all robots gather at a non-predetermined single node. Since gathering is a fundamental task of mobile robot systems and a benchmark application, numerous algorithms have been proposed for various graph network topologies. In particular, many papers focus on ring-shaped networks because symmetry breaking becomes a core difficulty, and any such solution is likely to adapt well on other topologies, as it is possible to make virtual rings over arbitrary networks and hence use ring algorithms in such networks \cite{KKN10,KMP08,IIKO13,DSN14,DNN17,DSKN16}. 

Klasing \emph{et al.}~\cite{KKN10,KMP08} proposed gathering algorithms for rings with \emph{global-weak} multiplicity detection. Global-weak multiplicity detection enables a robot to detect whether the number of robots on each node is one, or more than one. However, the exact number of robots on a given node remains unknown if there is more than one robot on the node.
Izumi \emph{et al.}~\cite{IIKO13} provided a gathering algorithm for rings with \emph{local-weak} multiplicity detection under the assumption that the initial configurations are non-symmetric and non-periodic, and that the number of robots is less than half the number of nodes. Local-weak multiplicity detection enables a robot to detect whether the number of robots on its \emph{current} node is one, or more than one. 
D'Angelo \emph{et al.}~\cite{DSN14,DNN17} proposed unified ring gathering algorithms for most of the solvable initial configurations, using global-weak multiplicity detection \cite{DSN14}, or local-weak multiplicity detection \cite{DNN17}. 
Finally, Klasing \emph{et al.}~\cite{DSKN16} proposed gathering algorithms for grids and trees.
All aforementioned work assumes \emph{unlimited visibility}, that is, each robot can take a snapshot of the whole network graph with all occupied positions.

The unlimited visibility assumption somewhat contradicts the principle of weak mobile robots, hence several recent studies focus on \emph{myopic} robots~\cite{DLLP13,DLLP15,GP13,GP13-1,KLO14}. A myopic robot has limited visibility, that is, it can take a snapshot of nodes (with occupying robots) only within a certain fixed distance $\phi$. Not surprisingly, many problems become impossible to solve in this setting, and several strong assumptions have to be made to enable possibility results.
Datta \emph{et al.}~\cite{DLLP13,DLLP15} study the problem of ring exploration with different values for $\phi$.
Guilbault \emph{et al.}~\cite{GP13} study gathering in bipartite graphs with the global-weak multiplicity detection (limited to distance $\phi$) in case of $\phi=1$, and prove that gathering is feasible only when robots form a star in the initial configuration. They also study the case of infinite lines with $\phi>1$ \cite{GP13-1}, and prove that no universal algorithm exists in this case. In the case of rings, since a ring with even nodes is also a bipartite graph, gathering is feasible only when three robots occupy three successive nodes. For this reason, Kamei \emph{et al.}~\cite{KLO14} give gathering algorithms for rings with $\phi\ge 1$ by using strong assumptions, such as knowledge of the number of robots, and strong multiplicity detection, which enables a robot to obtain the exact number of robots on a particular node.
Overall, limited visibility severely hinders the possibility of gathering oblivious robots on rings.

On the other hand, completely oblivious robots (that can not remember past actions) may be too weak of a hypothesis with respect to a possible implementation on real devices, where persistent memory is widely available. Recently, enabling the possibility that robots maintain a non-volatile visible light~\cite{DFPSY16} has attracted a lot of attention to improve the task solvability. A robot endowed with such a light is called a \emph{luminous} robot. Each luminous robot can emit a light to other robots whose color is chosen among a set of colors whose size is constant. The light color is non-volatile, and so it can be used as a constant-space memory. Viglietta~\cite{G13} gives a complete characterization of the rendezvous problem (that is, the gathering of two robots) on a plane using two visible colored lights assuming unlimited visibility robots. Das \emph{et al.}~\cite{DFPSY16} prove that unlimited visibility robots on two-dimensional space with a five-color light have the same computational power in the asynchronous and semi-synchronous models. Di Luna \emph{et al.}~\cite{DG19} discuss how lights can be used to solve some classical distributed problems such as rendezvous and forming a sequence of patterns. The robots they assume have unlimited visibility, but they also discuss the case where robots visibility is limited by the presence of obstruction. 
Hence, luminous robots seem to dramatically improve the possibility to solve tasks in the LCM model.

As a result of the above observations, it becomes interesting to study the interplay between the myopic and luminous properties for LCM robots: can lights improve task solvability of myopic robots. 
To our knowledge, only three papers \cite{SF-Ex,NOI19,BDL19} consider this combination.
Ooshita \emph{et al.}~\cite{SF-Ex} and Nagahama \emph{et al.}~\cite{NOI19} demonstrate that for the task of ring exploration, even a two-color light significantly improves task solvability. Bramas \emph{et al.}~\cite{BDL19} give exploration algorithms for myopic robots in infinite grids by using a constant-color light. To this day, the characterization of gathering feasibility for myopic luminous robots (aside from the trivial case where a single color is available) is unknown. 

\subsection{Our Contributions}

\begin{figure}
   \centering
    \includegraphics[scale=0.22]{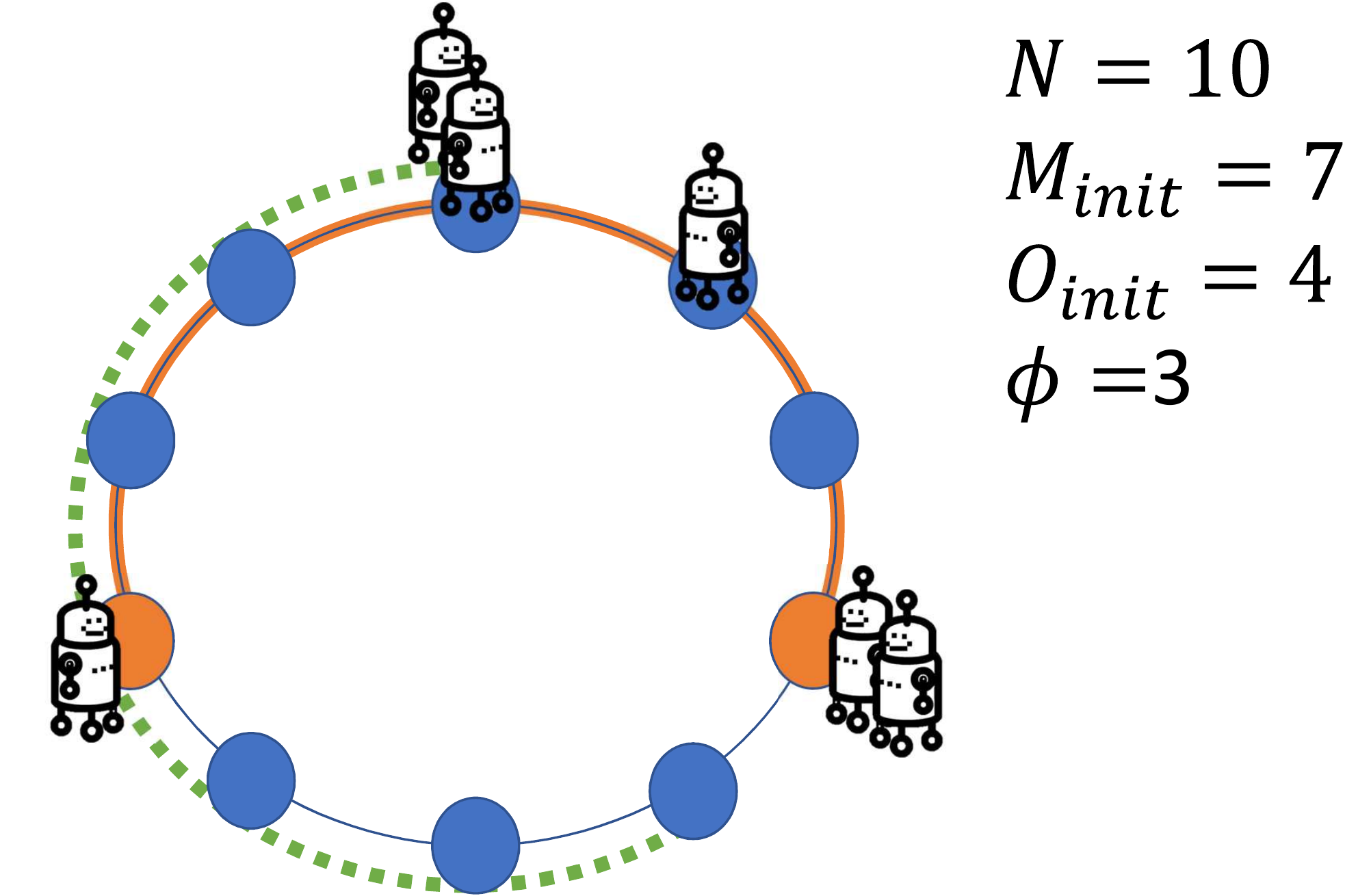}
    \caption{Example of an initial configuration where $M_{init}=7$ and $O_{init}=4$. Border nodes are represented in orange color. The left border robot can sense the range represented by green dots.}
    \label{fig:my_label}
\end{figure}

\begin{table}[t]
\caption{Summary of gathering algorithms for myopic robots in rings. All algorithms assume that the visibility graph is connected, and that there exist two borders. $R$ denotes the number of robots and $H_{center}$ denotes the size of the center hole in the initial configuration. The global (resp., local) strong multiplicity detection enables a robot to observe the exact number of robots on every node (resp., on its current node).}
\label{tab:results}
\centering
\scalebox{0.9}{
\begin{tabular}{|c|c|c|c|c|} 
\hline
 & $\phi$ & \#colors & Multiplicity detection & Assumptions \\
\hline \hline
\cite{KLO14} & $\ge 1$ & 1 & local strong & $R$ is odd \\
             &       &   &        & $R$ is known \\
             &       &   &        & towerless initial configuration\\
\hline
\cite{KLO14} & $\ge 3$ & 1 & global strong & $R$ is even \\
             &       &   &               & $R$ is known \\
             &       &   &               & $H_{center}$ is odd \\
             &       &   &               & $H_{center} \le \phi-2$ \\
             &       &   &               & towerless initial configuration \\
\hline
Algorithm 1  & $\ge 1$ & 3 & none & $M_{init}$ is odd \\
\hline
Algorithm 2  & $\ge 2$ & 4 & none & $O_{init}$ is odd \\
\hline
\end{tabular}}
\end{table}

We clarify the solvability of gathering for myopic luminous robots in rings. 
We consider the asynchronous (ASYNC) model, which is the most general timing assumption. As in previous work by Kamei \emph{et al.}~\cite{KLO14}, we focus on initial configurations such that the visibility graph\footnote{A visibility graph is defined as ${\cal G}_V=({\cal R},{\cal E}_V)$ where ${\cal R}$ is a set of all robots, and ${\cal E}_V$ is a set of pairs of robots that can observe each other.} is connected and there exist two border nodes\footnote{Node $v$ is a border node if robots on $v$ can observe other robots only in one direction.} (see Figure\,\ref{fig:my_label} for an example). Both assumptions are necessary for the class of \emph{cautious} gathering algorithms (see Lemmas~\ref{lem:morethantwoborders} and \ref{lem:noborder}). A cautious gathering protocol never expands the span of a given visibility graph.
In addition, we assume that all robots have the same color in initial configurations. 
We partition initial configurations using two parameters $M_{init}$ and $O_{init}$; 
$M_{init}$ is defined as the number of nodes between two border nodes,
and $O_{init}$ is defined as the number of nodes occupied by some robots (also see Figure\,\ref{fig:my_label}). We can easily observe that, if both $M_{init}$ and $O_{init}$ are even, there exist (so-called \emph{edge-symmetric}) initial configurations such that no algorithm achieves gathering (Corollary~\ref{cor:imp-ee}). Hence, we consider the case that $M_{init}$ or $O_{init}$ is odd.

On the positive side, our main contribution is to prove that, if $M_{init}$ or $O_{init}$ is odd, gathering is always feasible by using a constant number of colors without additional assumptions (so, no multiplicity detection is necessary) for any positive visible distance $\phi$. First, for the case that $M_{init}$ is odd and $\phi\ge 1$ holds, we give a gathering algorithm that uses three colors. Second, for the case that $O_{init}$ is odd and $\phi \ge 2$, we give a gathering algorithm that uses four colors. Note that we assume $\phi \ge 2$ in the second algorithm because, if $\phi=1$ holds, then $O_{init}=M_{init}$ also holds from the assumption of connected visibility graphs, so the first algorithm can be used in this case. We compare the current work with that of Kamei \emph{et al.}~\cite{KLO14} in Table \ref{tab:results}. Overall, lights with a constant number of colors permit to remove most of the previously considered assumptions. For example, our algorithms do not require any multiplicity detection (that is, robots do not distinguish whether the number of robots with the same color on a single node is one or more than one). Furthermore, our algorithms solve gathering even if initial configurations include tower nodes (a tower node is a node that hosts multiple robots). These results demonstrate the power of lights to achieve mobile robot gathering with limited visibility.

\section{Model}\label{sec:model}

We consider anonymous, disoriented and undirected rings $\mathcal{G}$ of $N (\geq 3)$ nodes $u_0, u_1, \dots, u_{N-1}$ such that $u_i$ is connected to both $u_{((i-1)\mod N)}$ and $u_{((i+1)\mod N)}$.
On this ring, $R$ autonomous robots collaborate to gather at one of the $N$ nodes of the ring, not known beforehand, and remain there indefinitely. 

The \emph{distance} between two \emph{nodes} $u$ and $v$ on a ring is the number of edges in a shortest path connecting them.
The \emph{distance} between two \emph{robots} $r_1$ and $r_2$ is the distance between two nodes occupied by $r_1$ and $r_2$, respectively.
Two robots or two nodes are \emph{neighbors} if the distance between them is one.

Robots are identical, \emph{i.e.}, they execute the same program and use no localized parameter such as an identifier or a particular orientation. Also, they are oblivious, \emph{i.e.}, they cannot remember their past observations or actions. 
We assume that robots do \emph{not} know $N$, the size of the ring, and $R$, the number of robots.

Each robot $r_i$ maintains a variable $L_i$, called {\em light}, which spans a finite set of states called \emph{colors}. A light is \emph{persistent} from one computational cycle to the next: the color is not automatically reset at the end of the cycle. Let $K$ denote the number of available light colors.
Let $L_i(t)$ be the light color of $r_i$ at time $t$. 
We assume the \emph{full light} model: each robot $r_i$ can see the light of other robots, but also its own light.
Robots are unable to communicate with each other explicitly (\emph{e.g.}, by sending messages), however, they can observe their environment, including the positions and colors of the other robots. 
We assume that besides colors, robots do \emph{not} have multiplicity detection capability: if there are multiple robots $r_1, r_2,\ldots r_k$ in a node $u$, an observing robot $r$ can detect only colors, so $r$ can detect there are multiple robots at $u$ if and only if at least two robots among $r_1, r_2,\ldots r_k$ have different colors. So, a robot $r$ observing a single color at node $u$ cannot know how many robots are located in $u$.

Based on the sensing result, a robot $r$ may decide to move or to stay idle. 
At each time instant $t_i(1 \leq i)$, robots occupy nodes of the ring, their positions and colors form a \emph{configuration} $C(t_i)$ of the system at time $t_i$. 
When $C(t_i)$ reaches $C(t_{i+1})$ by executing some phases between $t_i$ and $t_{i+1}$, it is denoted as $C(t_i) \longrightarrow C(t_{i+1})$.
The reflexive and transitive closure is denoted as $\longrightarrow^*$.

We assume that robots have limited visibility: an observing robot $r$ at node $u$ can only sense the robots that occupy nodes within a certain distance, denoted by $\phi$ ($\phi \geq 0$), from $u$. As robots are identical, they share the same $\phi$.

Let $\mathcal{X}_i(t)$ be the set of colors of robots located in node $u_i$ at time $t$. 
If a robot $r_j$ located at $u_i$ at $t$, the sensor of $r_j$ outputs a sequence, $\mathcal{V}_j$, of $2\phi+1$ set of colors: 
$$\mathcal{X}_{i-\phi}(t), \ldots , \mathcal{X}_{i-1}(t), (\mathcal{X}_i(t)), \mathcal{X}_{i+1}(t), \ldots, \mathcal{X}_{i+\phi}(t).$$
This sequence $\mathcal{V}_j$ is the \emph{view} of $r_j$. If the sequence $\mathcal{X}_{i+1}, \ldots , \mathcal{X}_{i+\phi}$ is equal to the sequence  $\mathcal{X}_{i-1}, \ldots , \mathcal{X}_{i-\phi}$, then the view $\mathcal{V}_j$ of $r_j$ is \emph{symmetric}.
Otherwise, it is \emph{asymmetric}.  
In $\mathcal{V}_j$, a node $u_k$ is \emph{occupied} at instant $t$ whenever $|\mathcal{X}_k(t)|>0$.
Conversely, if $u_k$ is not occupied by any robot at $t$, then $\mathcal{X}_{k}(t)=\emptyset$ holds, and $u_k$ is \emph{empty} at $t$.

If there exists a node $u_i$ such that $|\mathcal{X}_{i}(t)|=1$ holds, $u_i$ is \emph{singly-colored}.
Note that $|\mathcal{X}_{i}(t)|$ denotes the number of colors at node $u_i$, thus even if $u_i$ is singly-colored, it may be occupied by multiple robots (sharing the same color).
Now, if a node $u_i$ is such that $|\mathcal{X}_{i}(t)|>1$ holds, $u_i$ is \emph{multiply-colored}. As each robot has a single color, a multiply-colored node always hosts more than one robot.

In the case of a robot $r_j$ located at a singly-colored node $u_i$, $r_j$'s view $\mathcal{V}_j$ contains an $\mathcal{X}_i(t)$ that can be written as $[L_j]$.
Then, if the left node of $u_i$ contains one or more robots with color $L_k$, and the right node of $u_i$ contains one or more robots with color $L_l$, while $u_i$ only hosts $r_j$, then $\mathcal{V}_j$ can be written as $L_k[L_j]L_l$. 
Now, if robot $r_j$ at node $u_i$ occupies a multiply-colored position (with two other robots $r_k$ and $r_l$ having distinct colors), then $|\mathcal{X}_i(t)|=3$, and we can write $\mathcal{X}_i(t)$ in $\mathcal{V}_j$ as 
$\begin{bmatrix}
L_k\\
L_l\\
[L_j]
\end{bmatrix}$.
When the view does not consist of a single observed node, we use brackets to distinguish the current position of the observing robot in the view and the inner bracket to explicitly state the observing robot's color.

Our algorithms are driven by observations made on the current view of a robot, so many instances of the algorithms we use \emph{view predicates}: a Boolean function based on the current view of the robot.
The predicate $L_j$ matches any set of colors that includes color $L_j$, while predicate $(L_j,L_k)$ matches any set of colors that contains $L_j$, $L_k$, or both. Now the predicate  
$\begin{pmatrix}
L_1\\
L_2
\end{pmatrix}$ matches any set that contains \emph{both} $L_1$ and $L_2$. Some of our algorithm rules expect that a node is singly-colored, \emph{e.g.} with color $L_k$, in that case, the corresponding predicate is denoted by $L_k!$.
To express predicates in a less explicit way, we use character '?' to represent any non-empty set of colors, so a set of colors $\mathcal{X}_i\neq\emptyset$ satisfies predicate '?'. The $\neg$ operator is used to negate a particular predicate $P$ (so, $\neg P$ returns \emph{false} whenever $P$ returns \emph{true} and vice versa).
Also, the superscript notation $P^{y}$ represents a sequence of $y$ consecutive sets of colors, each satisfying predicate $P$. Observe that $y \leq \phi$.
In a given configuration, if the view of a robot $r_j$ at node $u_i$ satisfies predicate ${\emptyset}^{\phi}[?]$ or predicate $[?]{\emptyset}^{\phi}$, then $r_j$ is a \emph{border robot} and $u_i$ a \emph{border node}.
Sometimes, we require a particular color $L_1$ to be present at some position $u_i$ and a particular color $L_2$ \emph{not} to be present at some position $u_i$. 
For the above case, the corresponding predicate would be:
$\begin{pmatrix}
L_1\\
\neg L_2 
\end{pmatrix}$.

In this paper, we aim at maintaining the property that at most two border nodes exist at any time.
On the ring $\mathcal{G}$, let $H_{max}$ be the size of the maximum hole (\emph{i.e.}, the maximum sequence of empty nodes).
Note that by the assumptions, at instant $t=0$ (\emph{i.e.}, in the initial configuration), $H_{max}>\phi$ holds. 
Let $\mathcal{V}^\prime$ be the subset of nodes on a path between two border nodes $u$ and $v$, such that all robots are hosted by nodes in $\mathcal{V}^\prime$. Also, let $\mathcal{G}^\prime$ be the subgraph of $\mathcal{G}$ induced by $\mathcal{V}^\prime$. 
Note that, $\mathcal{G}^\prime$ does not include the hole with the size $H_{max}$.
At instant $t=0$, let $H_{init}$ be the maximum distance between occupied nodes in $\mathcal{G}^\prime$, $M_{init}$ be the number of nodes in $\mathcal{G}^\prime$, and $O_{init}(\leq M_{init})$ be the number of occupied nodes in $\mathcal{G}^\prime$.
We assume that $\phi\geq H_{init}\geq 1$ holds.
Note that, $H_{init}$ is the size of the second maximum hole in the ring because there are two border nodes. 
As previously stated, no robot is aware of $H_{init}$, $M_{init}$ and $O_{init}$.
In $\mathcal{G}^\prime$, let $D$ denote the distance between the two border nodes.
Note that, at $t=0$, $D=M_{init}-1$ holds.

Each robot $r$ executes Look-Compute-Move cycles infinitely many times: $(i)$ first, $r$ takes a snapshot of the environment and obtains an ego-centered view of the current configuration (Look phase), $(ii)$ according to its view, $r$ decides to move or to stay idle and possibly changes its light color (Compute phase), $(iii)$ if $r$ decided to move, it moves to one of its neighbor nodes depending on the choice made in the Compute phase (Move phase).
At each time instant $t$, a subset of robots is activated by an entity known as the scheduler. This scheduler is assumed to be fair, \emph{i.e.}, all robots are activated infinitely many times in any infinite execution. In this paper, we consider the most general asynchronous model: the time between Look, Compute, and Move phases is finite but unbounded. We assume however that the move operation is atomic, that is, when a robot takes a snapshot, it sees robots colors on nodes and not on edges. 
Since the scheduler is allowed to interleave the different phases between robots, some robots may decide to move according to a view that is different from the current configuration. Indeed,  during the compute phase, other robots may move. Both the view and the robot are in this case said to be \emph{outdated}.


In this paper, each rule in the proposed algorithms is presented in the similar notation as in \cite{SF-Ex}: 
$<Label>$ $:$ $<Guard>$ $::$ $<Statement>$. 
The guard is a predicate on the view $\mathcal{V}_j = \mathcal{X}_{i-\phi}, \ldots ,\mathcal{X}_{i-1}, (\mathcal{X}_i),
\mathcal{X}_{i+1}, \ldots , \mathcal{X}_{i+\phi}$
obtained by robot $r_j$ at node $u_i$ during the Look phase. If the predicate evaluates to \emph{true}, $r_j$ is \emph{enabled}, otherwise, $r_j$ is \emph{disabled}.
In the first case, the corresponding rule $<Label>$ is also said to be {\em enabled}.
If a robot $r_j$ is enabled, $r_j$ may change its color and then move based on the corresponding statement during its subsequent Compute and Move phases.
The statement is a pair of ({\it New color}, {\it Movement}).
Movement can be
($i$) $\rightarrow$, meaning that $r_j$ moves towards node $u_{i+1}$, 
($ii$) $\leftarrow$, meaning that $r_j$ moves towards node $u_{i-1}$, and
($iii$) $\bot$, meaning that $r_j$ does not move.
For simplicity, when $r_j$ does not move (resp. $r_j$ does not change its color), we omit {\it Movement} (resp. {\it New color}) in the statement.
The label $<Label>$ is 
denoted as R followed by a non-negative
integer (\emph{i.e.}, R0, R1, etc.) where a smaller label indicates higher priority.

\section{Algorithms}\label{sec:alg}
In this section, we propose two algorithms for myopic robots.
One is for the case that $M_{init}$ is odd, uses three colors ($K=3$).
The other is for the case that $M_{init}$ is even and $O_{init}$ is odd, uses four colors ($K=4$).
We assume that the initial configurations satisfy the following conditions:
\begin{itemize}
    \item All robots have the same color White,
    \item Each occupied node can have multiple robots, and
    \item $\phi\geq H_{init}\geq 1$ holds. 
\end{itemize} 

\subsection{Algorithm for the case $M_{init}$ is odd}

\subsubsection{Description}
The strategy of our algorithm is as follows:
The robots on two border nodes keep their lights Red or Blue, then the algorithm can recognize that they are originally on border nodes.
When robots on a border node move toward the center node, they change the color of their light to Blue or Red alternately regardless of the neighboring nodes being occupied, where initially robots become Red colors.
To keep the connected visibility graph, when a border node becomes singly-colored, the border robot changes its light to Blue or Red according to the distance from the original border node and moves toward the center node and the neighboring non-border robot becomes a border robot.
Eventually, two border nodes become neighboring.
Then, one has Blue robots and the other has Red robots 
because $M_{init}$ is odd. 
At the last moment, Red robots join Blue robots to achieve the gathering.

The formal description of the algorithm is in Algorithm~\ref{alg1}.
The rules of our algorithm are as follows:
\begin{itemize}
\item R0: If the gathering is achieved, a robot does nothing\footnote{
{Note, this algorithm and the next one cannot terminate because gathering configurations are not terminating ones due to robots with outdated views even if this rule is executed. Because this rule has higher priority, if it is enabled, robots do not need to check other guards.}}.
\item R1: A border White robot on a singly-colored border node changes its light to Red.
\item R2a and R2b: A border Red robot on a singly-colored border node changes its light to Blue and moves toward an occupied node.
\item R3a and R3b: A border Blue robot on a singly-colored border node changes its light to Red and moves toward an occupied node.
\item R4a and R4b: When White robots become border robots, they change their color to the same color as the border Red or Blue robots.
\item R5: If two border nodes are neighboring, a border Red robot on a singly-colored border node moves to the neighboring singly-colored node with Blue robots.
\end{itemize}

Figure \ref{fig:ex-alg1} illustrates an execution example of Algorithm \ref{alg1}. This figure assumes $\phi=2$. Figure \ref{fig:ex-alg1}(a) shows an initial configuration. First, border White robots change their lights to Red by R1 (Fig.\,\ref{fig:ex-alg1}(b)). Next, left border Red robots move by R2a. Since we consider the ASYNC model, some robots may become outdated. In Fig.\,\ref{fig:ex-alg1}(c), the top robot has changed its light but not yet moved, and the middle robot has looked but not yet changed its light. The outdated robots move in Fig.\,\ref{fig:ex-alg1}(d), and then the right border Red robot also moves in Fig.\,\ref{fig:ex-alg1}(e) by R2a. Here, one left border Red robot has not yet moved. However, since it still observes a White robot, it can move by R2a (Fig.\,\ref{fig:ex-alg1}(f)). Then, border Blue robots can move by R3b. In Fig.\,\ref{fig:ex-alg1}(g), one left border robot becomes outdated and the right border robot completes the movement. After the right border White robot changes its light to Red by R4a (Fig.\,\ref{fig:ex-alg1}(h)), the right border Red robots move by R5 (Fig.\,\ref{fig:ex-alg1}(i)). Note that, all robots stay at a single node but one of them is outdated. Hence the robot moves after that (Fig.\,\ref{fig:ex-alg1}(j)), but it can go back to the gathering node by R5 (Fig.\,\ref{fig:ex-alg1}(k)). Now robots have achieved gathering.

\begin{figure}[t]
\centering\includegraphics[keepaspectratio, width=11cm]{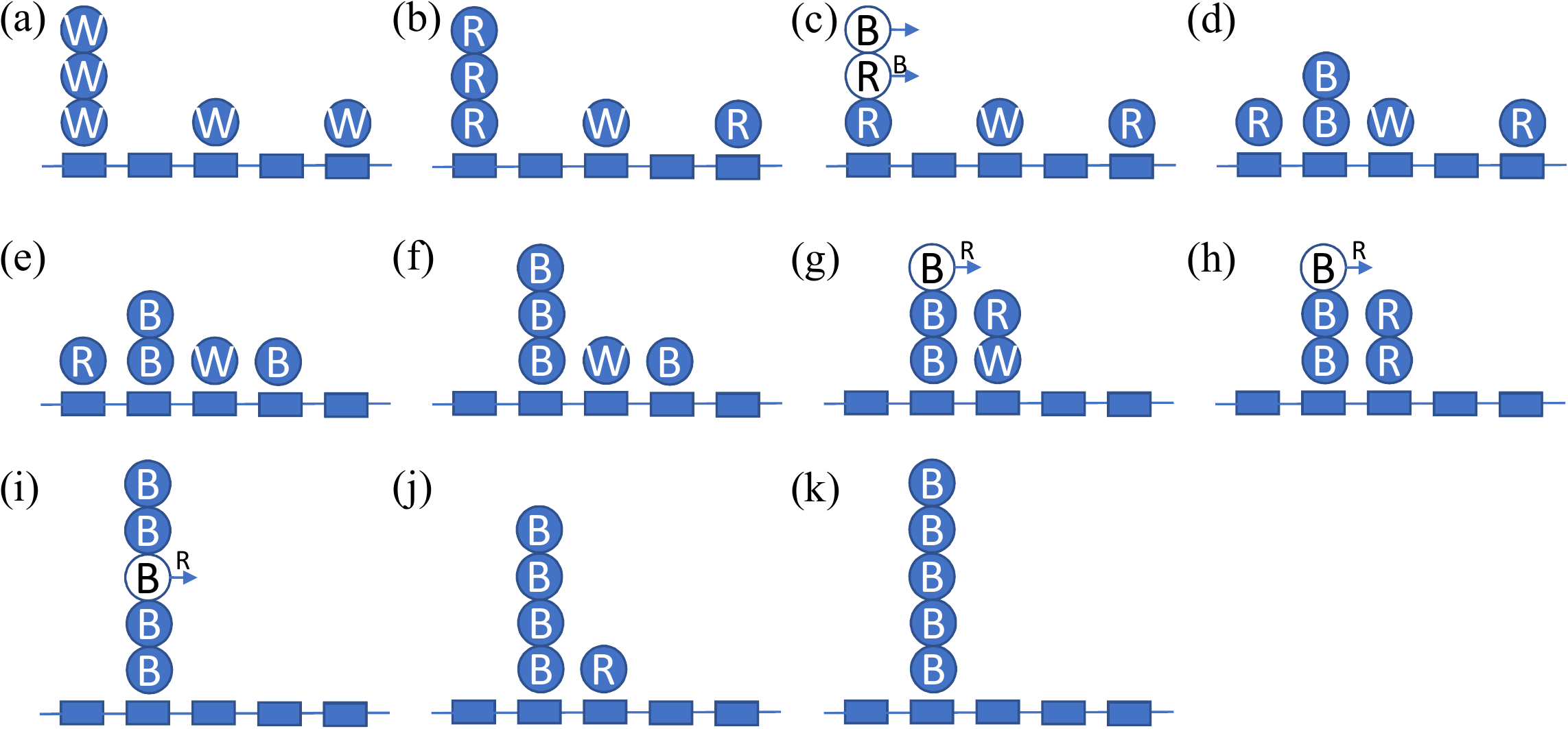}
\caption{An execution example of Algorithm \ref{alg1} with $\phi=2$.}
\label{fig:ex-alg1}
\end{figure}

In the case of $\phi=1$, each robot can view only its neighboring nodes.
In this case, we should add some assumptions as follows:
\begin{itemize}
    \item R2a and R3a are always disabled.
    \item R2b and R3b are enabled when there are White robots in the neighboring node.
    \item R5 is enabled when the neighboring node is singly-colored with Blue robots.
\end{itemize}
In that case, because of the connectivity of the visibility graph, there is no empty node in $\mathcal{G}^\prime$. That is, until gathering is achieved, there is at least one White robot on the neighboring node. Thus, such assumption is natural and we can prove the correctness in the same way as other cases by deleting R2a and R3a (the case such that there is no White robot on the neighboring node but gathering is not achieved).

\begin{algorithm}[t] 
{\bf Colors}\\
W (White), R (Red), B (Blue)\\
\\
{\bf Rules}\\
/* Do nothing after gathering. */\\
R0: $\emptyset^\phi[?]\emptyset^\phi$ :: $\bot$\\

/* Start by the initial border robot. */\\
R1: $\emptyset^\phi[W!](\neg\emptyset^{\phi})$ :: $R$\\

/* Border robots on singly-colored nodes change their color alternately and move. */\\
R2a: $\emptyset^\phi[R!](\emptyset,B!)(\neg(\emptyset^{\phi-1}))$ :: $B, \rightarrow$\\

R2b: $\emptyset^\phi[R!](W)(?^{\phi-1})$ :: $B, \rightarrow$\\

R3a: $\emptyset^\phi[B!](\emptyset,R!)(\neg(\emptyset^{\phi-1}))$ :: $R, \rightarrow$\\

R3b: $\emptyset^\phi[B!](W)(?^{\phi-1})$ :: $R, \rightarrow$\\

/* When White robots become a border robot, they change their color to the same color as the border robot. */\\
R4a: $\emptyset^\phi\begin{bmatrix}
R\\
[W]
\end{bmatrix}(\neg\emptyset^{\phi})$ :: $R$\\

R4b: $\emptyset^\phi\begin{bmatrix}
B\\
[W]
\end{bmatrix}(\neg\emptyset^{\phi})$ :: $B$\\

/* When two border nodes are neighboring, robots gather to a node with the Blue border robots. */\\
R5: $\emptyset^\phi[R!](B!)(\emptyset^{\phi-1})$ :: $B, \rightarrow$

\caption{Algorithm for the case that $M_{init}$ is odd.}
\label{alg1}
\end{algorithm}

\subsubsection{Proof of correctness}
\begin{lemma}[Lemma 1 in the main part]\label{NB}
When a robot $r_i$ looks, if it is a non-border, it cannot execute any action.
\end{lemma}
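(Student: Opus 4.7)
The plan is to proceed by straightforward inspection of every rule in Algorithm~\ref{alg1} and verify that each guard forces the acting robot to reside on a border node. Recall that, by the definition given in Section~\ref{sec:model}, a robot $r_j$ at $u_i$ is a border robot precisely when its view matches one of the patterns $\emptyset^{\phi}[?]$ or $[?]\emptyset^{\phi}$, \emph{i.e.}, at least one of the two $\phi$-long strips adjacent to its current node is entirely empty.

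First I would list the rules and point out the $\emptyset^{\phi}$ strip in each guard. R0's guard $\emptyset^{\phi}[?]\emptyset^{\phi}$ contains $\emptyset^\phi$ on both sides, so the executing robot is (trivially) a border robot. R1, R2a, R2b, R3a, R3b and R5 all start with the prefix $\emptyset^{\phi}$ immediately before the bracketed position, and since robots are disoriented the guard matches whenever one of the two sides of the view is $\emptyset^{\phi}$; in every case this means the robot sits on a border node. For R4a and R4b the guard is
$\emptyset^{\phi}\bigl[\begin{smallmatrix} R\\ [W]\end{smallmatrix}\bigr](\neg\emptyset^{\phi})$ (resp. with $B$), so the White robot shares its node with a Red (resp. Blue) robot and the whole $\phi$-strip on one side is empty; by the definition this node is a border node, hence the White robot itself is a border robot.

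From this enumeration, every guard implies $\emptyset^{\phi}[?]$ or $[?]\emptyset^{\phi}$, which is exactly the condition of being a border robot. Contrapositively, a non-border robot's view satisfies neither pattern, so none of R0--R5 can evaluate to \emph{true}, and the robot is disabled.

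The only subtlety, and the one step that deserves explicit mention in the write-up, is R4a/R4b: the acting robot is not the Red/Blue robot but the White one that happens to occupy the same node; one must observe that the ``border'' property is a property of the node (hence of every robot on it), not of a specific color, so the White robot at a border node is itself a border robot and thus the rule is consistent with the claim. Apart from this small observation, the proof is a pure case analysis on the six guards.
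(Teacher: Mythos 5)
Your proof is correct and takes essentially the same approach as the paper, which simply observes that no rule of Algorithm~\ref{alg1} is applicable to a non-border robot; you merely make the rule-by-rule inspection (each guard contains an $\emptyset^{\phi}$ strip adjacent to the observing robot's node, which is exactly the border condition) explicit, including the harmless subtlety about the White robot in R4a/R4b.
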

\begin{proof}
There is no rule such that $r_i$ can execute by the definition of Algorithm 1.
Hence, the lemma holds.
\end{proof}


To discuss the correctness, we consider the time instants such that the distance $D$ between the borders has just reduced at least one. The duration between them is called {\em mega-cycle}. Note that, $D$ is reduced at most two by the algorithm during a mega-cycle.
Let $t_0, t_1, \ldots t_i, \dots$ be starting times of mega-cycles, where $t_0$ is the starting time of the algorithm and for each $i~(\geq 1)$, $D$ is reduced at least one from $t_{i-1}$ at time $t_i$. Letting $C(t_i)$ be the configuration at time $t_i$, the transition of configurations from $t_i$ to $t_{i+1}$ is denoted as $C(t_i) \xrightarrow{MC} C(t_{i+1})$. 



Figure~\ref{TD} shows a transition diagram of configurations for each mega-cycle (We prove all transitions later).
The small blue box represents a node, and the circle represents a set of robots (One single circle may represent a set of robots with the same color).
The doubly (resp. singly) lined arrows represent that $D$ decreases by 2 (resp. 1).  
The letter in each circle represents the color of the lights.
The circles in parentheses represent that they are optional.
The circles in brackets represent that one of them should exist.
In {\sf Conf 1-3}, 
there is no Red or Blue robot with an outdated view (i.e., all robots are after Move phases before they look).
In {\sf Conf 4-7}, left side border represents 
that there is no Red or Blue robot with an outdated view.
Right side borders in {\sf Conf 4-7} represent that they are still working on their movements, i.e., there may be robots with outdated views.
The second node from the right can be empty, and there can be White robots on the node.
In the right side borders, the white circles represent that the robots can be on the node, but with the outdated view.
Actually, there may be White robots which do not change their color to the same as the border color, Red or Blue, yet. 
We omit such White robots changing to border color for simplicity, i.e. they may be included in any set of White robots in border nodes in this figure\footnote{However, it is considered in the proof.}.
For example, consider an example in Fig.\,\ref{fig:ex-alg1}. The initial configuration in Fig.\,\ref{fig:ex-alg1}(a) is represented by {\sf Init}. The next mega-cycle starts in Fig.\,\ref{fig:ex-alg1}(e), and this configuration is represented by {\sf Conf 4}. Note that mapping from left and right borders in {\sf Conf 1-7} to two borders in configurations may change during an execution. The next mega-cycle starts in Fig.\,\ref{fig:ex-alg1}(f), and this configuration is represented by {\sf Conf 1}.

\begin{figure*}[t]
\centering\includegraphics[keepaspectratio, width=13cm]{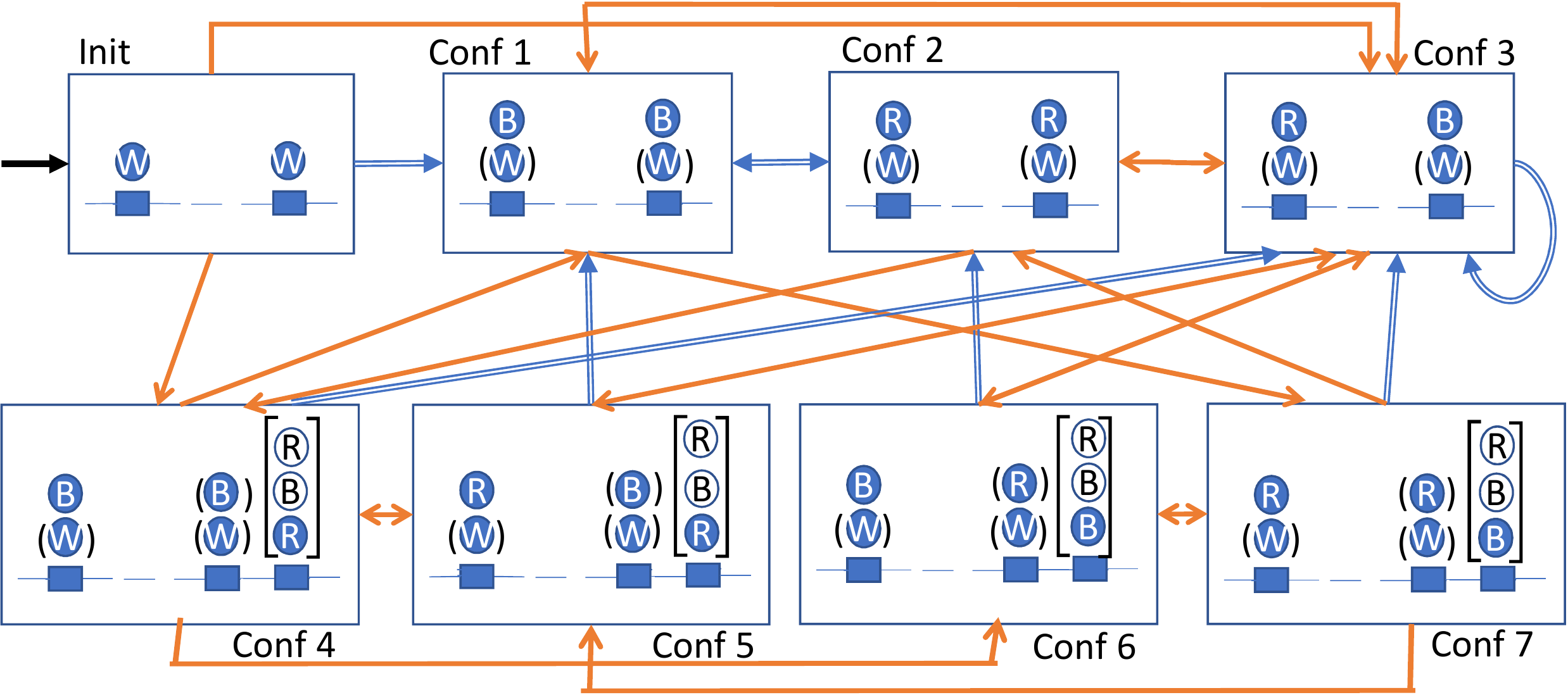}
\caption{Transition Diagram for Algorithm 1 while $D>1$.}\label{TD}
\end{figure*}

\begin{lemma}[Lemma 2 in the main part]\label{mul2sing}
Assume that a border node contains White and $\gamma$ robots, where $\gamma \in \{$Red, Blue$\}$ at time $t$. Then there exists a time $t' (>t)$ such that the border node becomes singly-colored one with $\gamma$ robots. 
In the case that the border node contains only White robots, the border node becomes Red singly-colored.
\end{lemma}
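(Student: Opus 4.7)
The plan rests on two structural observations about Algorithm~\ref{alg1}. First, every movement rule (R2a, R2b, R3a, R3b, R5) requires the observer's current node to be singly-colored, with predicate $[R!]$ or $[B!]$; and every color-change rule (R1, R4a, R4b) requires the acting robot to be White. Consequently, the $\gamma$ robots currently on the border node $u_i$ can neither move nor change color while any White robot remains at $u_i$; they are \emph{frozen}. The lemma therefore reduces to showing that every White robot at $u_i$ eventually adopts color $\gamma$, and that no stale computation deposits a third color on $u_i$.

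For each up-to-date White robot at $u_i$, the guard of R4a (if $\gamma = R$) or R4b (if $\gamma = B$) is satisfied: the $\emptyset^\phi$ prefix comes from $u_i$ being a border, $\gamma$ is present at $u_i$ by assumption, and the $\neg\emptyset^\phi$ suffix holds because the visibility graph is connected and contains the second border. By the fairness of the scheduler, each such White robot is eventually activated, takes a fresh snapshot, and changes to $\gamma$ in its Compute phase.

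The main obstacle is to rule out harmful \emph{outdated} actions --- Whites whose Look phase occurred at some $t'' < t$ with Compute/Move still pending. Since White robots never move and since Whites only transition to Red or Blue, the Whites on $u_i$ at time $t$ have been on $u_i$ since time $0$. I would case-split on how $u_i$ came to be a $(W,\gamma)$ border: for $\gamma = R$, either $u_i$ was a singly-White border on which R1 fired, or $u_i$ was a non-border White node into which a Blue neighbor arrived via R3b (turning Red on arrival); for $\gamma = B$, $u_i$ was necessarily a non-border White node joined by a Red neighbor via R2b or R5 (turning Blue on arrival). In the $\gamma = R$ case, any pending outdated action by a White on $u_i$ is either disabled (its stale view failed the $\emptyset^\phi$ prefix, since $u_i$ was not a border then) or is an instance of R1 or R4a with effect $W \to R$, consistent with the target. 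In the $\gamma = B$ case the key point is that $u_i$ is \emph{never} a singly-White border in its history, so outdated R1 firings are impossible; the only applicable outdated color-change rule is R4b, whose effect $W \to B$ is again consistent with the target. Combining the frozen-$\gamma$ property, fairness, and this case analysis, we reach a time $t' > t$ at which every White on $u_i$ has been rewritten to $\gamma$, so $u_i$ is singly-colored $\gamma$.

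For the second claim (border contains only White robots), the guard $\emptyset^\phi[W!](\neg\emptyset^\phi)$ of R1 is satisfied at $u_i$ (the $\neg\emptyset^\phi$ suffix uses the connectivity of the visibility graph and the existence of the second border), so by fairness some White robot on $u_i$ eventually fires R1 and becomes Red, yielding a $(W,R)$ border to which the first part of the lemma applies.
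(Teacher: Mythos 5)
Your proof is correct and follows essentially the same route as the paper's: the $\gamma$ robots are frozen because every movement rule requires a singly-colored node, the White robots convert to $\gamma$ via R4a/R4b (or via R1 and then R4a in the all-White case), and fairness guarantees this completes. Your explicit case analysis ruling out harmful outdated actions is a welcome refinement of the same argument --- the paper only treats outdated views explicitly in the all-White/R1 case and handles the $(W,\gamma)$ case more tersely.
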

\begin{proof}
First, consider the case that the border node $u$ contains only White robots.
By the definition of Algorithm 1, border robots on $u$ can execute only R1.
Then, in $u$, some of them may change their light to Red, some of them may remain White with outdated view, and others do not look yet.
White robots with outdated view eventually change their light to Red.
White robots which look Red robots on the node can execute R4a.
Thus, every robot on $u$ becomes Red, that is, the border node becomes Red singly-colored.

Next, consider the case that a border node $u$ contains White and $\gamma$ robots at time $t$.
If $\gamma=Red$, White robots on $u$ can execute only R4a and change their color to Red.
If $\gamma=Blue$, White robots on $u$ can execute only R4b and change their color to Blue.
Then, in both cases, $\gamma$ robots on $u$ cannot execute any Rule since $u$ is not singly-colored.
Thus, every robot on $u$ becomes $\gamma$, that is, the border node becomes singly-colored.
\end{proof}

\begin{lemma}[Lemma 3 in the main part]\label{Reduction-by-one}
Assume that a border node $u$ is singly-colored with Red (resp. Blue) at time $t$, then if there is no robot in $u$ at time $t' (>t)$, all robots in $u$ change their color to Blue (resp. Red), move to the neighboring node of $u$, and the distance between the border nodes is reduced by at least one at $t'$. 
\end{lemma}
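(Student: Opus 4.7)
My plan is to examine all rules that any Red robot residing on $u$ can execute during the interval $[t,t']$, and verify that every such execution has the same net effect: the robot changes its color from Red to Blue and moves to the neighbor of $u$ on the non-empty side. The Blue case then follows by the symmetric argument using R3a and R3b in place of R2a, R2b, R5.

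First I would show that during $[t,t']$ no robot enters $u$ and no robot on $u$ is ever White. Every movement rule of Algorithm~\ref{alg1} (R2a, R2b, R3a, R3b, R5) sends its executing robot toward an occupied neighbor, but $u$'s exterior is empty within distance $\phi$ (the border condition $\emptyset^{\phi}$); hence no robot can approach $u$ from outside and the robot population of $u$ is non-increasing throughout $[t,t']$. Combined with the hypothesis that $u$ is singly-colored Red at $t$, this implies $u$ never hosts a White robot during $[t,t']$. Consequently, the only way $u$ can briefly cease to be singly-colored Red is during the asynchronous window between the Compute and Move phases of one of its Red robots executing R2a, R2b, or R5, when $u$ simultaneously contains Red and Blue; during such a window no rule is enabled for the remaining Red robots since every candidate rule guard requires either $[R!]$ or a specific colored current node that does not match a multiply-colored $u$.

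Next I would enumerate the rules that can be enabled for a Red robot $r$ on $u$. Any Look that $r$ performs while residing on $u$ with color Red and with $u$ singly-colored Red produces a view beginning with $\emptyset^{\phi}[R!]\cdots$, and the only rules in Algorithm~\ref{alg1} compatible with this prefix are R2a, R2b, and R5, each of which has the statement "recolor to Blue and move $\rightarrow$", where $\rightarrow$ denotes the occupied-side neighbor of $u$. Rules R0, R1, R4a, R4b are excluded because they require empty, White, or non-singly-colored views, and R3a, R3b require a Blue current color. Hence every robot that leaves $u$ during $[t,t']$ arrives at the unique occupied-side neighbor of $u$ as a Blue robot; since $u$ is empty at $t'$, all the original Red robots of $u$ have done so.

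Finally, the distance $D$ between the two borders decreases by at least one, because the leftmost (resp.\ rightmost) occupied node on the once-Red side now lies one step closer to the opposite border, and that opposite border cannot have shifted outward since all movement rules point toward the center. The main obstacle I anticipate is handling asynchrony: a Red robot on $u$ may act on a view acquired well before $t$, possibly while it was still White. I would address this by noting that the color-changing rules R1 and R4a, which are the only means for a robot to become Red at $u$, contain no movement, so any eventual departure from $u$ must stem from a fresh Look performed while the robot was Red and $u$ was singly-colored Red, which brings us back to the enumeration above.
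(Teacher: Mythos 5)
Your rule enumeration and the handling of outdated views are correct and match the paper's reasoning: a Red robot on a singly-colored Red border node can only fire R2a, R2b, or R5, all of which recolor to Blue and move toward the occupied side, so whatever leaves $u$ arrives at the inner neighbor $v$ as a Blue robot, and the border distance cannot grow. However, there is a genuine gap: you treat ``there is no robot in $u$ at time $t'$'' as a \emph{hypothesis} and only prove the safety direction, whereas the paper's proof (and every later use of the lemma, e.g.\ in the proof that {\sf Init} reaches {\sf Conf 1/3/4}) establishes the \emph{liveness} claim that such a $t'$ actually exists, i.e.\ that every robot of $u$ eventually departs. The hard case is precisely the one you set aside in your second paragraph: a Red robot $A$ that Looks while $u$ is temporarily multiply-colored (Red and Blue coexisting because some robots have already recolored and moved while others have not) is disabled and acquires no pending action. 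You must then argue that $A$ becomes enabled \emph{again} later, which requires checking the full guards of R2a/R2b --- not just the prefix $\emptyset^{\phi}[R!]$ --- after $u$ has returned to being singly-colored Red. The paper does exactly this: once the Blue robots have reached $v$, either $v$ was originally occupied and $A$ sees a White robot on $v$ (enabling R2b), or $v$ was empty and $A$ now sees the relocated robots on $v$ and the robots beyond it (enabling R2a); in both cases $A$ can fire R2 and leave. Without this step you have not shown that the border ever advances, which is the content the surrounding lemmas rely on.

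A secondary, smaller point: your enumeration verifies only that the view \emph{prefix} is compatible with R2a, R2b, R5. For the conditional reading that suffices, but for the eventuality you also need the tail of the guard (e.g.\ $(\neg(\emptyset^{\phi-1}))$ in R2a, or $(W)(?^{\phi-1})$ in R2b) to be satisfiable in the configuration at hand; this is exactly the case analysis on whether $v$ is originally occupied that the paper's proof performs, and it is the piece you would need to add to close the argument.
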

\begin{proof}
At time $t$, every robot on $u$ can execute only R2 (resp. R3) because $u$ is singly-colored.
On $u$, by R2 (resp. R3), some of them may change their light to Blue (resp. Red), some of them may remain Red (resp. Blue) with outdated views, and others do not look yet.
Blue (resp. Red) robots eventually move to the neighboring node $v$.
Red (resp. Blue) robots with outdated views eventually change their lights to Blue (resp. Red) and move to $v$.
Red (resp. Blue) robots $A$ which look Blue (resp. Red) robots on $u$ cannot execute any rule by the definition of Algorithm 1.
However, because Blue (resp. Red) robots on $u$ move to $v$ eventually, $u$ becomes singly-colored.
If $v$ is occupied in the initial configuration, $A$ can observe a White robot on $v$. 
Otherwise, $A$ can observe other robots 
beyond $v$ because robots initially located on $u$ have moved to $v$. 
Therefore, $A$ also can execute R2 (resp. R3), the border position moves to $v$ and the border is with Blue (resp. Red) robots. 
Thus, the lemma holds.
\end{proof}

\begin{lemma}\label{Init}
{\sf Init} $\xrightarrow{MC}$ {\sf Conf 1},  {\sf Init} $\xrightarrow{MC}$ {\sf Conf 3} or {\sf Init} $\xrightarrow{MC}$ {\sf Conf 4}.
\end{lemma}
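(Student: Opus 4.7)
The plan is to track each of the two borders independently, using Lemmas \ref{mul2sing} and \ref{Reduction-by-one}, and then to case-split on how many borders have produced an emptying by $t_1$. At $t_0$ every robot is White, so each border node hosts only White robots. By Lemma \ref{mul2sing} both borders eventually become singly-colored Red (via R1, with R4a absorbing any late-looking White robots). By Lemma \ref{Reduction-by-one}, once a border is singly-colored Red, its robots turn Blue via R2a or R2b, move onto the neighbor, and drop $D$ by $1$ at the instant the old border node empties. Lemma \ref{NB} guarantees that non-border robots cannot act, so no side effects elsewhere intervene.

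Let $t_1$ be the first instant at which $D$ has strictly decreased, marking the end of the mega-cycle. I split into two cases. First, if both borders empty by $t_1$, then $D$ has dropped by $2$; both new border positions host only Blue robots, and because $t_1$ is the very instant of the second emptying, no Red or Blue robot is outdated. The resulting shape will be identified with {\sf Conf 1}. Second, if exactly one border (say the left) has emptied by $t_1$, then $D$ has dropped by $1$. On the right side no emptying has occurred, so only R1 and R4a can have fired there: R2a and R2b would trigger a second emptying, and R5 requires two adjacent borders, which do not yet exist. The right border is therefore in one of the partial states ``all White'', ``mixed White and Red'', or ``singly-colored Red'', possibly with some Red robots already Looked and still carrying an outdated view. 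If no outdated Red robot exists on the right, the configuration is identified with {\sf Conf 3}; otherwise with {\sf Conf 4}.

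The main obstacle will be the asynchronous bookkeeping of the second case: I must argue that the enumerated intermediate right-side states exhaust precisely the pictures of {\sf Conf 3} and {\sf Conf 4} in Figure \ref{TD}, and that no other target configuration can be reached from {\sf Init} in a single mega-cycle. This amounts to checking that the characterizing features of the remaining {\sf Conf k}---most notably Blue robots on a border that has not yet moved---cannot appear after only R1 and R4a have fired on the non-moving side, and that the White-plus-Red mixes on the right at $t_1$ are exactly those compatible with the two branches drawn in Figure \ref{TD}.
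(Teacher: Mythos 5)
Your proposal follows essentially the same route as the paper's proof: Lemma~\ref{NB} to confine activity to the borders, Lemma~\ref{mul2sing} to turn each border Red singly-colored, Lemma~\ref{Reduction-by-one} to produce the Blue emptying step, and a case split on whether one or both borders have completed their move at the end of the first mega-cycle, yielding {\sf Conf 1} versus {\sf Conf 3}/{\sf Conf 4}. One small imprecision: your claim that only R1 and R4a can have fired on the non-emptied border is not quite right, since individual robots there may already have executed R2a/R2b (turning Blue and even moving) without the node having emptied as a whole --- but this is exactly the ``still working on their movements'' state the paper assigns to {\sf Conf 4}, so your final identification still lands correctly.
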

\begin{proof}
In the initial configuration, all robots have White lights.
By Lemma~\ref{NB}, non-border robots cannot execute any action.
Since the border nodes are White singly-colored, there exists a time $t$ such that either border node, say $u$, becomes 
Red singly-colored by Lemma~\ref{mul2sing}.
Since Lemma~\ref{Reduction-by-one} holds at $t$, the border position moves to the neighbor node $v$ and the new border node $v$ contains the all robots in $u$ and their colors are Blue. 
If $v$ is originally occupied in the initial configuration, the border node has Blue robots and White robots just after the first mega-cycle.

From the initial configuration, in the first mega-cycle, 
if both of two borders change their positions, the configuration becomes {\sf Conf 1}.
If one border changes its position but the other is still working on the movement, the configuration becomes {\sf Conf 3} or {\sf Conf 4} because in the other border node
some robots change their color to Red but there remain White robots ({\sf Conf 3}), or on the way to moving the border ({\sf Conf 4}).

Thus, the lemma holds.
\end{proof}

\begin{lemma}\label{Conf1}
{\sf Conf 1} $\xrightarrow{MC}$ {\sf Conf 2},  {\sf Conf 1} $\xrightarrow{MC}$ {\sf Conf 3} or {\sf Conf 1} $\xrightarrow{MC}$ {\sf Conf 7}.
\end{lemma}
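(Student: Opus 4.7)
The plan is to track the two borders independently during the mega-cycle starting from {\sf Conf 1} and then do a case analysis on which borders have completed their step at the moment the mega-cycle ends.

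First, by Lemma~\ref{NB} non-border robots take no action, so all activity is confined to the two borders. In {\sf Conf 1} both border nodes are singly-colored with Red or Blue and contain no robot with an outdated view, so each border is enabled by R2 or R3 depending on its color. Thus Lemma~\ref{Reduction-by-one} applies to each border independently: whenever the process at a border runs to completion, its robots flip color (Red$\leftrightarrow$Blue), move one step toward the interior, and after any White robots already sitting on the new node are recolored by R4a/R4b (invoking Lemma~\ref{mul2sing}), that border is again singly-colored and $D$ has decreased by exactly one on that side.

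Next, recall that a mega-cycle ends as soon as $D$ has decreased by at least one at \emph{either} border. I would enumerate the states that the asynchronous scheduler can force at that instant:
\begin{enumerate}
\item Both borders complete their step within the mega-cycle, leaving two clean singly-colored borders whose colors have flipped with respect to {\sf Conf 1}; this matches {\sf Conf 2}.
\item Exactly one border has completed its step while the other has only partially recolored (some Red/Blue robots on it have not yet moved, or outdated robots on the original border still need to execute); the lagging border has not yet moved and is mixed, matching {\sf Conf 3}.
\item Exactly one border has completed its step while the other has already stepped onto its neighbor but the new border node still hosts White robots awaiting R4, or some original-border Red/Blue robots with outdated views are still executing their move; this matches {\sf Conf 7}.
\end{enumerate}

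The main obstacle will be to argue that no other outcome is possible, and in particular that $D$ cannot decrease by more than two during a single mega-cycle. This follows because once a border has moved, its new border node is multiply-colored (at least until R4 finishes and any outdated robots have caught up), and while multiply-colored no rule of Algorithm~\ref{alg1} enables a second movement of that border; hence each border contributes at most one unit of decrease before the mega-cycle terminates. One also has to check the placement of the optional parenthesized White circles in Figure~\ref{TD} — these are precisely the White robots that were originally sitting on the node the border moved onto and that have not yet reacted — which is immediate from the definition of R4a/R4b and Lemma~\ref{mul2sing}.
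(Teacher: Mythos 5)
Your overall strategy --- tracking the two borders independently and doing a case analysis on which of them has completed its step when the mega-cycle ends --- is exactly the paper's. However, your starting point misdescribes {\sf Conf 1}, and this propagates into the case analysis. In {\sf Conf 1} the two border nodes are \emph{not} singly-colored: each contains Blue robots together with the White robots originally sitting on the node that border just stepped onto (and no Red or Blue robot with an outdated view). Consequently R3 is not immediately enabled; the mega-cycle must \emph{begin} with those White robots executing R4b, i.e.\ Lemma~\ref{mul2sing} is invoked \emph{before} Lemma~\ref{Reduction-by-one}, not after the move as you have it. The recoloring of the White robots on the \emph{new} node belongs to the next mega-cycle (the current one ends as soon as $D$ drops), so your description of a border's step omits a phase it needs and includes one it does not.

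Second, your assignment of the intermediate outcomes to {\sf Conf 3} versus {\sf Conf 7} is tangled. Per the paper, {\sf Conf 3} is the outcome when the lagging border has \emph{no} Red or Blue robot with an outdated view --- it has not begun its movement at all and still holds Blue and White robots --- whereas {\sf Conf 7} is the outcome when the lagging border is mid-movement, so some of its Red/Blue robots may be outdated. Your Case~2 mentions ``outdated robots on the original border still need to execute,'' which is the {\sf Conf 7} situation, while your Case~3 mentions a border that ``has already stepped onto its neighbor but still hosts White robots awaiting R4,'' which is simply a \emph{completed} border (that is the normal shape of a {\sf Conf 2} or {\sf Conf 3} border node, since those nodes do contain White robots). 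Because the lemma's entire content is which named configuration results, these classifications need to be corrected. Your supplementary remark on why $D$ cannot drop by more than two per mega-cycle is fine, but it is not needed for this lemma and is already asserted alongside the paper's definition of a mega-cycle.
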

\begin{proof}
In the configuration {\sf Conf 1}, both of two borders have Blue and White robots.

First, we consider the case that only border robots on the node $u_i$ execute until the time $t$, the end of the next mega-cycle, and the border robots on $u_j$ remain to have Blue and White robots. 
Then, after $u_i$ becomes Blue singly-colored by Lemma~\ref{mul2sing}, the border position moves to the neighbor node $v$ and the new border contains all the robots located in $u_i$ which becomes Red. 
If the new border node $v$ is originally occupied in the initial configuration, there are Red and White robots in $v$ just after $t$.
Thus, from {\sf Conf 1}, the configuration becomes {\sf Conf 3} if only robots in $u_i$ execute in this mega-cycle.

Next, we consider the case that both of two borders execute in $t$.
In both borders, the same movements we showed above occur. 
In $t$, if both borders complete their movements, the configuration becomes {\sf Conf 2}.
If one border complete their movements and the other is still working on their movements, then the configuration becomes {\sf Conf 7}.

Thus, the lemma holds.
\end{proof}

By the similar proof, we can derive the following lemmas.
\begin{lemma}\label{Conf2}
{\sf Conf 2} $\xrightarrow{MC}$ {\sf Conf 1},  {\sf Conf 2} $\xrightarrow{MC}$ {\sf Conf 3} or {\sf Conf 2} $\xrightarrow{MC}$ {\sf Conf 4}.
\end{lemma}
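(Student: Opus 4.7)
The plan is to mirror the proof of Lemma~7 (the Conf~1 case) with the roles of \emph{Red} and \emph{Blue} interchanged. Configuration {\sf Conf 2} is the analogue of {\sf Conf 1} after one mega-cycle: both border nodes $u_i, u_j$ host Red robots (together with possibly some White robots, the latter present only if the node was originally occupied in the initial configuration), and no Red or Blue robot has an outdated view. So one first invokes Lemma~\ref{mul2sing} on each border in turn to force it to become Red singly-colored (the White robots on a border containing Red are driven to Red by R4a, and no Red robot can act while the border is multiply-colored). Then Lemma~\ref{Reduction-by-one} applies to each border that proceeds: the Red robots execute R3 (either R3a or R3b depending on whether the neighbor is empty or White-occupied), change their light to Blue, and move to the neighboring node, contributing a reduction of at least one to $D$.

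The case analysis then matches the structure of Lemma~\ref{Conf1}. First, if only the robots of $u_i$ complete their cycles during the mega-cycle while $u_j$ still carries Red and White robots, then on the $u_i$ side the new border contains Blue robots (plus possibly White if the neighbor node was initially occupied), and on the $u_j$ side nothing has changed, yielding {\sf Conf 3}. Second, if both borders complete their movements within the mega-cycle, then both new borders become Blue singly-colored (with possibly White as above), which is exactly {\sf Conf 1}. Third, if one border completes and the other has started but has not yet finished---so some Red robots there either have outdated views or are mid-move toward the neighboring node---then the resulting configuration has the "stable" left border plus a "still working" right border with the color pattern characteristic of {\sf Conf 4} (rather than {\sf Conf 7}, which corresponds to the Conf~1 starting point).

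The routine parts are the applications of Lemmas~\ref{mul2sing} and \ref{Reduction-by-one}, together with the observation that no non-border robot is enabled (Lemma~\ref{NB}) and hence the only activity during the mega-cycle occurs at the two borders. The one point that needs a bit of care is the distinction between landing in {\sf Conf 4} versus {\sf Conf 7} in the partial-progress case. Here I would argue by tracking which rules the outdated robots on the unfinished border can still execute: starting from Red robots on a Red singly-colored border, their pending rule is R3 (changing to Blue and moving), so the outdated right side in the resulting snapshot carries the particular Red/Blue color pattern that defines {\sf Conf 4} in the transition diagram, not the Blue/Red pattern of {\sf Conf 7}. Once this correspondence is spelled out, the three sub-cases exhaust all asynchronous scheduler behaviors during the mega-cycle, completing the proof.
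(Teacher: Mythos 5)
Your proof is correct and follows exactly the route the paper intends (the paper itself gives no explicit proof here, saying only that the lemma follows ``by the similar proof'' to Lemma~\ref{Conf1}): swap Red and Blue, invoke Lemmas~\ref{NB}, \ref{mul2sing} and~\ref{Reduction-by-one}, and split on which borders complete their movements during the mega-cycle, with the {\sf Conf 4}-versus-{\sf Conf 7} distinction correctly resolved by noting that the unfinished border is transitioning \emph{from} Red. One small correction: a Red singly-colored border executes R2 (R2a/R2b), not R3 --- R3 is the rule for Blue borders --- although the behavior you attribute to it (change to Blue and move inward) is the right one.
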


\begin{lemma}\label{Conf3}
{\sf Conf 3} $\xrightarrow{MC}$ {\sf Conf 1},  {\sf Conf 3} $\xrightarrow{MC}$ {\sf Conf 2}, 
 {\sf Conf 3} $\xrightarrow{MC}$ {\sf Conf 3},
{\sf Conf 3} $\xrightarrow{MC}$ {\sf Conf 5}, or {\sf Conf 3} $\xrightarrow{MC}$ {\sf Conf 6}.
\end{lemma}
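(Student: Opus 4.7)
The plan is to mimic the case analyses used in Lemmas~\ref{Conf1} and \ref{Conf2}, but starting from {\sf Conf 3}, where (on the evidence of the proof of Lemma~\ref{Conf1}) one border node carries Blue and White robots while the other carries Red and White robots, and no Red or Blue robot is outdated. I would first invoke Lemma~\ref{NB} to dismiss any contribution from non-border robots during the mega-cycle, so that all color changes and moves take place at the two border nodes $u_i$ and $u_j$. Then, independently at each border, Lemma~\ref{mul2sing} guarantees that the border node eventually becomes singly-colored, after which Lemma~\ref{Reduction-by-one} guarantees that once every robot leaves, it moves one step inward with flipped color, reducing $D$ by at least one; if the new border position was occupied in the initial configuration, the incoming robots arrive at a node that already carries White robots, producing a two-colored border.

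The core of the proof is then a case split on how the mega-cycle ends, i.e., which of the two borders actually effects a reduction in $D$ by the end of the mega-cycle, and whether the border whose robots have started to move has completed the move cleanly or still contains outdated Red/Blue robots. Since a mega-cycle ends as soon as $D$ drops, I would classify the cases as: (i) only the Blue-carrying border $u_i$ reduces $D$, (ii) only the Red-carrying border $u_j$ reduces $D$, or (iii) both borders reduce $D$ within the same mega-cycle. In each case I further split on whether the other border is untouched, has become singly-colored but not moved, or has moved with some outdated robots remaining.

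Carrying the color bookkeeping through: in case (i), the Blue border flips to Red and steps inward, producing (Red,White) on the new border, while the Red border is either unchanged as (Red,White)—giving {\sf Conf 3} again—or already moved cleanly with the opposite alternation, giving {\sf Conf 2}; if the Red border is still in the middle of moving with outdated robots left behind, the result matches {\sf Conf 5} or {\sf Conf 6} depending on whether the originally Red border produced a new (Blue,White) border or a Blue singly-colored border. Case (ii) is symmetric, except that the Red border flips to Blue, which is why {\sf Conf 1} (both borders Blue+White) appears as an outcome. Case (iii), where both borders effect the reduction simultaneously, produces either {\sf Conf 1} or {\sf Conf 2} cleanly, or {\sf Conf 5}/{\sf Conf 6} if one side is left with outdated robots.

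The main obstacle I expect is the careful bookkeeping in the outdated-robot cases: specifically, verifying that when one border has Red and White robots that have not all yet moved (or have moved but not all updated their color via R4a/R4b), the resulting configuration really falls inside the shapes described by {\sf Conf 5} or {\sf Conf 6} in Figure~\ref{TD}, and not some other configuration on the diagram. This requires checking that no rule accidentally enables a non-border robot or enables R5 prematurely (which needs $D=1$), and that the footnote convention allowing ``White robots changing to border color'' to be absorbed into the border set actually covers every intermediate snapshot reachable in ASYNC. Once that is verified, the enumeration above exhausts the reachable successors of {\sf Conf 3} and the lemma follows.
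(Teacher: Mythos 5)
Your overall strategy is the paper's own: the paper gives no separate proof of this lemma and simply declares it ``similar'' to Lemma~\ref{Conf1}, i.e., dismiss non-border robots by Lemma~\ref{NB}, drive each border through Lemmas~\ref{mul2sing} and \ref{Reduction-by-one}, and split on which border(s) complete their move within the mega-cycle and whether the other border is untouched or mid-move with outdated robots. You also correctly identify {\sf Conf 3} as one border carrying Blue and White robots and the other carrying Red and White robots, with no outdated Red/Blue robots.

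However, your case-to-outcome bookkeeping is wrong in several places, even though the set of targets you end up listing happens to coincide with the lemma's. Concretely: in your case (i) the Blue-carrying border, upon completing its move, produces a new (Red,White) border; if the Red-carrying border is untouched, the result is \emph{two} (Red,White) borders, which is {\sf Conf 2}, not ``{\sf Conf 3} again'' as you claim. Conversely, if the Red-carrying border has also moved cleanly (which is really your case (iii), not a subcase of (i)), it produces a new (Blue,White) border, and the mixed pair (Red,White)/(Blue,White) is {\sf Conf 3}, not {\sf Conf 2}; accordingly your case (iii) yields {\sf Conf 3}, not ``{\sf Conf 1} or {\sf Conf 2}.'' Finally, the split between {\sf Conf 5} and {\sf Conf 6} is not governed by ``whether the originally Red border produced a new (Blue,White) border or a Blue singly-colored border'': in the diagram, {\sf Conf 5} is a clean (Red,White) border paired with a border still in progress from a Red border (this arises when the Blue side completes and the Red side is mid-move), while {\sf Conf 6} is a clean (Blue,White) border paired with a border in progress from a Blue border (the Red side completes, the Blue side is mid-move). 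So each in-progress subcase yields exactly one of {\sf Conf 5} or {\sf Conf 6}, determined by which side completed, not a choice between the two. The enumeration needs to be redone with these corrections; as written it misattributes outcomes and does not actually establish the claimed transitions, even though it accidentally lands on the right list. Your closing concern about verifying that the mid-move snapshots really match the shapes of {\sf Conf 5}/{\sf Conf 6} (including White robots that have not yet executed R4a/R4b) is legitimate and is exactly what the paper's footnote sweeps under the rug.
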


\begin{lemma}\label{Conf4}
{\sf Conf 4} $\xrightarrow{MC}$ {\sf Conf 1},  {\sf Conf 4} $\xrightarrow{MC}$ {\sf Conf 3}, 
 {\sf Conf 4} $\xrightarrow{MC}$ {\sf Conf 5},
or {\sf Conf 4} $\xrightarrow{MC}$ {\sf Conf 6}.
\end{lemma}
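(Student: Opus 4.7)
The proof follows the same template as Lemmas~\ref{Conf1}--\ref{Conf3}. In {\sf Conf 4}, the left border node, call it $u$, is singly-colored with some color $\gamma\in\{R,B\}$ and contains no outdated Red/Blue robots, whereas the right border node $v$ is still completing a movement and may host White robots together with Blue/Red robots whose views are outdated. My plan is to analyze the two borders essentially independently by invoking Lemmas~\ref{mul2sing} and~\ref{Reduction-by-one}, and then to classify the configuration reached at the instant $D$ first decreases, which by definition terminates the mega-cycle.

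On the left, Lemma~\ref{Reduction-by-one} applies immediately: all robots on $u$ eventually change their color to $\overline{\gamma}$ via R2 or R3 and move to the neighbor $u'$, so the left border advances by one. Depending on whether $u'$ was originally occupied by White robots, $u'$ becomes either singly-colored with $\overline{\gamma}$ or mixed with $\overline{\gamma}$ and White (the latter case then reducing to the singly-colored case through Lemma~\ref{mul2sing} applied with rule R4). On the right, Lemma~\ref{mul2sing} guarantees that $v$ becomes singly-colored with the color $\gamma'\in\{R,B\}$ carried by its outdated robots; once this occurs, Lemma~\ref{Reduction-by-one} may in turn fire on $v$, advancing the right border as well. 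However, the scheduler may or may not allow the right-side advance to complete before $D$ first decreases through the left-side move.

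The case analysis then branches on (i) whether the right border also advances within this mega-cycle, and (ii) whether the destination neighbor(s) were originally occupied. If only the left border advances, the right side is still caught in its own transition at the mega-cycle boundary, and the resulting configuration matches {\sf Conf 5} or {\sf Conf 6} depending on whether $u'$ ends up singly-colored or mixed with White; if the right border also advances, the right-side transition is completed in the same window and the configuration matches {\sf Conf 1} or {\sf Conf 3} according to the color pair carried by the two new borders. The main obstacle is the careful bookkeeping of the outdated right-side robots across the mega-cycle boundary: one must verify that every feasible interleaving of Look/Compute/Move phases yields, at time $t_{i+1}$, a snapshot that matches exactly one of the four claimed targets in Figure~\ref{TD}, and in particular that transitions to {\sf Conf 2}, {\sf Conf 4}, or {\sf Conf 7} are precluded, essentially because an outdated Red/Blue robot on the right border cannot disappear before the mega-cycle ends unless the right border itself has completed its previous movement.
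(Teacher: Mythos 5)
Your overall template---a case analysis on which border(s) complete their movement before $D$ first decreases, driven by Lemmas~\ref{mul2sing} and~\ref{Reduction-by-one}---is the same as the paper's, but the concrete content does not match the definitions of the configurations, and the lemma's entire content is the \emph{specific} list of reachable targets. You mischaracterize {\sf Conf 4}: its settled border is not ``singly-colored with some $\gamma\in\{R,B\}$''; it hosts Blue robots together with (optional) White robots, all without outdated views, while the other border is mid-movement \emph{from a Red border} (its robots are turning Red into Blue and moving). The colors are fixed data of {\sf Conf 4}, not a free parameter: if $\gamma$ is allowed to range over $\{R,B\}$ you can no longer exclude {\sf Conf 2}, {\sf Conf 4} and {\sf Conf 7} as targets, which is precisely what the lemma asserts. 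You acknowledge this exclusion as ``the main obstacle'' but never carry it out; with the correct color bookkeeping it falls out immediately, whereas as written your argument does not establish it.

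Your criterion for separating {\sf Conf 5} from {\sf Conf 6} (``singly-colored or mixed with White'') is also wrong: both allow White robots at the settled border, and they differ by color---{\sf Conf 5} is a Red(+White) settled border with the other border mid-movement from a Red border, {\sf Conf 6} is a Blue(+White) settled border with the other border mid-movement from a Blue border. Consequently your case-to-outcome map is off. The correct one (and the paper's) is: only the left (Blue) border completes its move $\Rightarrow$ {\sf Conf 5}; only the right border completes and the left has not yet looked $\Rightarrow$ {\sf Conf 1}; the right completes while the left is mid-movement $\Rightarrow$ {\sf Conf 6}; both complete $\Rightarrow$ {\sf Conf 3}. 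In your partition, {\sf Conf 6} is filed under ``only the left advances'' and {\sf Conf 1} under ``the right also advances,'' neither of which is the interleaving that actually produces those configurations in Figure~\ref{TD}.
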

\begin{proof}
In the configuration {\sf Conf 4}, border robots in $u_i$ have Blue and White robots without outdated views, and the other border robots in $u_j$ are still working on their movements from Red border, that is, there may be robots with outdated views.

In the case that only robots in $u_i$ move in the next mega-cycle $t$ and border robots in $u_j$ remain, by the proof of Lemma~\ref{Conf1}, the configuration becomes {\sf Conf 5}.
Consider the case that only robots in $u_j$ move to the neighboring node $v$ in $t$.
If border in $u_i$ are working on their movement while the border in $v$ becomes Blue border, the configuration becomes {\sf Conf 6}.
If border in $u_i$ does not look yet, then the configuration becomes {\sf Conf 1}.
In the case that both of two borders complete their movements at the same time, then the configuration becomes {\sf Conf 3}.

Thus, the lemma holds.
\end{proof}

By the similar proof, we can derive the following lemma.
\begin{lemma}\label{Conf7}
{\sf Conf 7} $\xrightarrow{MC}$ {\sf Conf 2},  {\sf Conf 7} $\xrightarrow{MC}$ {\sf Conf 3}, 
 {\sf Conf 7} $\xrightarrow{MC}$ {\sf Conf 5},
or {\sf Conf 7} $\xrightarrow{MC}$ {\sf Conf 6}.
\end{lemma}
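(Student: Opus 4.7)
The plan is to mirror the case analysis used in the proof of Lemma~\ref{Conf4}, with the two borders of Conf~7 playing the analogous roles and the colors Red/Blue appropriately permuted to reflect the one-step phase shift in the alternation. Recall that Conf~7 is symmetric in spirit to Conf~4: one border (call it $u_i$) is ``settled'' (no robot on it holds an outdated view) while the other border $u_j$ is still ``in transit,'' meaning some robots there may still carry outdated Blue (or Red) views from the previous mega-cycle. As in Lemma~\ref{Conf4}, the algorithmic behavior of each border in the next mega-cycle is governed entirely by Lemmas~\ref{mul2sing} and~\ref{Reduction-by-one}, so the combinatorics of which target configurations are reachable is the same.

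I would then split into three cases depending on which border advances during the next mega-cycle. Case~(i): only the settled border $u_i$ advances. By Lemma~\ref{mul2sing}, $u_i$ first becomes singly-colored; then by Lemma~\ref{Reduction-by-one} all robots on $u_i$ flip their color and move to the neighbor, so the new border is of the flipped color (possibly together with White robots inherited from the newly occupied node if it was originally occupied). Meanwhile $u_j$ is still in progress, producing Conf~5 (exactly as Conf~4 produces Conf~5 in Lemma~\ref{Conf4}). Case~(ii): only $u_j$ completes its step. By Lemma~\ref{Reduction-by-one}, $u_j$ advances to its neighbor and settles as a stable border of the flipped color and possibly White; depending on whether $u_i$ has in the meantime begun a new step or not, the resulting configuration is Conf~6 or Conf~2, just as in the analogous subcase of Lemma~\ref{Conf4}. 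Case~(iii): both borders make progress simultaneously, which yields Conf~3 if both finish at the same time, and Conf~6 otherwise.

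The main obstacle, exactly as in Lemma~\ref{Conf4}, is not the topology of the case split but the careful bookkeeping of asynchronous interleavings on the in-transit border $u_j$: one has to argue that outdated robots on $u_j$ can always eventually complete their rule without triggering a spurious rule elsewhere, and that rules R2--R5 never fire on a multi-colored node. This follows from the guards themselves, since R2 and R3 require a singly-colored border, R5 requires singly-colored neighbors, and R4a/R4b deterministically absorb any straggler White robot into the border color before it could be mistaken for something else. Consequently, the only configurations reachable at the end of the mega-cycle are exactly Conf~2, Conf~3, Conf~5, and Conf~6, which is what the lemma asserts.
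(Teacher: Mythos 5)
Your overall strategy is exactly the paper's: the paper gives no separate argument for this lemma and simply appeals to ``the similar proof'' of Lemma~\ref{Conf4}, with the roles of Red and Blue exchanged. Your case split (only the settled border advances / only the in-transit border completes / both) is the right one, and the union of your outcomes, $\{${\sf Conf 2}, {\sf Conf 3}, {\sf Conf 5}, {\sf Conf 6}$\}$, matches the lemma statement.

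However, there is a concrete error in your bookkeeping: you applied the Red$\leftrightarrow$Blue swap to {\sf Conf 1}/{\sf Conf 2} but not to {\sf Conf 5}/{\sf Conf 6}. In {\sf Conf 7} the settled border $u_i$ is Red (with White), and the in-transit border $u_j$ is moving away from a Blue singly-colored node (this is the mirror of {\sf Conf 4}, where the settled border is Blue and $u_j$ is in transit from Red). Hence in your Case~(i), when $u_i$ becomes singly-colored Red (Lemma~\ref{mul2sing}) and advances by R2, the new settled border is \emph{Blue}, while $u_j$ is still in transit from Blue; by the characterization in the proof of Lemma~\ref{Conf5} (settled Red, in-transit from Red) this is {\sf Conf 6}, not {\sf Conf 5}. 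Symmetrically, in your Case~(ii) the completed border at $v$ is Red, so if $u_i$ has meanwhile started R2 the result is {\sf Conf 5} (settled Red, in transit from Red), not {\sf Conf 6}; the {\sf Conf 2} subcase is correct. Your Case~(iii) ``otherwise'' branch should likewise allow both {\sf Conf 5} and {\sf Conf 6} depending on which border finishes first. The lemma you end up asserting is saved only because the pair $\{${\sf Conf 5}, {\sf Conf 6}$\}$ is invariant under the color swap, but as written the case-to-configuration assignments are wrong and would not survive a check against the transition diagram in Figure~\ref{TD}.
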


\begin{lemma}\label{Conf5}
{\sf Conf 5} $\xrightarrow{MC}$ {\sf Conf 3},  {\sf Conf 5} $\xrightarrow{MC}$ {\sf Conf 4}, 
or {\sf Conf 5} $\xrightarrow{MC}$ {\sf Conf 1}.
\end{lemma}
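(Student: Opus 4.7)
The plan is to mirror the case analysis used in the proof of Lemma~\ref{Conf4}, exploiting the structural similarity: in {\sf Conf 5} one border (say $u_i$) is ``clean'' (no outdated Red/Blue robots) while the other border (say $u_j$) is still working on a movement initiated in a previous mega-cycle, so some of its robots may still hold outdated views. By Lemma~\ref{NB}, no non-border robot is enabled, so the only actions that can occur during the next mega-cycle take place at $u_i$ or $u_j$.

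First I would handle the right border $u_j$: since it is in the middle of an R2/R3 movement with a singly-colored originating node, Lemma~\ref{Reduction-by-one} guarantees that its remaining outdated robots eventually change color and move to the neighboring node, so that the border position shifts by one and the new border becomes singly-colored with the appropriate color (possibly together with White robots already present at the destination). For $u_i$, Lemma~\ref{mul2sing} ensures that any mixture of White with Red/Blue resolves to a singly-colored border, after which Lemma~\ref{Reduction-by-one} applies whenever $u_i$ starts moving.

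The case split then follows the scheduler's choices during the mega-cycle. In case (a) the scheduler lets only $u_j$'s outdated movement complete without activating $u_i$ at all: the right border lands in a clean state identical to $u_i$'s, producing two ``clean'' borders with Blue and White robots, i.e.\ {\sf Conf 1}. In case (b) both $u_i$ and $u_j$ complete their movements within the mega-cycle, so the distance is reduced on both sides and both borders are left clean, which matches the pattern of {\sf Conf 3}. In case (c) $u_i$ finishes becoming singly-colored and initiates its own R2/R3 move (creating fresh outdated robots) while $u_j$'s original movement also completes in this mega-cycle; this swaps which border is ``clean'' versus ``working,'' producing {\sf Conf 4}. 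Other apparent subcases (for example only $u_i$ moving while $u_j$ is still in transit) either do not reduce $D$ and therefore do not end the mega-cycle, or fall into one of the three cases above once $u_j$'s movement completes, by Lemma~\ref{Reduction-by-one}.

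The main obstacle will be bookkeeping in the asynchronous interleavings: we must carefully argue that whenever $u_i$ begins a new action, the robots with outdated views at $u_j$ still satisfy the guards of R2/R3 (or R4a/R4b if they are White stragglers) and therefore indeed complete their transitions to a singly-colored border, so that no ``stuck'' intermediate configuration outside of \{{\sf Conf 1, 3, 4}\} can arise at the instant $D$ first decreases. This is exactly the argument used at the end of Lemma~\ref{Reduction-by-one}, where outdated Red/Blue robots can still observe a White robot (or a robot beyond the neighboring node once the leaders have moved), so the reasoning transfers verbatim.
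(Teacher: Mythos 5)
Your overall skeleton (mirror Lemma~\ref{Conf4}, resolve each border via Lemmas~\ref{mul2sing} and~\ref{Reduction-by-one}, then split on which border completes its displacement first) is the same as the paper's, and the set of target configurations you reach is the right one. However, your case-to-configuration assignments are wrong in two of the three cases, because you have silently carried over the color layout of {\sf Conf 4} instead of that of {\sf Conf 5}. In {\sf Conf 5} the clean border $u_i$ holds \emph{Red} and White robots (not Blue and White); the working border $u_j$ is in transit \emph{from} a Red border, so when its movement completes the new border is Blue (plus any White robots at the destination). Hence in your case (a), where only $u_j$ completes and $u_i$ is untouched, the result is one Red-and-White border and one Blue-and-White border, which is {\sf Conf 3}, not {\sf Conf 1}; and in your case (b), where both borders complete (each Red border turning Blue as it moves), both new borders are Blue-and-White, which is {\sf Conf 1}, not {\sf Conf 3}. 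Only your case (c), yielding {\sf Conf 4}, is assigned correctly. Since the transition diagram is used downstream (e.g., in Lemmas~\ref{RB3}, \ref{EO} and \ref{C89}), getting the correspondence right is not cosmetic.

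There is a second concrete error: you dismiss the subcase ``only $u_i$ completes its movement while $u_j$ is still in transit'' on the grounds that it does not reduce $D$ and hence does not end the mega-cycle. It does reduce $D$: once every robot of $u_i$ has changed to Blue and moved to the neighboring node, $u_i$ is empty, the border position on that side shifts inward by one, and the mega-cycle ends regardless of the state of $u_j$. This is in fact the first case treated in the paper's proof, and it produces a clean Blue-and-White border on $u_i$'s side together with the still-working-from-Red border at $u_j$, i.e., {\sf Conf 4}. The outcome happens to lie in the claimed set, but the case must be handled, not discarded.
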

\begin{proof}
In the configuration {\sf Conf 5}, border robots in $u_i$ have Red and White robots without outdated view, and the other border robots in $u_j$ are in the progress of their movements from the Red border.

In the case that only robots in $u_i$ move in the next mega-cycle $t$ and border robots in $u_j$ remain, the configuration becomes {\sf Conf 4}.
Consider the case that only roots in $u_j$ move to the neighboring node $v$ in $t$.
If border in $u_i$ does not look yet, then the configuration becomes {\sf Conf 3}.
If border in $u_i$ become in the process of their movement, the configuration becomes {\sf Conf 4}.
In the case that both of two borders complete their movements at the same time, then the configuration becomes {\sf Conf 1}.

Thus, the lemma holds.
\end{proof}

By the similar proof, we can derive the following lemma.
\begin{lemma}\label{Conf6}
{\sf Conf 6} $\xrightarrow{MC}$ {\sf Conf 2},  {\sf Conf 6} $\xrightarrow{MC}$ {\sf Conf 3}, 
or {\sf Conf 6} $\xrightarrow{MC}$ {\sf Conf 7}.
\end{lemma}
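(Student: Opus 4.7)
My plan is to follow the template of Lemma~\ref{Conf5} with the colors Red and Blue exchanged, since by the naming convention in the transition diagram {\sf Conf 6} is the Blue counterpart of {\sf Conf 5}: one border (say $u_i$) carries Blue and White robots with no outdated view, while the other border ($u_j$) is still executing a movement whose origin was a Blue singly-colored border. Under this interpretation, the three target configurations {\sf Conf 2}, {\sf Conf 3}, {\sf Conf 7} play the role that {\sf Conf 1}, {\sf Conf 3}, {\sf Conf 4} played in Lemma~\ref{Conf5}, after swapping Red and Blue, which is consistent with the diagram.

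Concretely, I would begin by fixing notation for the two borders and recalling, via Lemmas~\ref{mul2sing} and~\ref{Reduction-by-one}, the lifecycle of a mixed-color border: a node hosting White plus one chromatic color eventually becomes singly-colored in that chromatic color, and then by R2 or R3 its robots shift to the neighbor with their color swapped. I would then enumerate three cases according to what happens during the next mega-cycle: (i) only the Blue+White side at $u_i$ advances, while $u_j$ stays in progress; (ii) only the in-progress movement at $u_j$ completes; (iii) both borders advance within the same mega-cycle. Case (i) gives {\sf Conf 7} because the new border at $v_i$ becomes Red (R3 swaps Blue to Red, gathering up any White robots originally at $v_i$) while $u_j$ remains an in-progress Blue-origin border. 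Case (iii) gives {\sf Conf 2} because both newly formed borders become Red+White simultaneously. Case (ii) yields {\sf Conf 3}: $u_j$'s new border at $v_j$ settles to Red+White while $u_i$ has not yet initiated its own shift.

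The step I expect to require the most care is case (ii), where I must distinguish whether $u_i$ has simply not looked yet from the situation in which $u_i$ has already taken a snapshot but not yet moved, as well as verify that any Blue robots at $u_j$ with outdated views will still fire R3 correctly once $u_j$ becomes singly-colored (exactly as argued in the proof of Lemma~\ref{Reduction-by-one}). Here I would appeal to the convention, stated in the paragraph introducing the diagram, that residual White robots that have not yet executed R4a/R4b are absorbed into the corresponding target configuration, so that no additional branches need to be introduced. Once that bookkeeping is settled, each of the three cases exhibits a strict reduction of $D$ by at least one, confirming that the mega-cycle terminates in one of {\sf Conf 2}, {\sf Conf 3}, or {\sf Conf 7}, as claimed.
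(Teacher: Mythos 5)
Your proposal is correct and matches the paper's intended argument: the paper gives no explicit proof for this lemma, deferring to ``the similar proof'' of Lemma~\ref{Conf5}, and your Red/Blue-swapped case analysis (only $u_i$ advances $\Rightarrow$ {\sf Conf 7}; only $u_j$ completes $\Rightarrow$ {\sf Conf 3}; both complete $\Rightarrow$ {\sf Conf 2}) is exactly that transposition. The one detail to make explicit is that in your case (ii) the sub-case where $u_i$ has already begun its own movement yields {\sf Conf 7} rather than {\sf Conf 3} (mirroring the {\sf Conf 4} branch of Lemma~\ref{Conf5}), which is still within the claimed set of target configurations.
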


\begin{lemma}[Lemma 4 in the main part]\label{RB3}
From the initial configuration, $D$ decreases monotonically and eventually becomes 2.
\end{lemma}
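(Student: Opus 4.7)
The plan is to combine the one-step transition lemmas (Lemmas~\ref{Init}--\ref{Conf6}) summarized in Figure~\ref{TD} with a fairness-based progress argument, showing that an infinite sequence of mega-cycles takes place and that $D$ strictly drops in each of them.

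First, I would establish a closure property: by induction on the mega-cycle index, the configuration $C(t_i)$ at every mega-cycle boundary lies in the set $\{\textsf{Init}, \textsf{Conf 1}, \ldots, \textsf{Conf 7}\}$. The base case is trivial, and the inductive step is exactly the content of Lemmas~\ref{Init}--\ref{Conf6}. This closure already yields monotonicity of $D$: by definition of a mega-cycle $D$ drops by at least one at every boundary, and between boundaries $D$ cannot grow either, since R1, R4a, R4b do not move, while R2a, R2b, R3a, R3b, R5 all move border robots strictly inward.

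Second, I would show that every mega-cycle completes in finite time, so that $t_0 < t_1 < t_2 < \cdots$ is well defined and unbounded. Lemma~\ref{NB} eliminates all non-border actions from the analysis, so progress depends only on border nodes. In each of the seven configurations, every border is either already singly-colored or is mixed (White together with Red or Blue), in which case Lemma~\ref{mul2sing} guarantees it becomes singly-colored after finitely many scheduler activations. Once a border is singly-colored with Red or Blue, Lemma~\ref{Reduction-by-one} applies and the entire group migrates one step inward, reducing $D$. Fairness of the asynchronous scheduler ensures that the robots enabled by R2 or R3 are eventually activated, closing the mega-cycle. Combined with monotonicity, since $D$ is a non-negative integer dropping by at least one at each boundary, it reaches the value $2$ after at most $M_{init}-3$ mega-cycles. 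The argument naturally stops at $D=2$ because the transition diagram is stated only while $D>1$, and the terminal reduction involves rule~R5, analyzed separately.

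The main obstacle I expect is the bookkeeping for outdated robots, which is precisely why the transition diagram needs seven configurations rather than two or three. At the start of a mega-cycle some robots may still carry a snapshot taken before the latest border move, still bearing an old Red or Blue color. The delicate point is to argue that these lagging robots cannot stall a mega-cycle indefinitely: the guards of R2a/R3a and R5 are tailored so that, as already shown inside the proof of Lemma~\ref{Reduction-by-one}, an outdated border robot that finally executes either matches its up-to-date peers on the same target node or still sees the White robots or occupied nodes beyond the vacated position needed to justify another inward step. Making this explicit mega-cycle by mega-cycle, across the seven configurations of Figure~\ref{TD}, is the one piece of the argument that requires genuine care rather than direct appeal to previous lemmas.
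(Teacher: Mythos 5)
Your proposal is correct and follows essentially the same route as the paper, whose entire proof is the one-line observation that Lemmas~\ref{Init}--\ref{Conf6} guarantee $D$ drops by at least one per mega-cycle. Your additional bookkeeping (the closure induction over $\{\textsf{Init},\textsf{Conf 1},\ldots,\textsf{Conf 7}\}$, the fairness-based argument that each mega-cycle terminates via Lemmas~\ref{mul2sing} and~\ref{Reduction-by-one}, and the check that no rule moves a border outward) makes explicit steps the paper leaves implicit, but does not change the underlying argument.
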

\begin{proof}
By Lemmas~\ref{Init}-\ref{Conf6}, in each mega-cycle, $D$ decreases by at least one.
Thus, the lemma holds.
\end{proof}

\begin{lemma}[Lemma 5 in the main part]\label{EO}
Let $h$ be the distance from the original border node to a node $u_h$ in $\mathcal{G}^\prime$.
If $h$ is odd (resp. even), a Blue (resp. Red) border robot comes into $u_h$.
\end{lemma}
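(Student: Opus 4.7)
The plan is to prove this by induction on $h$, tracking the color of the successive border robots as they propagate inward from the original border. The two workhorses are Lemma~\ref{mul2sing} (a border node hosting ``colored + White'' robots eventually becomes singly-colored in that color, and a White-only border eventually becomes Red singly-colored) and Lemma~\ref{Reduction-by-one} (a singly-colored border of color $\gamma\in\{\text{Red},\text{Blue}\}$ eventually shifts by one node while every robot on it flips to the opposite color $\bar\gamma$). Together, they give exactly one ``arrival + handover'' per value of $h$, alternating the color at every step.

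For the base case $h=0$, the initial configuration has all robots White, so the original border node $u_0$ is White singly-colored. By the second clause of Lemma~\ref{mul2sing}, $u_0$ eventually becomes Red singly-colored via rules R1 and (if some robots are slow to wake up) R4a. Since $0$ is even and Red is the prescribed color, the claim holds at $h=0$.

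For the inductive step, assume a border robot of color $\gamma$ matching the parity of $h$ has arrived at $u_h$ (Red if $h$ is even, Blue if $h$ is odd). The node $u_h$ may also contain original White robots if $u_h$ was occupied in the initial configuration, so the border might momentarily be multi-colored. Applying Lemma~\ref{mul2sing} to this configuration shows that $u_h$ becomes $\gamma$-singly-colored. Provided $h+1 \le D$, Lemma~\ref{Reduction-by-one} then takes over: every robot at $u_h$ switches to $\bar\gamma$ and moves to $u_{h+1}$, which becomes the new border. Since $\bar{\text{Red}} = \text{Blue}$ and the parity of $h+1$ is opposite to that of $h$, the color arriving at $u_{h+1}$ matches its parity, closing the induction. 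Lemma~\ref{RB3} guarantees that $D$ shrinks monotonically to $2$, so the induction indeed reaches every $u_h$ in $\mathcal{G}^\prime$ visited before the two borders meet.

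The delicate point I expect is bookkeeping the outdated robots and the leftover interior White robots that may already inhabit $u_h$ when the border first lands there. Applying the lemmas in the order \emph{first} Lemma~\ref{mul2sing}, \emph{then} Lemma~\ref{Reduction-by-one} resolves this cleanly: rules R4a/R4b absorb any stray White robots into the incoming border color \emph{before} any of the shift rules R2a/R2b or R3a/R3b can fire, so the border never loses its well-defined color as it slides from $u_h$ to $u_{h+1}$. Once this is observed, the alternation Red $\leftrightarrow$ Blue along the path from the original border is automatic.
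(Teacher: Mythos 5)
Your proof is correct and follows essentially the same route as the paper: a base case establishing the first border color, followed by the Red/Blue alternation at every hop. The paper states the alternation by citing its transition-diagram lemmas (Lemmas~\ref{Conf1}--\ref{Conf6}) and starts the induction at $h=1$ via Lemma~\ref{Init}, whereas you invoke Lemmas~\ref{mul2sing} and~\ref{Reduction-by-one} directly and start at $h=0$; since those two lemmas are exactly what underlie the transition-diagram lemmas, this is the same argument in slightly more primitive terms.
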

\begin{proof}
By the proof of Lemma~\ref{Init}, if $h$ is 1, Blue robots come into $u_h$.
By Lemmas~\ref{Conf1}-\ref{Conf6}, they change their color Red and Blue alternately every hop.
Thus, if $h$ is even (resp. odd), the border which comes into $u_h$ has Red (resp. Blue).
Thus, the lemma holds.
\end{proof}

Let {\sf Conf BW-MR} be the configuration with $D=1$ such that there are Blue robots and White robots without outdated views in a border node and there are Red robots and Blue robots with outdated views and Red robots without outdated views in the other border node, where Blue robots with outdated views will move to the other border node and Red robots with outdated views will change their color to Blue and move to the other border node (Fig.~4(a)).

Let {\sf Conf RW-MB} be the configuration with $D=1$ such that there are Red robots and White robots without outdated views in a border node and there are Red robots and Blue robots with outdated views and Blue robots without outdated views in the other border node, where Red robots with outdated views will move to the other border node and Blue robots with outdated views will change their color to Red and move to the other border node (Fig.~4(b)).

\begin{figure}[t]\centering
\begin{minipage}[t]{0.45\linewidth}
    \centering
    \includegraphics[keepaspectratio, scale=0.15]{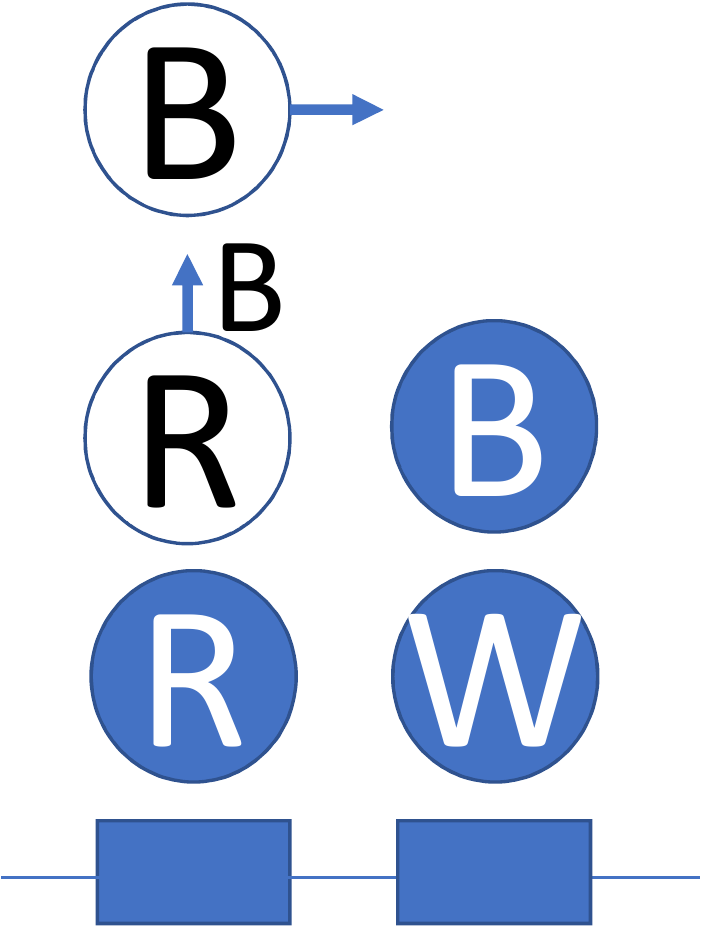}
    \subcaption{{\sf Conf BW-MR}}\label{fig:BWMR}
\end{minipage}
\begin{minipage}[t]{0.45\linewidth}
    \centering
    \includegraphics[keepaspectratio, scale=0.15]{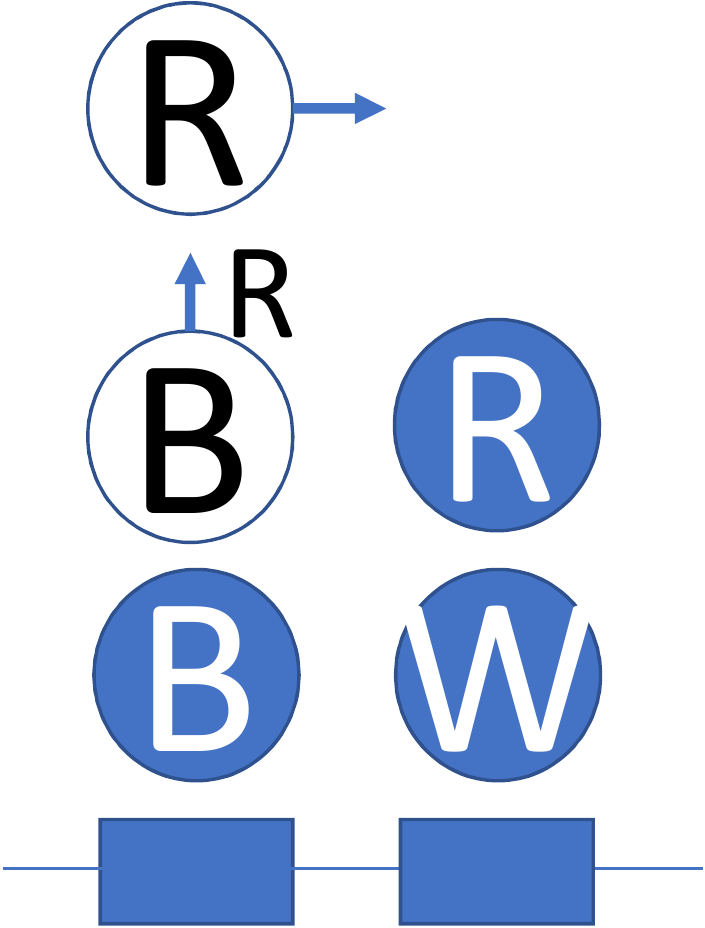}
    \subcaption{{\sf Conf RW-MB}}\label{fig:RWMB}
\end{minipage}
\caption{The configurations {\sf Conf BW-MR} and {\sf Conf RW-MB} with $D=1$} 
\end{figure}


\begin{lemma}[Lemma 6 in the main part]\label{C89}
After $D$ becomes 2, the configuration becomes {\sf Conf BW-MR} or {\sf Conf RW-MB}.
\end{lemma}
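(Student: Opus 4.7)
The plan is to combine the parity invariant of Lemma~\ref{EO} with the asynchronous bookkeeping of Lemmas~\ref{mul2sing}--\ref{Reduction-by-one}. Let $p_L$ and $p_R$ denote the distances the two borders have traveled inward from their original positions, so that $D = M_{init} - 1 - p_L - p_R$. The condition $D = 2$ becomes $p_L + p_R = M_{init} - 3$, and since $M_{init}$ is odd this sum is even; hence $p_L$ and $p_R$ share parity. By Lemma~\ref{EO}, both border nodes therefore settle to the same base color when $D = 2$ is first reached: both Red (if $p_L, p_R$ are both even) or both Blue (if both odd).

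It suffices to handle the Red--Red case, as the Blue--Blue case is symmetric and yields {\sf Conf RW-MB} in place of {\sf Conf BW-MR}. Since both borders are singly-colored Red, the only rules their robots can enable are R2a and R2b, each of which turns the robot Blue and sends it to the neighboring node. By the definition of a mega-cycle combined with Lemma~\ref{Reduction-by-one}, the mega-cycle that brings $D$ below $2$ must have one of the two borders complete its move; say the left border has emptied and all its robots have advanced to the neighboring node. Any White robots previously located at that neighboring node remain there (no rule would have displaced them until they themselves become border robots), producing the multiply-colored Blue$+$White border of {\sf Conf BW-MR}. Meanwhile, the other (right) border is still executing R2 asynchronously, so by ASYNC semantics it may simultaneously contain fresh Red robots that have not yet looked this mega-cycle, Red robots whose outdated view will shortly be followed by a flip to Blue and a move, and Blue robots with outdated views still waiting to move. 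These three populations match the description of the mixed border in {\sf Conf BW-MR} exactly.

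The main obstacle is to rule out that any rule other than R2 has fired at the in-progress border during this mega-cycle, so that exactly the three robot populations stated in {\sf Conf BW-MR} are present and no extraneous robots appear. This entails a short case check: R1 cannot fire because the border is not singly-White; R4 can only convert a freshly arriving White robot to Red (absorbed into the ``fresh Red'' class); and R5 requires that both borders be simultaneously singly-colored with $D=1$, which only holds at the very end of the mega-cycle, not during it. A secondary subtlety is the asynchronous scenario in which both borders complete their moves in the same mega-cycle and $D$ drops directly from $2$ to $0$; this case falls outside the scope of the lemma, as the two streams of Blue robots converge on the common intermediate node and achieve gathering via R4b and R0, bypassing {\sf Conf BW-MR}/{\sf Conf RW-MB} altogether.
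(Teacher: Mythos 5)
Your proof is correct and follows essentially the same route as the paper's: invoke the parity invariant (Lemma~\ref{EO}) together with $M_{init}$ odd to conclude both borders carry the same color when $D=2$, then analyse the asynchronous interleaving of R2 (resp.\ R3) to show that the border finishing first forms the Blue$+$White (resp.\ Red$+$White) side while the lagging border exhibits exactly the mixed populations of {\sf Conf BW-MR} (resp.\ {\sf Conf RW-MB}). The paper organizes the same argument through the named configurations of the transition diagram (passing explicitly through {\sf Conf 3} with $D=1$ when one border still holds White robots), and it likewise treats the simultaneous-completion case by observing that gathering is then reached directly, so your handling of that corner case matches the paper's.
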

\begin{proof}
By Lemma~\ref{RB3}, $D$ eventually becomes two.
In such configuration,
let $A$ and $B$ be two sets of border robots in $u_i$ and $u_k$ respectively, where $u_j$ is neighboring to $u_i$ and $u_k$.
Let $h(u_i)$ (resp. $h(u_k)$) be the distance from the original border node occupied by a part of $A$ (resp. $B$) to $u_i$ (resp. $u_k$) in $\mathcal{G}^\prime$.
Because $M_{init}$ is odd in the initial configuration, if $h(u_i)$ is even (resp. odd), $h(u_k)$ is also even (resp. odd).
By Lemma~\ref{EO}, when $D=2$, the configuration is {\sf Conf 1}, {\sf Conf 2}, {\sf Conf 5} or {\sf Conf 6} in Figure~\ref{TD}.
Without loss of generality, we call left (resp. right) side border node in this figure $u_i$ (resp. $u_k$).

In {\sf Conf 1} (resp. {\sf Conf 2}) such that $D=2$ holds, White robots can execute R4b (resp. R4a).
After at least one of borders becomes a singly-colored node with Blue (resp. Red) robots, then robots in the node can execute R3 (resp. R2) and the configuration becomes {\sf Conf 6} (resp. {\sf Conf 5}) such that $D=2$ holds.

In {\sf Conf 5} such that $D=2$ holds, robots in $u_k$ are executing R2.
Robots in $u_i$ can execute R2 too after every White robot changes their color Red by R4a.
If every robot in $u_k$ finishes executing R2 before robots in $u_i$ start executing R2 (they do no look yet), then the configuration becomes {\sf Conf 3} where $u_i$ and $u_j$ are borders and $D=1$.
Otherwise, if every robot in $u_k$ (resp. $u_i$) finishes executing R2 earlier, then the configuration becomes {\sf Conf BW-MR}.

In {\sf Conf 6} such that $D=2$ holds, robots in $u_k$ are executing R3.
Robots in $u_i$ can execute R3 too after every White robot changes their color Blue by R4b.
If every robot in $u_k$ finishes executing R3 before robots in $u_i$ start executing R3, then the configuration becomes {\sf Conf 3} where $u_i$ and $u_j$ are borders and $D=1$.
Otherwise, if every robot in $u_k$ (resp. $u_i$) finishes executing R3 earlier, then the configuration becomes {\sf Conf RW-MB}.

From {\sf Conf 3} such that $D=1$, only White robots execute R4 until at least one border becomes a singly-colored node.
When both of two border nodes become singly colored, border Blue robots cannot execute any rule by the definition of the algorithm, and border Red robots can execute only R5. Therefore, the gathering is achieved.
When a border node becomes singly-colored, the configuration becomes {\sf Conf BW-MR} or {\sf Conf RW-MB}.

Thus, the lemma holds.
\end{proof}

To show that the gathering is achieved, by Lemmas~\ref{RB3} and \ref{C89}, we consider the gathering only from {\sf Conf BW-MR} and {\sf Conf RW-MB} respectively.

\begin{lemma}[Lemma 7 in the main part]\label{Conf8}
From {\sf Conf BW-MR}, the gathering is achieved.
\end{lemma}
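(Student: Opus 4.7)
The plan is to trace the evolution of \textsf{Conf BW-MR} until every robot is Blue on the left border node $u_i$. Recall that $u_i$ holds Blue and White robots with no outdated views, while the right border $u_{i+1}$ (with $D=1$) holds non-outdated Red robots together with outdated Blue and outdated Red robots whose pending moves all head to $u_i$.

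First I would examine which rules are locally enabled. At $u_i$, the node is multiply-colored (Blue and White), so no $[B!]$ predicate matches; the only enabled rule is R4b, by which each White robot turns Blue. At $u_{i+1}$, as long as outdated Blue robots are present, the node fails to be singly-colored Red, so the non-outdated Red robots cannot fire R2a, R2b or R5. By fairness of the scheduler, every pending Compute/Move phase of an outdated robot eventually completes, depositing its owner on $u_i$ as Blue exactly as the specification of \textsf{Conf BW-MR} prescribes.

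Once all outdated phases have resolved, $u_i$ holds only Blue and White robots and $u_{i+1}$ holds only non-outdated Red robots. For $\phi\ge 2$, while Whites remain on $u_i$ the Red robots on $u_{i+1}$ stay blocked: R2a fails because $u_i$ is not $(\emptyset,B!)$, R2b fails because $(?^{\phi-1})$ would require occupation beyond the empty exterior of $u_i$, and R5 requires a Blue singly-colored neighbor. Meanwhile the Whites on $u_i$ keep applying R4b, so eventually $u_i$ becomes singly-colored Blue. In the $\phi=1$ case, the modified R2b fires as soon as the neighbor contains a White robot, so the Red robots on $u_{i+1}$ migrate directly to $u_i$ as Blue; the end state is the same.

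With $u_i$ singly-colored Blue and $u_{i+1}$ singly-colored Red (for $\phi\ge 2$), R5's view $\emptyset^\phi[R!](B!)(\emptyset^{\phi-1})$ matches at every Red on $u_{i+1}$, and because R5 remains enabled as long as any Red is left there, fairness drives all Red robots to $u_i$ as Blue. Once all robots sit on $u_i$, R0 disables any further action. The delicate point I expect to have to treat carefully is ruling out any spurious move of a Blue robot off $u_i$ during the transient: R3a fails because the outside of $u_{i+1}$ is empty, so $\neg(\emptyset^{\phi-1})$ is false; R3b needs a White neighbor, which $u_{i+1}$ does not have; and R0 cannot itself displace a robot. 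Once these checks are in place, gathering at $u_i$ is guaranteed.
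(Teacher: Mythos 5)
Your proof is correct and follows essentially the same route as the paper's: convert the White robots at the Blue border via R4b, let the outdated robots at the Red border drain across, and finish with R5 once both nodes are singly-colored. The only divergence is that the paper asserts the remaining Red robots can fire R2b while White robots are still visible on the neighbor, whereas you argue (correctly for $\phi\ge 2$, given the $?^{\phi-1}$ clause of R2b's guard) that they stay blocked until R5 becomes enabled --- but in either reading all robots end up gathered on the Blue node, so the conclusion is unaffected.
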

\begin{proof}
Let $u_i$ be the node occupied by Red robots and Blue robots with outdated views and Red robots without outdated views.
In $u_j$, after White robots execute R4b, the node becomes singly-colored with Blue robots.
In $u_i$, Blue robots with outdated views eventually move to $u_j$, after that, $u_i$ becomes singly-colored and robots can execute R2b during they can look White robots in $u_j$.
Thus, after every White robot changes their color, $u_j$ is a singly-colored node with Blue robots and there are Red robots and Blue robots with outdated views and Red robots without outdated views in $u_i$.
At that time, every robot in $u_j$ cannot execute any rule by the definition of the algorithm.
If there is no Red robot without an outdated view in $u_i$, then every robot in $u_i$ eventually moves to $u_j$ and the gathering is achieved.
If there are Red robots without outdated views in $u_i$, then every robot with an outdated view in $u_i$ eventually moves to $u_j$ and $u_j$ becomes a singly-colored node with Red robots.
After that, Red robots in $u_j$ can execute R5, and the gathering is achieved. Thus, the lemma holds.
\end{proof}

\begin{lemma}[Lemma 8 in the main part]\label{Conf9}
From {\sf Conf RW-MB}, the gathering is achieved.
\end{lemma}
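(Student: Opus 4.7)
The plan is to mirror the argument of Lemma~\ref{Conf8} with the roles of Red and Blue exchanged, while accounting for the fact that the endgame rule R5 is not color-symmetric---it is triggered on a Red border whose singly-colored neighbor is Blue, so the direction of the final migration here is opposite to that in {\sf Conf BW-MR}, and the gathering point will be $u_i$ rather than $u_j$.

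Let $u_j$ denote the border containing Red and White robots with no outdated views, and $u_i$ the other border containing Red robots with outdated views, Blue robots with outdated views, and Blue robots without outdated views. By the definition of {\sf Conf RW-MB}, the Red outdated robots at $u_i$ have already executed Compute (they were originally Blue and committed to R3) and only need to Move, remaining Red; the Blue outdated robots at $u_i$ still need to Compute (changing to Red under R3) and then Move. I would first argue that every White robot at $u_j$ is enabled for R4a because a Red robot is co-located, so each eventually changes its color to Red, and $u_j$ becomes singly-colored Red. In parallel, every outdated robot at $u_i$ completes its pending Compute/Move and arrives at $u_j$ as Red, while the Blue non-outdated robots at $u_i$ cannot be enabled as long as $u_i$ is multiply-colored, so they stay put. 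Consequently, $u_i$ is either empty or singly-colored with Blue robots at the end of this stage.

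I would then split into two cases. If no Blue robot without an outdated view is present at $u_i$, then $u_i$ becomes empty once all outdated robots migrate, so every robot sits on the singly-colored Red node $u_j$ and gathering is achieved. Otherwise, $u_i$ eventually becomes singly-colored with Blue robots while $u_j$ is singly-colored with Red robots. At this point the Blue robots at $u_i$ are disabled because R3a needs an occupied node beyond the singly-Red neighbor (which fails, since $u_j$ is a border with empty span $\phi-1$ behind it) and R3b needs a White at the neighbor (which also fails), whereas the Red robots at $u_j$ satisfy the guard of R5 (own node singly Red, neighbor $u_i$ singly Blue, empty beyond $u_i$). Every Red robot at $u_j$ therefore executes R5, changes its color to Blue, and moves to $u_i$, leaving $u_j$ empty and gathering all robots at $u_i$ with the color Blue.

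The main obstacle I anticipate is controlling the asynchronous interleavings during the transitional phase, where $u_j$ may still contain a White robot and $u_i$ may still contain outdated robots mixed with the Blue non-outdated ones. I must verify that any robot which looks during such a mixed state cannot fire a rule that breaks the intended convergence---in particular, that no rule shifts a border outward or produces a stray Blue robot at $u_j$ before R5 becomes applicable from the clean state. I also need to handle stragglers during the R5 phase: if one Red robot at $u_j$ commits to R5 and briefly turns Blue before moving, the remaining Red robots at $u_j$ see a multiply-colored border and simply wait; once that straggler completes its Move, $u_j$ is again singly-colored Red, the guard of R5 re-enables, and by the fairness of the scheduler every Red robot at $u_j$ eventually migrates to $u_i$, completing gathering.
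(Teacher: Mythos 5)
Your overall strategy coincides with the paper's: mirror Lemma~\ref{Conf8} with Red and Blue exchanged, let the White robots at $u_j$ turn Red via R4a while the outdated robots at $u_i$ flush toward $u_j$, and then split on whether $u_i$ retains Blue robots with non-outdated views. Your rule-level checks are sound (R3a disabled because the $\phi-1$ nodes beyond the neighbor are empty when $D=1$, R3b disabled once $u_j$ has no White robot, R5 enabled only at the Red border facing a singly-Blue node).

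The gap is in your first case. You conclude that if $u_i$ holds no Blue robot without an outdated view, then ``$u_i$ becomes empty once all outdated robots migrate, so every robot sits on the singly-colored Red node $u_j$.'' This is exactly the interleaving you flag as ``the main obstacle'' but never verify, and the verification fails: once the Red outdated robots have left $u_i$ and all White robots at $u_j$ have turned Red, $u_i$ transiently contains only Blue robots with outdated views, so it \emph{appears} singly-colored Blue to an observer. A Red robot at $u_j$ that Looks at that instant satisfies the guard of R5, turns Blue, and moves to $u_i$ --- so the robots do not all end up at $u_j$, and your Case-A conclusion is false as stated. The execution is not lost: those migrants become Blue robots without outdated views at $u_i$, which is precisely the situation of your second case, and the paper closes the argument by explicitly reducing this race to the other branch (``If Red robots in $u_j$ look Blue robots with outdated views in $u_i$, they change their color to Blue and move to $u_i$ \ldots we can apply the above discussion''). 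You need the same reduction. A smaller omission of the same flavor sits in your second case: the Blue robots with outdated views at $u_i$ still complete their pending R3b, arrive at $u_j$ as Red, and must then themselves fire R5 to return to $u_i$; your ``straggler'' paragraph covers Red robots already at $u_j$ but not these late arrivals, which the paper's proof treats explicitly.
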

\begin{proof}
Let $u_i$ be the node occupied by Red robots and Blue robots with outdated views and Blue robots without outdated views.
In $u_j$, after White robots execute R4a, the node becomes singly-colored with Red robots.
In $u_i$, Red robots with outdated views eventually move to $u_j$, after that, $u_i$ becomes singly-colored and robots can execute R3b during they can look White robots in $u_j$.
Thus, after every White robot changes their color, $u_j$ is a singly-colored node with Red robots and there are Red robots and Blue robots with outdated views and Blue robots without outdated views in $u_i$.
At that time, every robot in $u_j$ cannot execute any rule by the definition of the algorithm.

Consider the case that there are Blue robots without outdated views in $u_i$.
Then, Blue robots without outdated views in $u_i$ and Red robots in $u_j$ cannot execute any rule by the definition of the algorithm.
Every Red robot with an outdated view in $u_i$ eventually moves to $u_j$, $u_i$ becomes a singly-colored node with Blue robots. 
In $u_i$, some Blue robots are with outdated views and other Blue robots are without outdated views, but Blue robots without outdated views cannot execute any rules by the definition of the algorithm. 
In $u_j$, Red robots can execute R5.
After that, every Blue robot with an outdated view in $u_i$ eventually changes their color to Red and moves to $u_j$ and they can execute R5. Then, the gathering is achieved.
    
Consider the case that there is no Blue robot without an outdated view in $u_j$.
Then, Red robots in $u_j$ cannot execute any rule by the definition of the algorithm.
Every Red robot with an outdated view in $u_i$ eventually moves to $u_j$, and $u_i$ becomes a singly-colored node with Blue robots.
After that, Red robots in $u_j$ can execute R5 if they look at Blue robots with outdated views in $u_i$.
\begin{itemize}
    \item If every Blue robot with an outdated view in $u_i$ becomes Red before Red robots in $u_j$ look, then Red robots in $u_j$ cannot execute any rules until Red robots in $u_i$ move to $u_j$. Then, the gathering is achieved.
    \item If Red robots in $u_j$ look Blue robots with outdated views in $u_i$, they change their color to Blue and move to $u_i$. Then, they are Blue robots without outdated views in $u_i$, and we can apply the above discussion to this case. Then, the gathering is achieved.
\end{itemize}
Thus, the lemma holds.
\end{proof}

By Lemmas~\ref{Conf8} and \ref{Conf9}, we can derive the following theorem.
\begin{theorem}[Theorem 9 in the main part]
Gathering is solvable in full-light of 3 colors when $M_{init}$ is odd.
\end{theorem}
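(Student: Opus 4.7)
The plan is to chain together the lemmas already established. The theorem is essentially a corollary: all the technical content has been absorbed into the transition-diagram lemmas and the terminal-configuration lemmas, so the proof reduces to an end-to-end routing through them.

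First I would invoke Lemma~\ref{RB3} to conclude that, starting from any initial configuration satisfying the assumptions (all robots White, $\phi\geq H_{init}\geq 1$, two border nodes, connected visibility graph, $M_{init}$ odd), the inter-border distance $D$ decreases by at least one in every mega-cycle and hence eventually reaches $D=2$. This progress argument relies on the exhaustive case analysis in Lemmas~\ref{Init} through~\ref{Conf6}, which guarantees that every reachable configuration in the transition diagram has an outgoing arrow that strictly reduces $D$.

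Next I would apply Lemma~\ref{C89}, which pinpoints that once $D=2$ the configuration must eventually enter either {\sf Conf BW-MR} or {\sf Conf RW-MB}. This is precisely the point at which the parity hypothesis $M_{init}$ odd is consumed: by Lemma~\ref{EO} the colors assigned to a border alternate Red/Blue every hop as a function of the distance from the original border, and oddness of $M_{init}$ forces the two borders to carry \emph{opposite} colors at the moment they become neighbors, rather than the same color. This is what distinguishes the two terminal configurations the algorithm is designed for from the impossible edge-symmetric case.

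Finally, from {\sf Conf BW-MR} gathering follows by Lemma~\ref{Conf8}, and from {\sf Conf RW-MB} it follows by Lemma~\ref{Conf9}; both lemmas already account for the ASYNC adversary through a careful treatment of outdated Blue/Red views and the resulting behaviour of rule R5. Since Algorithm~\ref{alg1} uses only the colors W, R, B, this establishes feasibility with $K=3$. The main obstacle in this proof is conceptual rather than technical: one must be confident that the mega-cycle abstraction really does cover all asynchronous interleavings, so that reaching $D=2$ forces entry into one of the two distinguished terminal configurations and not into some outdated-view configuration not named in the diagram. I would therefore highlight, as a final sanity check, that the outdated-view robots described in {\sf Conf 4}--{\sf Conf 7} of Figure~\ref{TD} are exactly the ones the terminal lemmas handle, so no case falls through the cracks.
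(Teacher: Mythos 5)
Your proposal is correct and follows exactly the paper's own route: Lemma~\ref{RB3} to drive $D$ down to $2$, Lemma~\ref{C89} (which is where the oddness of $M_{init}$ via Lemma~\ref{EO} is consumed) to land in {\sf Conf BW-MR} or {\sf Conf RW-MB}, and then Lemmas~\ref{Conf8} and~\ref{Conf9} to conclude gathering. The paper's proof of the theorem is precisely this chaining, so there is nothing to add.
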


\subsection{Algorithm for the case $M_{init}$ is even and $O_{init}$ is odd}

\subsubsection{Description}
In this case, we can assume $\phi>1$ because the visibility graph is connected.
The strategy of our algorithm is similar to Algorithm 1.
Initially, all robots are White, and robots on two border nodes become Red in their first activation.
The two border robots keep their lights Red or Blue, then the algorithm can recognize that they are originally border robots.
When non-border White robots become border robots, 
they change their color to Red (resp., Blue) if borders that join the node have Red (resp., Blue). 
To keep the connected visibility graph, when a border node becomes singly-colored, the border robot moves toward the center node.
At that time, if there exists a White robot in the directed neighboring node, the border robot changes its color.
Otherwise, it just moves without changing its color.
Eventually, two border nodes become neighboring.
In this algorithm, when two border nodes are neighboring, an additional color Purple is used to decide the gathering point.

\begin{algorithm}[p] 
{\bf Colors}\\
W (White), R (Red), B (Blue), P (Purple)\\
\\
{\bf Rules}\\
/* Do nothing after gathering. */\\
R0: $\emptyset^\phi[?]\emptyset^\phi$ :: $\bot$\\

/* Start by the initial border robots. */\\
R1: $\emptyset^\phi[W!](\neg\emptyset^{\phi})$ :: $R$\\

/* Border robots on singly-colored nodes move inwards. */\\
R2a-1: $\emptyset^\phi[R!](\emptyset)(\neg(\emptyset^{\phi-1}))$ :: $\rightarrow$\\

R2a-2: $\emptyset^\phi[R!]\begin{pmatrix}
\neg W\\
R
\end{pmatrix}(\neg(\emptyset^{\phi-1}))$ :: $\rightarrow$\\

R2b: $\emptyset^\phi[R!](W)(?^{\phi-1})$ :: $B, \rightarrow$\\

R3a-1: $\emptyset^\phi[B!](\emptyset)(\neg(\emptyset^{\phi-1}))$ :: $\rightarrow$\\

R3a-2: $\emptyset^\phi[B!]\begin{pmatrix}
\neg W\\
B
\end{pmatrix}(\neg(\emptyset^{\phi-1}))$ :: $\rightarrow$\\

R3b: $\emptyset^\phi[B!](W)(?^{\phi-1})$ :: $R, \rightarrow$\\

/* When White robots become border robots, they change their color to the same color as the border robots. */\\
R4a: $\emptyset^\phi\begin{bmatrix}
R\\
[W]
\end{bmatrix}(\neg\emptyset^{\phi})$ :: $R$\\

R4b: $\emptyset^\phi\begin{bmatrix}
B\\
[W]
\end{bmatrix}(\neg\emptyset^{\phi})$ :: $B$\\

/* When two border nodes are neighboring, they gather to the border node with Purple robots. */\\
R5a: $\emptyset^\phi[?](P)(\emptyset^{\phi-1})$ :: $\rightarrow$\\

R5b-1: $\emptyset^\phi[B!](R!)(\emptyset^{\phi-1})$ :: $P$\\


R5b-2: $\emptyset^{\phi}[B!]\begin{pmatrix}
R\\
B
\end{pmatrix}(\emptyset^{\phi-1})$ :: $P$\\


R5b-3: $\emptyset^\phi\begin{bmatrix}
R\\
[B]
\end{bmatrix}(R!)(\emptyset^{\phi-1})$ :: $P$

\caption{Algorithm for the case that $M_{init}$ is even and $O_{init}$ is odd.}
\label{alg2b}
\end{algorithm}

The rules of our algorithm are as follows:
\begin{itemize}
\item R0: If the gathering is achieved, a robot does nothing.
\item R1: A border White robot on a singly-colored border node changes its light to Red.
\item R2: A border Red robot on a singly-colored border node moves toward an occupied node without changing its color when there is no White robot on the neighboring node (R2a-1, R2a-2). A border Red robot moves toward an occupied node and changes its light to Blue only when there is at least one White robot on the neighboring node (R2b).
\item R3: A border Blue robot on a singly-colored border node moves toward an occupied node without changing its color when there is no White robot on the neighboring node (R3a-1, R3a-2). A border Blue robot moves toward an occupied node and changes its light to Red only when there is at least one White robot on the neighboring node (R3b).
\item R4: When White robots become border robots, they change their color to the same color as the border Red or Blue robots.
\item R5: If two border nodes are neighboring, every robot moves to the neighboring node with Purple robots (R5a).
A border Blue robot on a singly-colored border node changes its light to Purple when there are only Red robots or Red and Blue robots on the neighboring node (R5b-1, R5b-2).
A border Blue robot changes its light to Purple when there is Red robot on the same node and the neighboring node is a singly-colored node with Red robots (R5b-3).   
\end{itemize}

\begin{figure}[t]
\centering\includegraphics[keepaspectratio, width=11cm]{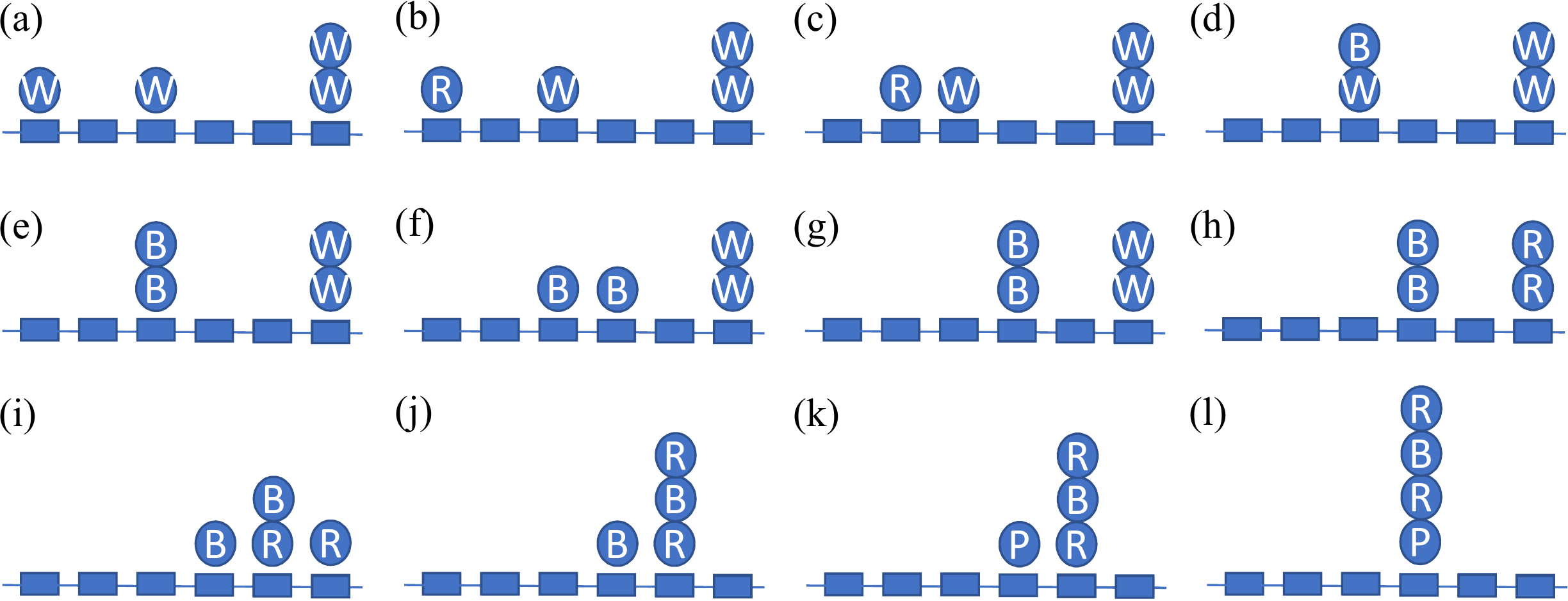}
\caption{An execution example of Algorithm \ref{alg2b} with $\phi=3$.}
\label{fig:ex-alg2}
\end{figure}

The formal description of the algorithm is in Algorithm~\ref{alg2b}.
Figure \ref{fig:ex-alg2} illustrates an execution example of Algorithm \ref{alg2b}. This figure assumes $\phi=3$. Figure \ref{fig:ex-alg2}(a) shows an initial configuration. First, the left border White robot changes its light to Red by R1 (Fig.\,\ref{fig:ex-alg2}(b)). Next, the left border Red robot moves by R2a-1 (Fig.\,\ref{fig:ex-alg2}(c)). Note that, here, the robot does not change its light. In the next movement, the left border Red robot moves to a node with a White robot by R2b, and hence it changes its light to Blue (Fig.\,\ref{fig:ex-alg2}(d)). Then the left border White robot changes its light to Blue (Fig.\,\ref{fig:ex-alg2}(e)). Left border Blue robots can move by R3a-1. In Fig.\,\ref{fig:ex-alg2}(f), one of them completes the movement. In this case, another robot can move by R3a-2 (Fig.\,\ref{fig:ex-alg2}(g)). Next, right border White robots change their lights by R1 (Fig.\,\ref{fig:ex-alg2}(h)). After that, left and right border robots move toward each other by R2a-1 and R3a-1. In Fig.\,\ref{fig:ex-alg2}(i), some Blue and Red robots meet at a node but border robots continue to move until the number of occupied nodes is at most two by R2a-2 and R3a-2 (Fig.\,\ref{fig:ex-alg2}(j)). After the number of occupied nodes is at most two, some robots change their lights to Purple. In this case, the left Blue robot changes its light by R5b-2 (Fig.\,\ref{fig:ex-alg2}(k)). After that, all robots move to the node with a Purple robot by R5a and achieve gathering (Fig.\,\ref{fig:ex-alg2}(l)).


\subsubsection{Proof of correctness}
Just the same as Algorithm 1, since there is no rule that non-border robot can execute by the definition of Algorithm 2, the following lemma holds.

\begin{lemma}[Lemma 10 in the main part]\label{NB2}
When a robot $r_i$ looks, if it is a non-border, it cannot execute any action.
\end{lemma}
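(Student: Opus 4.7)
The plan is to mirror the proof of Lemma~\ref{NB} (its counterpart for Algorithm~1), since the statement and the underlying reason are identical: every rule in Algorithm~\ref{alg2b} is guarded by a predicate that can only be satisfied by a border robot. The verification is therefore a direct syntactic inspection of the algorithm's rule list. I would begin by recalling the formal definition of a border robot from the model section: the robot $r_j$ at $u_i$ is a border robot precisely when its view matches $\emptyset^{\phi}[?]$ on at least one side, i.e., all $\phi$ positions on that side are empty.

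Next I would walk through rules R0, R1, R2a-1, R2a-2, R2b, R3a-1, R3a-2, R3b, R4a, R4b, R5a, R5b-1, R5b-2, R5b-3 one by one and observe that each guard contains the factor $\emptyset^{\phi}$ on one side of the position marker (on both sides in the case of R0). Because robots are disoriented, the symmetric guards (with $\emptyset^{\phi}$ on the right) are implicitly covered by the algorithm's interpretation of the view. Consequently, in order for any guard to evaluate to \emph{true}, the robot must see $\emptyset^{\phi}$ on at least one side, which is exactly the condition for being a border robot. By contraposition, a non-border robot's view cannot satisfy any guard, so it cannot execute any action.

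The only ``obstacle'' is completeness of the case analysis: Algorithm~\ref{alg2b} has noticeably more rules than Algorithm~\ref{alg1} (in particular the R5b family and the split R2a/R3a rules), so I must make sure that no rule has been overlooked and that each guard genuinely contains the $\emptyset^{\phi}$ factor. There is no conceptual difficulty and no inductive argument to set up; the proof closes in a single line after the rule-by-rule check, so the write-up will be brief.
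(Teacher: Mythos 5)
Your proposal is correct and matches the paper's own argument: the paper's proof is simply the one-line observation that, by the definition of Algorithm~2, no rule is executable by a non-border robot, which is exactly the rule-by-rule inspection of the $\emptyset^{\phi}$ guard factor that you spell out in more detail. No difference in substance, only in the level of explicitness.
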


To discuss the correctness, we change the definition of mega-cycle as follows:
We consider the time instants such that the number of occupied nodes with White robots among non-border nodes (denoted as $\#O_{W}$) has just reduced at least one. That is, mega-cycles end when either border position moves to the nearest occupied node with White robots.
If a border position moves to the nearest occupied node with White robots, we say \emph{the border absorbs White robots}. 


Figure~\ref{TD2} shows a transition diagram of configurations for every mega-cycle.
The doubly (resp. singly) lined arrows represent that $\#O_{W}$ is decreased by 2 (resp. 1).
In the diagram, {\sf Init} and {\sf Conf 1-7} are the same as those of Algorithm 1 and they have the same transitions between them as shown in Figure~\ref{TD}, where note that each node with W circle contains at least one White robot. In addition to these configurations, there exist four configurations {\sf Conf 8-11} in Algorithm 2. 
In {\sf Conf 1-3}, both borders absorb White robots.
In {\sf Conf 4-7}, when one border absorbs White robots, the other border is neighbored to an occupied node with a White robot. 
On the other hand, in {\sf Conf 8-11}, when one border absorbs White robots, the other border is not neighbored to any occupied node with a White robot.


\begin{figure*}[t]
\centering\includegraphics[keepaspectratio, width=13cm]{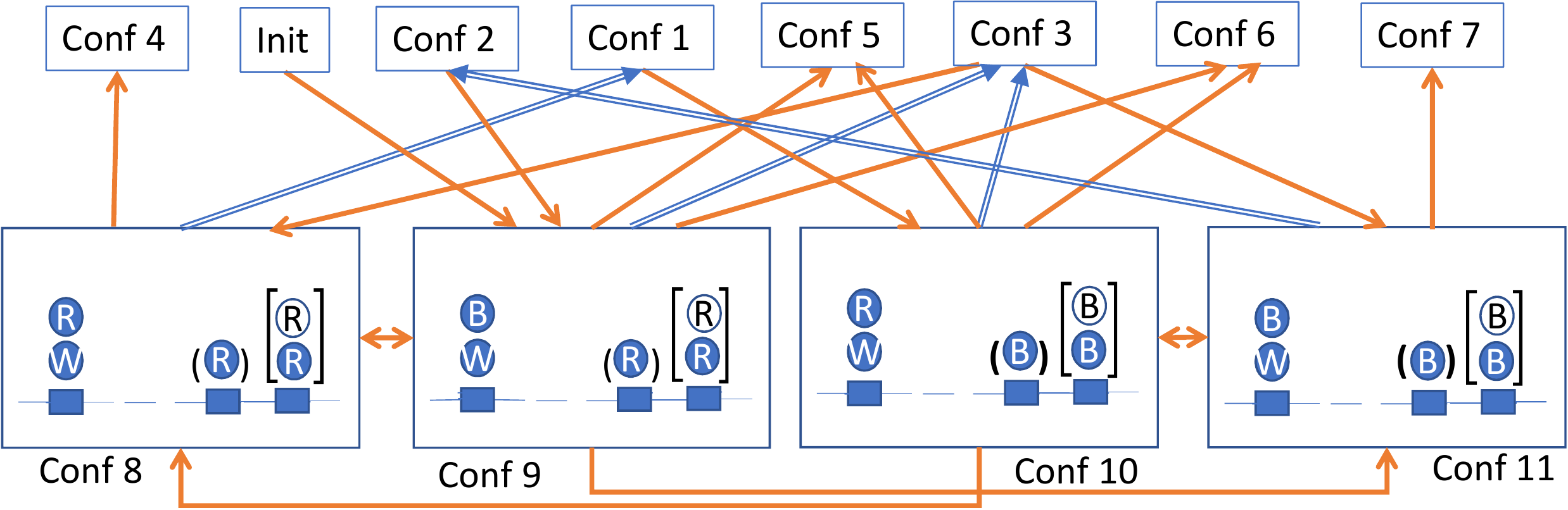}
\caption{Transition Diagram for Algorithm 2 while $D>1$.}\label{TD2}
\end{figure*}

The following lemma can be proved similarly to the proof of Lemma~\ref{Reduction-by-one}.

\begin{lemma}[Lemma 11 in the main part]\label{Reduction-by-one-for-alg-2}
Assume that a border node $u$ is singly-colored with Red (resp. Blue) at time $t$ and there is no robot in $u$ at time $t' (> t)$.
If the neighboring node of $u$ (denoted as $v$) is an occupied node with a White robot at $t$, 
all robots in $u$ change their color to Blue (resp. Red), move to $v$, and $\#O_{W}$ and $D$
are reduced by at least one at $t'$.
Otherwise, that is, when $v$ is a node without White robots at $t$,
all robots move to $v$ and do not change their color and $D$
is reduced by at least one at $t'$.
\end{lemma}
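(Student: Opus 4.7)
My plan is to mirror the proof of Lemma~\ref{Reduction-by-one} while accommodating the richer rule set of Algorithm~\ref{alg2b}. I will split on whether the inward neighbor $v$ of $u$ contains a White robot at time $t$. Without loss of generality, assume $u$ is singly-colored Red at $t$; the Blue case is symmetric after swapping the R2 and R3 families.

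In the first case, when $v$ contains a White robot, every Red robot in $u$ has the view $\emptyset^{\phi}[R!](W)(?^{\phi-1})$, matching the guard of R2b, so when activated it recolors to Blue and moves to $v$. The critical invariant I must establish is that this view is preserved for every robot in $u$ throughout $(t, t']$, including for outdated robots and those activated late. The Red-only composition of $u$ stays intact because no rule of Algorithm~\ref{alg2b} can deposit another color in $u$ while $u$ still hosts Red robots; and the presence of a White robot in $v$ is preserved because R4a and R4b both require $v$ to sit at a border, i.e.\ to see $\emptyset^{\phi}$ on its outer side, a predicate violated while any Red robot remains in $u$. Consequently every robot originally in $u$ eventually recolors to Blue and moves to $v$ via R2b. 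When the last such robot leaves, $v$ becomes the new border, so $v$ drops out of the count of non-border nodes occupied by Whites, $\#O_{W}$ decreases by at least one, and $D$ decreases by one.

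In the second case, when $v$ has no White robot, the hypothesis that $u$ is empty at $t'$ forces every Red robot of $u$ to have fired a movement rule. Since R2b is disabled (its guard needs White at $v$), only R2a-1 (when $v$ is empty) or R2a-2 (when $v$ hosts Red without any White) can have fired, and both move the robot without a color change. I will prove the analogue of the previous invariant: during $(t,t']$, no rule can deposit a White robot at $v$ while any Red robot remains in $u$, because no rule creates White from another color and no external border-absorption of Whites into $v$ can be triggered while $u$ still blocks the border condition. This keeps the relevant guard stable, so every robot of $u$ moves to $v$ while keeping its color, and $D$ drops by at least one.

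The principal technical obstacle is the asynchronous scheduler combined with outdated views: a robot's snapshot may predate $t$ while its move happens well after. The two invariants above dispose of this obstacle by showing that the enabling guard is continuously satisfied along $(t,t']$, so stale and fresh snapshots lead to the same rule firing. I must additionally verify that robots that have just arrived at $v$ (turning Blue in Case A, unchanged in Case B) cannot execute an R3-family rule that would disturb the progress; this holds because $v$ is not yet a border while $u$ is occupied, so the $\emptyset^{\phi}$ prefix mandated by every R3 guard is violated.
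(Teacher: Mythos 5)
Your overall plan — splitting on whether $v$ hosts a White robot and reducing each branch to a guard-stability argument — matches the paper's intent (the paper itself only says the lemma ``can be proved similarly to the proof of Lemma~\ref{Reduction-by-one}''). However, the invariant you call critical in the first case is false, and this is a genuine gap. You claim that ``the Red-only composition of $u$ stays intact because no rule of Algorithm~\ref{alg2b} can deposit another color in $u$ while $u$ still hosts Red robots.'' Rule R2b itself deposits another color in $u$: a robot executing R2b changes its light to Blue during its Compute phase and only leaves $u$ during its (possibly much later) Move phase, and in the ASYNC model these two phases can be separated by an arbitrarily long interval. During that interval $u$ is multiply-colored (Red and Blue), the predicate $[R!]$ fails, and every Red robot of $u$ that takes a fresh Look is \emph{disabled} — your claim that the guard of R2b ``is continuously satisfied along $(t,t']$'' does not hold. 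The conclusion is still true, but it needs the argument the paper actually uses in Lemma~\ref{Reduction-by-one}: the Red robots that observe a Blue co-located robot simply cannot act, the Blue robots eventually complete their moves to $v$, at which point $u$ is singly-colored Red again (and $v$ still contains a White robot, since White robots never move and R4a/R4b are blocked at $v$ while $u$ is occupied), so by fairness the stragglers are re-activated with R2b enabled and also leave. You need this ``temporarily disabled, then re-enabled once $u$ is singly-colored again'' step; a blanket stability-of-the-guard claim cannot replace it.

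Your second case is essentially sound, precisely because R2a-1 and R2a-2 involve no color change, so $u$ genuinely does remain singly-colored Red throughout; the only adjustment needed is that once the first movers reach $v$, the remaining robots switch from satisfying the guard of R2a-1 to satisfying that of R2a-2 (since $v$ now contains Red and no White), which you should state explicitly rather than asserting that a single guard stays stable. Your observations that White robots at $v$ cannot fire R4a/R4b while $u$ is occupied, and that newly arrived robots at $v$ cannot fire R3-family rules because $v$ is not yet a border, are correct and worth keeping.
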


By Lemma~\ref{Reduction-by-one-for-alg-2}, each border node moves to the occupied node until either border absorbs White robots in any mega-cycle.  
In the following lemmas, transitions among {\sf Init} and {\sf Conf 1-7} can be proved similarly to those in the corresponding lemmas using Lemma~\ref{Reduction-by-one-for-alg-2} instead of Lemma~\ref{Reduction-by-one}.
The difference occurs when one border absorbs White robots and the neighboring node of the other border is a node without White robots. In this case, since the other border moves to the neighboring node, the configuration becomes {\sf Conf 8-11}. The proofs can be done similarly.

\begin{lemma}\label{Init2}
{\sf Init} $\xrightarrow{MC}$ {\sf Conf 1},  {\sf Init} $\xrightarrow{MC}$ {\sf Conf 3},  {\sf Init} $\xrightarrow{MC}$ {\sf Conf 4}, or {\sf Init} $\xrightarrow{MC}$ {\sf Conf 9}.
\end{lemma}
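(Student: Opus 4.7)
The plan is to mirror the proof of Lemma~\ref{Init} for Algorithm~1, but to replace Lemma~\ref{Reduction-by-one} with Lemma~\ref{Reduction-by-one-for-alg-2} and to account for the new behaviour of Algorithm~2 that a border robot may shift onto an empty neighbour (by R2a-1) without changing colour and without reducing $\#O_W$. This extra behaviour is precisely what produces the new target configuration {\sf Conf 9} not present in the Algorithm~1 diagram.

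First I would argue, exactly as in Lemma~\ref{Init}, that in {\sf Init} every robot is White, so by Lemma~\ref{NB2} no non-border robot can act, and the only enabled rule for a White border robot is R1. Consequently, each border node eventually becomes singly-coloured with colour Red (this uses the analogue of Lemma~\ref{mul2sing}, which carries over verbatim since R1 and R4a are identical in both algorithms). From this point, Lemma~\ref{Reduction-by-one-for-alg-2} governs the next step of each border: because no Red or Blue robot yet exists outside of the two borders, the neighbour of a border is either empty or is occupied only by White robots. In the first case the border shifts by R2a-1 without changing colour and without reducing $\#O_W$; in the second case the border absorbs the White robots, becomes Blue by R2b (possibly mixed with Whites awaiting R4b), and reduces $\#O_W$ by one.

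Next I would enumerate the possible ways the first mega-cycle can terminate, recalling that a mega-cycle in Algorithm~2 ends exactly when $\#O_W$ decreases, i.e.\ the moment some border completes an absorbing step. The cases are: (i)~both borders complete absorbing steps and all pending moves, producing singly-Blue (plus absorbed Whites) on both sides with no outdated Red/Blue robot, which is {\sf Conf 1}; (ii)~one border has finished absorbing while the other is still in the middle of absorbing (its current neighbour being an occupied White node), which is {\sf Conf 4}; (iii)~one side is fully done while the other is still a non-singly-coloured Red/White node whose R4a is not yet finished, which is {\sf Conf 3}; (iv)~one border absorbs Whites while the opposite border has shifted by R2a-1 onto an empty neighbour (so $\#O_W$ has dropped only on one side, and the other border now sits on a node with no White robot), which is the new case {\sf Conf 9}.

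The main obstacle will be the bookkeeping of outdated views and of pending R4a activations on the non-completing side while the mega-cycle clock is driven only by the completing side, and in particular ruling out the remaining configurations {\sf Conf 2}, {\sf Conf 5}--{\sf Conf 8}, {\sf Conf 10}, {\sf Conf 11} as possible targets. For this I would use a monotonicity argument: starting from an all-White configuration, no R3-type rule has ever fired, so a border cannot be in any state involving a Blue-to-Red transition (excluding {\sf Conf 2}, and the ``Red-side'' variants {\sf Conf 5}, {\sf Conf 7}); and the first completed absorbing step necessarily produces Blue border robots, which together with Lemma~\ref{Reduction-by-one-for-alg-2} rules out {\sf Conf 6}, {\sf Conf 8}, {\sf Conf 10}, {\sf Conf 11}. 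Combined with the four cases (i)--(iv) above, this leaves exactly the four transitions stated.
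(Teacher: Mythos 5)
Your proposal is correct and follows exactly the route the paper intends: the paper gives no standalone proof of this lemma, stating only that the transitions from {\sf Init} are proved as in Lemma~\ref{Init} with Lemma~\ref{Reduction-by-one-for-alg-2} in place of Lemma~\ref{Reduction-by-one}, plus the additional case where one border absorbs White robots while the other border's neighbour holds no White robot, which yields {\sf Conf 9}. Your enumeration (i)--(iv) and the monotonicity argument ruling out the Blue-to-Red configurations simply make that sketch explicit.
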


\begin{lemma}\label{Conf12}
{\sf Conf 1} $\xrightarrow{MC}$ {\sf Conf 2},  {\sf Conf 1} $\xrightarrow{MC}$ {\sf Conf 3}, {\sf Conf 1} $\xrightarrow{MC}$ {\sf Conf 7}, or {\sf Conf 1} $\xrightarrow{MC}$ {\sf Conf 10}.
\end{lemma}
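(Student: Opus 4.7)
My plan is to follow the structure of Lemma~\ref{Conf1} for Algorithm~1, adapting it to Algorithm~2's new rules R3a-1 and R3a-2, which let a Blue border move without changing color whenever its inward neighbor lacks White robots. First, I invoke Lemma~\ref{NB2} to restrict attention to the two border nodes $u_i$ and $u_j$: in {\sf Conf 1} both host Blue robots (without outdated views) together with White robots, and no non-border robot can fire any rule.

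Next, I analyze the evolution of a single border, say $u_i$. Its White robots are only enabled by R4b, converting to Blue; so by Lemma~\ref{mul2sing}, $u_i$ eventually becomes Blue singly-colored. From that point the Blue border robots fire R3a-1 through empty neighbors (no color change), or R3b when the inward neighbor contains a White robot (changing to Red and moving there). By Lemma~\ref{Reduction-by-one-for-alg-2}, the border walks inward through a (possibly empty) chain of empty nodes until it reaches the nearest node $v_k$ with a White robot; at $v_k$ the arriving robots turn Red via R3b and the destination becomes a new Red+White border. This absorption event is precisely what reduces $\#O_{W}$ by one and thus terminates the current mega-cycle. The symmetric description applies to $u_j$.

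The remaining step is a case analysis of the snapshot at the instant $\#O_{W}$ first drops. If both borders complete their absorptions essentially simultaneously, both are Red+White and we obtain {\sf Conf 2}. If only $u_i$ completes while $u_j$ has not yet begun any inward motion (its White robots may or may not have converted, but no Blue robot has moved), then $u_j$ is still Blue+White and the snapshot matches {\sf Conf 3}. If only $u_i$ completes while $u_j$'s border is in transit, with its frontmost Blue robots currently neighboring an occupied node containing a White robot, the snapshot matches {\sf Conf 7}. Finally, if only $u_i$ completes while $u_j$'s border is in transit but its frontmost Blue robots have no White-containing neighbor (they are crossing empty nodes via R3a-1), the snapshot matches {\sf Conf 10}.

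The main obstacle is making the last case rigorous in the ASYNC setting. The $u_j$-border can be spread across several nodes, with some Blue robots still at the old position carrying outdated views and others at intermediate empty nodes; I have to argue that the overall configuration falls into a diagram class, and that the distinction between {\sf Conf 7} and {\sf Conf 10} is governed precisely by whether the frontmost Blue's inward neighbor hosts a White robot. This bookkeeping of outdated Blue robots is the delicate part, but it reuses the template of Lemmas~\ref{mul2sing} and \ref{Reduction-by-one-for-alg-2}, in the same way as Lemma~\ref{Conf1}.
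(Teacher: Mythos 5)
Your proposal is correct and follows exactly the route the paper intends: the paper does not write out a proof for this lemma, stating only that it is proved similarly to Lemma~\ref{Conf1} using Lemma~\ref{Reduction-by-one-for-alg-2} in place of Lemma~\ref{Reduction-by-one}, with the extra branch to {\sf Conf 10} arising when the still-moving border is not adjacent to a White-occupied node. Your case analysis (both borders absorb: {\sf Conf 2}; one absorbs, other idle: {\sf Conf 3}; one absorbs, other in transit next to a White-occupied node: {\sf Conf 7}; one absorbs, other in transit with no White neighbor: {\sf Conf 10}) is precisely that argument, spelled out in more detail than the paper itself provides.
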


\begin{lemma}\label{Conf22}
{\sf Conf 2} $\xrightarrow{MC}$ {\sf Conf 1},  {\sf Conf 2} $\xrightarrow{MC}$ {\sf Conf 3}, {\sf Conf 2} $\xrightarrow{MC}$ {\sf Conf 4} or
 {\sf Conf 2} $\xrightarrow{MC}$ {\sf Conf 9}.
\end{lemma}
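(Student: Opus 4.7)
My plan is to mirror the proof of Lemma~\ref{Conf2}, substituting Lemma~\ref{Reduction-by-one-for-alg-2} for Lemma~\ref{Reduction-by-one} and adding one new subcase that reflects the Algorithm~2 behavior of a singly-colored border moving through a neighbor that contains no White robot. Starting from {\sf Conf~2}, both border nodes $u_i$ and $u_j$ host Red and White robots without outdated views, and no other node contributes border activity by Lemma~\ref{NB2}. By the same argument as in Lemma~\ref{mul2sing} (with R4a applied to the White robots), each border eventually becomes singly-colored Red; once singly-colored, only R2a-1, R2a-2, or R2b is enabled on that border.

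The key case split is on whether the inner neighbor of each Red singly-colored border hosts at least one White robot. By Lemma~\ref{Reduction-by-one-for-alg-2}: if it does, the border fires R2b, advances, and turns Blue, absorbing the White robots at the new position (so $\#O_W$ strictly decreases and the mega-cycle ends); if it does not, the border fires R2a-1 or R2a-2 and advances while remaining Red, leaving $\#O_W$ unchanged and continuing the mega-cycle. Therefore the mega-cycle terminates at the first instant when some border performs an R2b step, and the final configuration is determined by what the other border has done by that instant.

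I would then enumerate four cases for what the mega-cycle produces. (1)~Both borders execute R2b and the arriving Blue robots all finish their moves and their R4b updates before the snapshot that ends the mega-cycle; then both sides host Blue and White robots without outdated views, which is {\sf Conf~1}. (2)~Both borders execute R2b but only one side has fully settled while the other still has in-flight Red/Blue robots with outdated views; this is exactly {\sf Conf~4}. (3)~Only one border has absorbed White robots while the other border is still a Red-White node without outdated views (its singly-colored step has not yet happened, or R4a is still pending on it); the result is a mixed Red-White/Blue-White pair, i.e., {\sf Conf~3}. (4)~Only one border absorbs White robots via R2b, and the other border has fired R2a-1 or R2a-2 at least once — meaning its inner neighbor had no White robot and it advanced while staying Red — yielding a configuration where one side is a freshly absorbed Blue-White border and the other side is a relocated Red border whose neighbor is not a White-hosting node; this matches the defining feature of {\sf Conf~8-11}, and specifically the Red/Blue combination produced here is {\sf Conf~9}.

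The main obstacle is the asynchronous bookkeeping: one has to check that no other target configuration is reachable, in particular that a second absorption cannot occur inside the same mega-cycle (which is immediate because the mega-cycle ends as soon as $\#O_W$ drops by one) and that the matching to {\sf Conf~9} versus the other members of {\sf Conf~8-11} is forced by the color on the relocating border. These are exactly the considerations already handled for the classical cases in Lemma~\ref{Conf2}, so no new idea is needed beyond recognizing the R2a-1/R2a-2 branch and tracking the color of the non-absorbing border through that branch.
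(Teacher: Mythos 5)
Your proposal is correct and follows exactly the route the paper prescribes for this lemma: the paper gives no standalone proof of Lemma~\ref{Conf22} but states that it is obtained by mirroring the corresponding Algorithm~1 lemma (Lemma~\ref{Conf2}) with Lemma~\ref{Reduction-by-one-for-alg-2} in place of Lemma~\ref{Reduction-by-one}, plus the extra branch where the non-absorbing border advances through a neighbor without White robots, which yields {\sf Conf 9}. Your four-case enumeration matches that intended argument and correctly identifies the targets {\sf Conf 1}, {\sf Conf 4}, {\sf Conf 3}, and {\sf Conf 9}.
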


\begin{lemma}\label{Conf32}
{\sf Conf 3} $\xrightarrow{MC}$ {\sf Conf 1},  {\sf Conf 3} $\xrightarrow{MC}$ {\sf Conf 2}, 
 {\sf Conf 3} $\xrightarrow{MC}$ {\sf Conf 3},
{\sf Conf 3} $\xrightarrow{MC}$ {\sf Conf 5}, {\sf Conf 3} $\xrightarrow{MC}$ {\sf Conf 6}, 
{\sf Conf 3} $\xrightarrow{MC}$ {\sf Conf 8}, 
or 
{\sf Conf 3} $\xrightarrow{MC}$ {\sf Conf 11}.

\end{lemma}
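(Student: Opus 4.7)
The plan is to mirror the case analysis of Lemma~\ref{Conf3} for Algorithm~1, substituting Lemma~\ref{Reduction-by-one-for-alg-2} for Lemma~\ref{Reduction-by-one}, and then to add the two new branches that arise specifically from Algorithm~2, namely the transitions into \textsf{Conf 8} and \textsf{Conf 11}. Concretely, \textsf{Conf 3} has both border nodes in a ``hybrid'' state (non-singly-colored, with White robots still present and no outdated Red/Blue robots yet), so the first step is to apply Lemma~\ref{mul2sing} to each of the two border nodes $u_i$ and $u_j$ and conclude that each of them eventually becomes singly-colored (Red or Blue, depending on which border robots joined the node during the previous mega-cycle). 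From that point, Lemma~\ref{Reduction-by-one-for-alg-2} governs what happens when either border moves inward.

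Next I would split the mega-cycle into the usual three scenarios. (i) Only one border, say $u_i$, completes its movement during the current mega-cycle while the other border $u_j$ remains with White robots not yet absorbed; depending on whether $u_i$'s neighbor hosts White robots and on whether $u_j$ has begun its move, this reproduces the \textsf{Conf 1}, \textsf{Conf 2}, and \textsf{Conf 3} branches exactly as in Lemma~\ref{Conf3}. (ii) Both borders complete their movements essentially together; by the same argument as in Lemma~\ref{Conf3} this yields \textsf{Conf 5} or \textsf{Conf 6}, the distinction being whether the two outgoing border colors match. (iii) The new scenario: one border (say $u_i$) absorbs White robots in this mega-cycle, but the neighboring node of $u_j$ contains no White robots. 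Here Lemma~\ref{Reduction-by-one-for-alg-2} tells us that $u_j$'s border robots still move inward by R2a-1/R2a-2 or R3a-1/R3a-2 without changing color, and the resulting configuration matches the description of \textsf{Conf 8} or \textsf{Conf 11} (the choice depending, as in the analogous \textsf{Conf 5}/\textsf{Conf 6} split, on the color of the border robots at $u_j$).

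For each branch I would verify that $\#O_W$ strictly decreases (by exactly one in cases (i) and (iii), and by two in case (ii), which is why the corresponding arrows in Figure~\ref{TD2} are doubly or singly lined), so that the transition indeed counts as a mega-cycle; this follows immediately from Lemma~\ref{Reduction-by-one-for-alg-2} applied to whichever border(s) absorb White robots. I would also check that no extraneous configuration (e.g.\ \textsf{Conf 9} or \textsf{Conf 10}) can result from \textsf{Conf 3}: this is because reaching \textsf{Conf 9} or \textsf{Conf 10} requires the border that does \emph{not} absorb White in this mega-cycle to be in a ``Blue + White without outdated views'' state, which does not hold in \textsf{Conf 3} by definition.

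The main obstacle is the usual ASYNC bookkeeping in case (i) and case (iii), where some robots at $u_j$ may still be in an outdated Look or Compute phase when $u_i$ finishes its move. I would resolve this by the same argument as in the proof of Lemma~\ref{Conf4}: the view that enabled an outdated robot at $u_j$ remains consistent with rule R2 or R3 even after its neighbor changes, so the robot still correctly moves inward (possibly toggling its color via R2b/R3b only when it actually sees a White neighbor), which is exactly what the descriptions of \textsf{Conf 5}, \textsf{Conf 6}, \textsf{Conf 8}, and \textsf{Conf 11} require. Once this is checked the transitions listed in the statement exhaust all possibilities.
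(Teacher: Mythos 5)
Your proposal is correct and follows essentially the same route as the paper, which does not spell out a proof for this lemma but states that the transitions among \textsf{Conf 1--7} are proved as in the Algorithm~1 lemmas (here Lemma~\ref{Conf3}) with Lemma~\ref{Reduction-by-one-for-alg-2} replacing Lemma~\ref{Reduction-by-one}, and that the new branches to \textsf{Conf 8--11} arise exactly when one border absorbs White robots while the other border's neighbor hosts none. Your case (iii) and the exclusion of \textsf{Conf 9}/\textsf{Conf 10} are precisely the details the paper leaves implicit.
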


\begin{lemma}\label{Conf42}
{\sf Conf 4} $\xrightarrow{MC}$ {\sf Conf 1},  {\sf Conf 4} $\xrightarrow{MC}$ {\sf Conf 3}, 
 {\sf Conf 4} $\xrightarrow{MC}$ {\sf Conf 5},
or {\sf Conf 4} $\xrightarrow{MC}$ {\sf Conf 6}.
\end{lemma}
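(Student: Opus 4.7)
The plan is to adapt the proof of Lemma~\ref{Conf4} for Algorithm~1 by replacing its use of Lemma~\ref{Reduction-by-one} with Lemma~\ref{Reduction-by-one-for-alg-2}. First I would unpack \textsf{Conf 4}: the left border node $u_i$ holds Blue and White robots with no outdated views, while the right border $u_j$ still has robots in progress from a just-departed Red border, possibly including outdated views. By the \textsf{Conf 4-7} convention, whenever one of the two borders absorbs White robots during the coming mega-cycle, the other border is neighbored to an occupied node containing a White robot; this observation is exactly what will keep the transition confined to \textsf{Conf 1--7} rather than letting it escape into \textsf{Conf 8--11}.

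Next I would case-split on which side completes a mega-cycle step (that is, absorbs White robots and so decreases $\#O_W$) first. For the left side, once every White robot on $u_i$ runs R4b, the node becomes singly Blue; by Lemma~\ref{Reduction-by-one-for-alg-2} its robots move to the neighbor, and because that neighbor hosts a White, R3b fires, the robots recolor to Red, and a new border with Red and White robots emerges. For the right side, once the in-progress robots finish arriving at the adjacent White node, they form a new singly-Blue border (after local Whites run R4b) with those Whites absorbed. Pairing these with the timing cases mirrored from the Algorithm~1 proof: (a) only $u_i$ completes while $u_j$ stays mid-progress gives \textsf{Conf 5}; (b) only $u_j$ completes while $u_i$ has not yet looked gives \textsf{Conf 1}, whereas only $u_j$ completes while $u_i$ is itself already partway through its step gives \textsf{Conf 6}; (c) both sides complete simultaneously gives \textsf{Conf 3}. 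The ``neither completes'' case cannot end the mega-cycle and is excluded by definition.

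The main obstacle I anticipate is the careful verification that no \textsf{Conf 8--11} outcome can arise in this mega-cycle. This is precisely where the \textsf{Conf 4-7} invariant is needed: because the right border's target node (the White-bearing node adjacent to its current position) still contains a White robot at the moment of completion, the right side exits the mega-cycle via the White-absorbing branch of Lemma~\ref{Reduction-by-one-for-alg-2}, never via the empty-neighbor branch corresponding to R2a-1/R2a-2/R3a-1/R3a-2. The same holds on the left side. Hence no border executes an empty-neighbor move during this mega-cycle, which is the only way to reach \textsf{Conf 8--11}. I would also record that $D \geq 2$ is maintained throughout, so none of R5a/R5b-1/R5b-2/R5b-3 becomes enabled, which makes the four-way enumeration above exhaustive and concludes the lemma.
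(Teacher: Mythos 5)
Your proposal matches the paper's intended argument: the paper gives no explicit proof of this lemma, stating only that the transitions among \textsf{Conf 1--7} "can be proved similarly" to the Algorithm~1 lemmas with Lemma~\ref{Reduction-by-one-for-alg-2} substituted for Lemma~\ref{Reduction-by-one}, and your case split (only $u_i$ completes $\Rightarrow$ \textsf{Conf 5}; only $u_j$ completes with $u_i$ not yet looked $\Rightarrow$ \textsf{Conf 1}, or with $u_i$ mid-movement $\Rightarrow$ \textsf{Conf 6}; both complete $\Rightarrow$ \textsf{Conf 3}) reproduces the paper's proof of Lemma~\ref{Conf4} exactly. Your added justification for excluding \textsf{Conf 8--11} via the defining property of \textsf{Conf 4--7} is precisely the point the paper handles only implicitly, so the approach is the same.
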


\begin{lemma}\label{Conf72}
{\sf Conf 7} $\xrightarrow{MC}$ {\sf Conf 2},  {\sf Conf 7} $\xrightarrow{MC}$ {\sf Conf 3}, 
 {\sf Conf 7} $\xrightarrow{MC}$ {\sf Conf 5},
\\or {\sf Conf 7} $\xrightarrow{MC}$
{\sf Conf 6}.
\end{lemma}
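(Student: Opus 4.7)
The plan is to reuse the structure of the argument for Algorithm 1's analogous statement (Lemma~\ref{Conf7}), substituting Lemma~\ref{Reduction-by-one-for-alg-2} for Lemma~\ref{Reduction-by-one} wherever a border's single-step advance is invoked. Recall that \textsf{Conf 7} features one border node $u_i$ with Red and White robots but no outdated robots, together with a border $u_j$ in the middle of completing its move from a previous Blue-singly-colored state, so $u_j$ may simultaneously host outdated Blue robots, outdated Red robots, and Red robots without outdated views. Because \textsf{Conf 7} belongs to the regime in which each border's forthcoming event is the absorption of White robots on the adjacent occupied node, Lemma~\ref{Reduction-by-one-for-alg-2} guarantees that once the Whites in $u_i$ are converted via R4a the resulting Red-singly-colored border fires R2b and moves while flipping to Blue, and symmetrically for $u_j$'s in-progress Blue border which completes as a Red-plus-White border; the color-alternation behavior therefore matches the Algorithm~1 analysis exactly.

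With this machinery, I carry out a case split on which border completes its mega-cycle step first. If only $u_i$ advances while $u_j$ stays in progress, the new $u_i$ is Blue-plus-White without outdated views and $u_j$ is still in progress from Blue, so the system lies in \textsf{Conf 6}. If only $u_j$ completes its in-progress motion and $u_i$ has not yet been activated, $u_j$ emerges as Red-plus-White without outdated views, matching the still-unchanged $u_i$, so we are in \textsf{Conf 2}; if $u_j$ completes while $u_i$ has already begun moving, $u_i$ is in progress from Red and $u_j$ is Red-plus-White, so we are in \textsf{Conf 5}. Finally, if both borders complete their steps simultaneously, $u_i$ finishes as Blue-plus-White and $u_j$ as Red-plus-White, producing the mixed \textsf{Conf 3}.

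The main obstacle, compared with Algorithm~1, is the finer rule menu of Algorithm~2: a Red border robot on a singly-colored node may fire R2a-1, R2a-2, or R2b depending on whether the adjacent node is empty, non-White-occupied, or White-occupied (and symmetrically for Blue). Within a \textsf{Conf 7} mega-cycle, however, the adjacent occupied node of each advancing border hosts at least one White robot, so only R2b and R3b are enabled and the color flip plus forward move coincide with Algorithm~1's behavior. A subsidiary check is that outdated robots lingering at $u_j$ cannot accidentally trigger a no-White rule after the remaining Whites have already left: this is ruled out because their stored snapshot still reflects the pre-move Blue-singly-colored view, so Lemma~\ref{Reduction-by-one-for-alg-2} forces them to complete the initiated transition faithfully. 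Assembling these observations closes the case analysis and establishes the claim.
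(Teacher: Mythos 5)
Your proof matches the paper's intended argument: the paper does not write out a proof for this lemma, instead stating that it follows from the corresponding Algorithm~1 lemma (Lemma~\ref{Conf7}, itself argued "similarly" to Lemma~\ref{Conf4}) with Lemma~\ref{Reduction-by-one-for-alg-2} substituted for Lemma~\ref{Reduction-by-one}, and your reconstruction carries out exactly that case split on which border completes first, landing on the correct targets ({\sf Conf 6}, {\sf Conf 2}, {\sf Conf 5}, {\sf Conf 3}). Your added observation that the {\sf Conf 8}--{\sf 11} branches cannot arise here--because in the {\sf Conf 4}--{\sf 7} regime the advancing borders' next steps are absorptions of White robots, so only R2b/R3b fire and the color alternation coincides with Algorithm~1--is precisely the property the paper invokes to omit those destinations from this lemma's statement.
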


\begin{lemma}\label{Conf52}
{\sf Conf 5} $\xrightarrow{MC}$ {\sf Conf 3},  {\sf Conf 5} $\xrightarrow{MC}$ {\sf Conf 4}, 
or {\sf Conf 5} $\xrightarrow{MC}$ {\sf Conf 1}.
\end{lemma}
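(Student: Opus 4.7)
The plan is to mirror the argument of Lemma~\ref{Conf5} (Algorithm~1) while adapting it to Algorithm~2's richer rule set (in particular R2a-1, R2a-2, R3a-1, R3a-2 which let a border shift onto an empty node or a same-colored non-White node without flipping its color) and to the revised mega-cycle definition, which ends precisely when some border absorbs White robots. Let $u_i$ be the border of {\sf Conf 5} holding Red and White robots with no outdated views, and let $u_j$ be the opposite border still in progress from a singly-colored Red border. By the way {\sf Conf 5} is reached, the robots leaving $u_j$ were executing R2b toward a node $v$ occupied by White robots, so once they finish moving (and the White robots at $v$ respond by R4b), $u_j$ stabilizes as a Blue+White border.

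First I would invoke Lemma~\ref{mul2sing} to observe that the White robots at $u_i$ eventually execute R4a, so $u_i$ becomes singly-colored Red; then Lemma~\ref{Reduction-by-one-for-alg-2} guarantees that every robot at $u_i$ leaves: if the neighbor contains White robots, rule R2b is taken and the mega-cycle ends there with a Blue+White border absorbing $v'$; otherwise R2a-1 or R2a-2 fires and $u_i$'s robots drift inward without a color change until they reach the nearest W-occupied node, at which point R2b triggers absorption and closes the mega-cycle. Either way, the mega-cycle can only close by an absorption event, matching the definition.

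Next I would perform the standard case split on the interleaving of the two borders' progress within the mega-cycle: (a) only $u_i$ absorbs White robots while $u_j$'s migration has not yet finished, in which case the result has a Blue+White stable border paired with an in-progress Red border, i.e.\ {\sf Conf 4}; (b) only $u_j$ finishes (becoming Blue+White) while $u_i$ has not yet started, yielding two stable Red+White/Blue+White borders with no outdated views, which is {\sf Conf 3} (with the Red/Blue roles swapped from the usual depiction, but matching {\sf Conf 3}'s pattern); (c) $u_j$ completes while $u_i$ is now the one mid-movement, which again produces a stable/in-progress pair, i.e.\ {\sf Conf 4}; (d) both absorptions complete in this mega-cycle, giving {\sf Conf 1}. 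These are precisely the claimed target configurations.

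The main obstacle will be step~(d) and excluding transitions into {\sf Conf 8-11}. For the exclusion, I must argue that whenever a border of {\sf Conf 5} absorbs White robots, the opposite border is still adjacent to a W-occupied node at the moment of absorption: this is immediate for $u_j$'s absorption because $u_j$ is already moving onto a W node, and it is immediate for $u_i$'s absorption by the same reasoning on its side, so neither side ever lands in the "far from any W" configurations that characterize {\sf Conf 8-11}. The subtle part is checking that multi-step R2a-1/R2a-2 traversals between White-bearing nodes do not introduce outdated Red-colored movers that would pollute the destination state; this is handled by noting that such in-transit robots, upon looking, still see White ahead and thus execute R2b at the correct moment, preserving the color pattern required by {\sf Conf 1}, {\sf Conf 3}, or {\sf Conf 4}.
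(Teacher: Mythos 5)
Your proposal is correct and takes essentially the same route as the paper: the paper gives no standalone proof of this lemma, asserting only that it follows by repeating the case analysis of Lemma~\ref{Conf5} (which border completes its absorption first, or both simultaneously) with Lemma~\ref{Reduction-by-one-for-alg-2} in place of Lemma~\ref{Reduction-by-one}, which is exactly what you do. Your explicit argument for excluding {\sf Conf 8-11} -- that at the moment either border of {\sf Conf 5} absorbs White robots the opposite border is still adjacent to a White-occupied node -- is precisely the defining property the paper assigns to {\sf Conf 4-7} in its transition diagram, so on that point you are, if anything, more detailed than the original.
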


\begin{lemma}\label{Conf62}
{\sf Conf 6} $\xrightarrow{MC}$ {\sf Conf 2},  {\sf Conf 6} $\xrightarrow{MC}$ {\sf Conf 3}, 
or {\sf Conf 6} $\xrightarrow{MC}$ {\sf Conf 7}.
\end{lemma}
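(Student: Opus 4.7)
The plan is to adapt the proof of Lemma~\ref{Conf52} using the Red $\leftrightarrow$ Blue symmetry of Algorithm~2's transition diagram, then check that Algorithm~2's additional families {\sf Conf 8--11} do not arise from {\sf Conf 6}. By the defining picture of {\sf Conf 6}, one border $u_i$ hosts Blue and White robots without outdated views, while the other border $u_j$ is in the middle of its advance from a former singly-colored Blue border; concretely, $u_j$ still carries Blue robots whose R3 cycle (look/compute/move) has not yet completed.

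I would split the next mega-cycle into three cases according to which side absorbs White robots first. \textbf{Case 1: only $u_i$'s robots act.} The White robots at $u_i$ adopt Blue via R4b, making $u_i$ singly-colored Blue, and by Lemma~\ref{Reduction-by-one-for-alg-2} the Blue robots fire R3b (the inward neighbor $v$ of $u_i$ is an occupied White-hosting node) to move onto $v$ while recoloring Red. Together with the still-in-progress border at $u_j$, this yields {\sf Conf 7}. \textbf{Case 2: only $u_j$'s robots complete their advance.} The pending Blue-with-outdated robots at $u_j$ finish their R3 execution, and by Lemma~\ref{Reduction-by-one-for-alg-2} the border relocates to $u_j$'s inward neighbor $w$ as Red+White. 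If the robots at $u_i$ have not yet looked, we obtain the pair (Blue+White, Red+White) $=$ {\sf Conf 3}; if they have meanwhile begun their own R3 advance, we obtain the in-progress pair $=$ {\sf Conf 7}. \textbf{Case 3: both borders complete simultaneously.} Both new borders end up Red+White, giving {\sf Conf 2}.

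The main obstacle --- and the reason the target list excludes {\sf Conf 8--11} --- is to verify that each advancing border eventually absorbs a White-hosting neighbor rather than marching across a sequence of empty or non-White-only nodes at the end of the mega-cycle. This rests on the invariant that every non-border occupied node in $\mathcal{G}^\prime$ still contains at least one White robot, which is secured by Lemma~\ref{NB2} (no rule ever fires on a non-border robot, so interior White robots survive unchanged until the border reaches them), together with the structural observation that in {\sf Conf 6} the inward neighbor of $u_i$ and the next originally-occupied node beyond $u_j$ both lie in $\mathcal{G}^\prime$ and are still White-hosting. Under this invariant the first clause of Lemma~\ref{Reduction-by-one-for-alg-2} applies at each mega-cycle termination, keeping the configuration inside {\sf Conf 1--7} and completing the case analysis.
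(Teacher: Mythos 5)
Your proposal takes essentially the same route as the paper, which establishes this lemma only by analogy with Lemma~\ref{Conf52} (itself the Red--Blue dual of the argument written out for Algorithm~1's {\sf Conf 5}) together with the blanket remark that Lemma~\ref{Reduction-by-one-for-alg-2} replaces Lemma~\ref{Reduction-by-one} and that {\sf Conf 8--11} arise only when the non-absorbing border is not adjacent to a White-occupied node; your three-case split and the resulting targets {\sf Conf 7}/{\sf Conf 3}/{\sf Conf 2} are exactly that dual. One small imprecision: the inward neighbor of $u_i$ need not be White-hosting (holes of size up to $H_{init}\le\phi$ may intervene), so the advancing border may first execute R3a-1/R3a-2 across empty or White-free nodes before R3b fires at the first White-occupied node --- this is exactly the second clause of Lemma~\ref{Reduction-by-one-for-alg-2} and does not change your conclusions.
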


The following lemmas treat transitions from {\sf Conf 8-11} and these proofs can be shown similarly.

\begin{lemma}\label{Conf82}
{\sf Conf 8} $\xrightarrow{MC}$ {\sf Conf 9},  {\sf Conf 8} $\xrightarrow{MC}$ {\sf Conf 1}, 
or {\sf Conf 8} $\xrightarrow{MC}$ {\sf Conf 4}.
\end{lemma}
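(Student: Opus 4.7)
The plan is to follow exactly the same template used in Lemmas~\ref{Conf42}--\ref{Conf62}, which analyze one-mega-cycle transitions in Algorithm~2. The main tool is Lemma~\ref{Reduction-by-one-for-alg-2}, which tells us precisely what happens when a singly-colored Red or Blue border moves: if the neighboring node contains a White robot then $\#O_W$ (and $D$) decrease by at least one and a color flip occurs, while otherwise the border just slides inward (changing $D$ but not $\#O_W$, and not changing colors). I would first restate the contents of {\sf Conf 8} in plain terms, naming the two borders $u_i$ and $u_j$: one border, say $u_i$, is in the ``post-absorption'' state (a singly-colored or Red\,+\,White node whose next neighbor toward the center is an occupied node containing a White robot, so it is ready to be the border that next decreases $\#O_W$), while the other border $u_j$ is still in the middle of its ongoing inward movement from the previous mega-cycle, and critically its own adjacent inward node holds no White robot.

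Next, I would do the standard case split on which border's activity is the one that actually terminates the mega-cycle (i.e.\ causes $\#O_W$ to decrease). Case~A: only $u_i$ completes its absorption in this mega-cycle while $u_j$ performs no $\#O_W$-reducing step. By Lemma~\ref{mul2sing} $u_i$ first becomes singly-colored, then by Lemma~\ref{Reduction-by-one-for-alg-2} its robots move onto the next occupied node, flip color, and join the White robots there; the resulting left border now has Red/Blue together with Whites (of the same flavor as in {\sf Conf 4} / {\sf Conf 9}) while the right border $u_j$ remains in the same in-progress state, yielding {\sf Conf 9}. Case~B: $u_j$ first slides through its white-free neighbor (by R2a-1/R3a-1 or R2a-2/R3a-2, no color change and no $\#O_W$ decrease), and the mega-cycle ends when either $u_j$ finally reaches a White-bearing node and absorbs it, or $u_i$ completes its absorption in the meantime. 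Case~C: both borders complete $\#O_W$-reducing moves (possibly with outdated robots still trailing behind).

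In each sub-case the resulting configuration is identified by reading off the two new border states and comparing with the definitions in Figure~\ref{TD2}: when both sides end the mega-cycle in the ``clean'' post-absorption state with no trailing outdated robots, the configuration is {\sf Conf 1}; when the $u_i$ side is post-absorption and the $u_j$ side is still in-progress but already adjacent to an occupied White node, the configuration is {\sf Conf 4}; and when $u_i$ absorbs while $u_j$ is in-progress and still not adjacent to a White node, the configuration is {\sf Conf 9}. The fact that no other target is reachable follows because $\#O_W$ strictly decreases, so configurations of type {\sf Conf 2,3,5,6,7,8,10,11} (which would require both borders to be in certain combined states inconsistent with the starting shape of {\sf Conf 8}) are ruled out by inspection.

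The only genuinely subtle point, and the place where I would spend the most care, is the bookkeeping for outdated robots at $u_j$: one must verify that after $u_j$'s slide through a white-free neighbor the lingering outdated robots can only execute R2a/R3a variants (never R2b/R3b or R5 variants), which is exactly the reason the color flip is suppressed in that step and why the target configuration is classified among {\sf Conf 8-11} rather than among {\sf Conf 4-7}. Once that invariant is checked using the guards of Algorithm~\ref{alg2b}, the case analysis collapses to the three listed transitions and the lemma follows.
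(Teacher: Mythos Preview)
Your approach is essentially the paper's own: the paper gives no explicit proof of this lemma but simply states that ``these proofs can be shown similarly'' to Lemmas~\ref{Init2}--\ref{Conf62}, and your proposal does exactly that, reusing the case-split template (which border absorbs first, with Lemma~\ref{Reduction-by-one-for-alg-2} governing each step) that drives all the preceding transition lemmas. One small caution: your description of the clean border $u_i$ as having ``next neighbor toward the center \ldots\ containing a White robot'' is too strong for Algorithm~2, since in this algorithm the clean border may itself need to slide through empty nodes before reaching the next White-occupied node; accounting for this does not change the target set $\{{\sf Conf\ 9},{\sf Conf\ 1},{\sf Conf\ 4}\}$, but you should phrase the case analysis so that $u_i$'s own slide steps (via R2a-1/R2a-2) are handled symmetrically with $u_j$'s.
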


\begin{lemma}\label{Conf92}
{\sf Conf 9} $\xrightarrow{MC}$ {\sf Conf 8},  {\sf Conf 9} $\xrightarrow{MC}$ {\sf Conf 11}, 
 {\sf Conf 9} $\xrightarrow{MC}$ {\sf Conf 3},
{\sf Conf 9} $\xrightarrow{MC}$ {\sf Conf 5},
or {\sf Conf 9} $\xrightarrow{MC}$ {\sf Conf 6}.
\end{lemma}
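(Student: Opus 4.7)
The plan is to follow the template established in Lemmas~\ref{Conf12}--\ref{Conf82} by enumerating the activation orderings at the two borders $u_i$ and $u_j$ during the next mega-cycle. In {\sf Conf 9}, one border --- say $u_j$ --- has just absorbed White robots in the previous mega-cycle and currently hosts colored plus White robots (possibly with outdated views), while the other border $u_i$ is not adjacent to any occupied node containing a White robot, so its next move (via R2a-1, R2a-2, R3a-1, or R3a-2) does not decrease $\#O_{W}$ and does not change its color.

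First I would analyze the case in which only $u_j$'s side completes an absorption in the current mega-cycle. By the same argument used inside the proof of Lemma~\ref{Reduction-by-one-for-alg-2}, once the White robots on $u_j$ execute R4a or R4b, $u_j$ becomes singly-colored, and its robots move to the neighboring occupied node by R2 or R3. The ending configuration then depends on (a) how far $u_i$ has traveled through White-free nodes by the color-preserving rules R2a-*/R3a-* and (b) whether the new neighbor of $u_i$ now hosts a White robot. If $u_i$ is itself poised to absorb a White-occupied neighbor at the same moment, the result is {\sf Conf 3}. If $u_i$ has reached a node whose next neighbor is White-occupied but has not yet entered it, the result is {\sf Conf 5} or {\sf Conf 6} according to the color carried by the $u_i$-side after its last color flip. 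If $u_i$ remains separated from any White-occupied node, the result is {\sf Conf 8} or {\sf Conf 11}. The symmetric case in which $u_i$ performs the absorption that ends the mega-cycle (possibly after traversing several intermediate nodes via R2a-*/R3a-*) yields, by the same enumeration, the same list of possible outcomes; and the case where both borders complete an absorption simultaneously gives {\sf Conf 3}.

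The main obstacle will be the careful bookkeeping of outdated robots during the color-preserving traversal rules. Because a border robot executing R2a-* or R3a-* keeps its color as the border position slides, I must maintain the invariant that outdated robots still at the former border are either able to catch up via R2a-*/R3a-* (when their next-neighbor view contains no White robot) or by joining the freshly singly-colored node ahead of them through the appropriate R2/R3 or R4 rule. Lemma~\ref{NB2} guarantees that no non-border robot ever interferes, so every enabled action in the mega-cycle is accounted for. Once these invariants are in place, exhaustively listing the resulting color pairs at the two borders and matching them against the pictorial definitions of {\sf Conf 3, 5, 6, 8, 11} closes the case analysis and yields exactly the five transitions claimed in the lemma.
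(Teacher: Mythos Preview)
Your proposal is correct and follows essentially the same approach the paper indicates: the paper gives no detailed proof for Lemma~\ref{Conf92} at all, merely stating that the transitions from {\sf Conf 8--11} ``can be shown similarly'' to the earlier lemmas, and your case analysis on which border completes the next absorption (using Lemma~\ref{Reduction-by-one-for-alg-2} and Lemma~\ref{NB2}) is exactly that template applied to {\sf Conf 9}. One minor imprecision: by the paper's convention for {\sf Conf 4--11}, the border that has just absorbed White robots (your $u_j$) hosts colored plus White robots \emph{without} outdated views at the start of the mega-cycle, so the parenthetical ``possibly with outdated views'' belongs on the $u_i$ side (the border still working through White-free nodes), not on $u_j$.
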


\begin{lemma}\label{Conf102}
{\sf Conf 10} $\xrightarrow{MC}$ {\sf Conf 8}, 
{\sf Conf 10} $\xrightarrow{MC}$ {\sf Conf 11},  {\sf Conf 10} $\xrightarrow{MC}$ {\sf Conf 3}, 
{\sf Conf 10} $\xrightarrow{MC}$ {\sf Conf 5},
or {\sf Conf 10} $\xrightarrow{MC}$ {\sf Conf 6}.
\end{lemma}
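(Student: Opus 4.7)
My plan is to reuse the case-analysis technique of Lemmas~\ref{Conf42}--\ref{Conf92}, adapted to the specific shape of {\sf Conf 10} as drawn in Figure~\ref{TD2}. By the general classification of {\sf Conf 8-11}, the configuration {\sf Conf 10} has one border, call it $u_j$, that terminated the previous mega-cycle by absorbing White robots (so its robots sit on a singly-colored node without outdated views), while the opposite border $u_i$ is mid-movement and its inner neighbour contains no White robot (this empty-neighbour feature is what separates {\sf Conf 8-11} from {\sf Conf 4-7}). Throughout, $D>1$ is assumed since the diagram only describes mega-cycles prior to the endgame.

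First I would discard non-border activity via Lemma~\ref{NB2}. Then, applying Lemma~\ref{Reduction-by-one-for-alg-2} to each border independently yields two local behaviours: a singly-colored border whose inner neighbour is empty moves without changing color by R2a-1 or R3a-1 (so $\#O_W$ does not decrease), whereas a singly-colored border whose inner neighbour holds a White robot fires R2b or R3b (a color swap and a decrease of $\#O_W$ by one). Since a mega-cycle ends precisely when $\#O_W$ drops by at least one, I can describe the mega-cycle as a finite sequence of empty-neighbour hops followed by a single absorbing hop at one of the two borders.

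The case analysis then splits on which border performs the mega-cycle-ending absorption and on the state of the opposite border at that instant. Each leaf of the tree is matched to one of the pictorial destinations {\sf Conf 8}, {\sf Conf 11}, {\sf Conf 3}, {\sf Conf 5}, {\sf Conf 6} in Figure~\ref{TD2} by inspecting (i) whether outdated robots still remain at either border and (ii) whether the just-absorbed border is now singly-colored or mixed. The symmetry of R2/R3 across Red and Blue, together with the usual parity bookkeeping already used in Lemmas~\ref{Conf82} and~\ref{Conf92}, ensures the enumeration closes on exactly the five targets listed.

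The step I expect to be the main obstacle is ruling out any premature firing of the Purple-producing rules R5a, R5b-1, R5b-2, R5b-3 while the outdated robots that remain around $u_j$ drift into its empty inner neighbour, because transiently it may look as if two border nodes have approached each other. However, since $D>1$ throughout the diagram, the guard $\emptyset^{\phi-1}$ appearing on the right of every R5-family rule fails and Purple cannot appear during this mega-cycle; analogously, R1 does not re-apply because $u_j$ is no longer White. Once this invariant is discharged, the remainder is essentially the sides-swapped version of the argument for Lemma~\ref{Conf92}, and the stated five outcomes are exhaustive.
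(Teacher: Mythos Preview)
Your approach is essentially the one the paper intends: the paper gives no explicit proof for Lemmas~\ref{Conf82}--\ref{Conf112}, stating only that ``these proofs can be shown similarly'' to the earlier transition lemmas, and your case analysis (track which border performs the absorbing hop, invoke Lemma~\ref{Reduction-by-one-for-alg-2} on each side, match the outcome to Figure~\ref{TD2}) is exactly that template.

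One imprecision to fix: you describe the border $u_j$ that ``terminated the previous mega-cycle by absorbing White robots'' as a \emph{singly-colored} node without outdated views. Immediately after an absorption this node carries both the border color (here Red, since {\sf Conf 10} is reached from {\sf Conf 1} or {\sf Conf 11}) \emph{and} the freshly absorbed White robots; compare the paper's description of {\sf Conf 9} in the proof of Lemma~\ref{conf3-9-10}, where the analogous border is ``occupied by Blue and White robots without outdated views''. So Lemma~\ref{Reduction-by-one-for-alg-2} does not apply at $u_j$ directly; you first need the analogue of Lemma~\ref{mul2sing} (via R4a/R4b, which are identical in Algorithm~2) to collapse $u_j$ to a single color. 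This extra step does not enlarge your set of destinations, but omitting it leaves a gap in the argument that $u_j$ eventually moves at all during the mega-cycle.

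Your handling of the R5 family is slightly loose as well: the reason the $\emptyset^{\phi-1}$ guard fails is not that ``$D>1$ throughout the diagram'' (indeed $D$ may shrink mid-cycle), but that while $\#O_W\geq 1$ there is a White-occupied non-border node strictly between the two borders, which either sits in the $\emptyset^{\phi-1}$ window or is the immediate neighbour (in which case none of the R5b color patterns match and no Purple exists for R5a). Once that White node is absorbed the mega-cycle ends, so R5 never fires inside it. With these two clarifications your enumeration closes on exactly the five targets, matching the paper.
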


\begin{lemma}\label{Conf112}
{\sf Conf 11} $\xrightarrow{MC}$ {\sf Conf 10},  {\sf Conf 11} $\xrightarrow{MC}$ {\sf Conf 2}, 
or {\sf Conf 11} $\xrightarrow{MC}$ {\sf Conf 7}.
\end{lemma}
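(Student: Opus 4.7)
The plan is to mimic the case-analysis structure used in the proofs of Lemmas~\ref{Conf7} and \ref{Conf72}, and to rely on Lemma~\ref{Reduction-by-one-for-alg-2} to describe each border's per-step behavior inside the mega-cycle. By the symmetry in Figure~\ref{TD2}, Conf 11 plays the same role relative to Conf 2 and Conf 7 that Conf 3 or Conf 7 plays in the ``white-neighbor'' family; the essential new fact is that when the lagging border next moves, its neighbor in $\mathcal{G}^\prime$ contains no White robot, so by Lemma~\ref{Reduction-by-one-for-alg-2} the lagging border advances by R2a-1 or R3a-1 (or R2a-2/R3a-2 after meeting outdated robots) \emph{without} changing color.

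First I would fix notation: let $u_i$ be the border holding the ``already acted'' side (with the characteristic Red/Blue-plus-outdated-view pattern inherited from the previous mega-cycle) and $u_j$ be the border that still carries a singly-colored border node ready to act; Conf 11's defining feature (versus Conf 7) is that the next node inward from $u_j$ is not white-occupied. Within the mega-cycle I would split on which side absorbs White robots first (mega-cycles end exactly when $\#O_W$ drops by at least one):
\textbf{(a)} Only $u_i$'s absorption completes during the mega-cycle. Lemma~\ref{mul2sing} says the outdated robots at $u_i$ resolve into a singly-colored border that then applies the absorption rule R2b or R3b; the resulting configuration has $u_i$'s neighbor now carrying a mixed Red/Blue+White content while $u_j$ still holds its Conf-11 pattern. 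This matches the description of Conf 10.
\textbf{(b)} Only $u_j$ advances, using R2a-1/R3a-1, and stops one step in without touching any White robot; but since $\#O_W$ then did not drop on that side, the mega-cycle only ends once $u_i$ absorbs. In that combined step $u_i$ becomes an absorbing border and $u_j$ remains in a mid-motion state with no White robot adjacent to its new position, which is precisely Conf 7 after relabeling the two borders.
\textbf{(c)} Both borders complete absorbing actions during the same mega-cycle; symmetry then places the configuration into Conf 2.

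The main obstacle will be checking that outdated robots left over at $u_i$ cannot create a configuration outside $\{$Conf 10, Conf 2, Conf 7$\}$. For this I would argue, as in the proof of Lemma~\ref{Conf8}, that every outdated Red or Blue robot at $u_i$ is eventually disabled against any rule except the one it was already committed to under its stale view (R2a-2/R3a-2 or the color-changing variants), so these robots always finish by joining the singly-colored border node of the correct color. Consequently the only possible successor states are the three listed in the lemma, and the transition relation is exactly $\text{Conf 11}\xrightarrow{MC}\text{Conf 10}$, $\text{Conf 11}\xrightarrow{MC}\text{Conf 2}$, or $\text{Conf 11}\xrightarrow{MC}\text{Conf 7}$.
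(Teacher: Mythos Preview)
Your overall strategy---splitting on which border first absorbs a White node, driven by Lemma~\ref{Reduction-by-one-for-alg-2}---is exactly what the paper intends by ``shown similarly'' and mirrors the explicit proofs of Lemmas~\ref{Conf4} and~\ref{Conf5}. But the execution has two concrete problems.

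First, your labeling of the two borders is muddled. In {\sf Conf 11} (as in all of {\sf Conf 8--11}), the paper's convention is that the \emph{completed} border carries a colored-plus-White node with no outdated Red/Blue robots, while the \emph{working} border is the one carrying outdated Red/Blue robots \emph{and} having no White-occupied neighbor. You put the outdated-view pattern on $u_i$ but place the ``no White neighbor'' property on $u_j$; those two features belong to the \emph{same} border. Neither side is singly-colored at the start of the mega-cycle, and Lemma~\ref{mul2sing} (which you invoke in case~(a)) concerns White-plus-color nodes, not the resolution of outdated Red/Blue robots.

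Second, case~(b) is internally inconsistent: you describe the moving border ending ``with no White robot adjacent to its new position'' and conclude {\sf Conf 7}, but {\sf Conf 7} lies in the {\sf Conf 4--7} family, whose defining feature is precisely that the working border \emph{does} have a White-occupied neighbor. The state you describe is {\sf Conf 10}. What actually separates the {\sf Conf 7} and {\sf Conf 10} outcomes is whether, at the instant the mega-cycle ends, the still-working border's current target contains a White robot; your cases (a) and (b) both terminate with the same border absorbing and never draw this distinction, nor do they cover the scenario where the working border reaches a White node first. Redoing the split along the lines of Lemma~\ref{Conf4}'s proof---only the completed side absorbs, only the working side absorbs (subdivided by the other side's progress), or both absorb---and then checking the working border's neighbor in each branch recovers the three destinations cleanly.
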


\begin{lemma}[Lemma 12 in the main part]\label{RB32}
From the initial configuration, $\#O_{W}$ 
decreases monotonically and eventually becomes at most one.
\end{lemma}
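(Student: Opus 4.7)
The plan is to follow the template of the analogous statement for Algorithm~1 (Lemma~\ref{RB3}): once the mega-cycle transition lemmas are in hand, the target lemma reduces to a routine well-founded descent on the non-negative integer $\#O_{W}$.

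First, I would observe that by the very definition of a mega-cycle, each arrow in the diagram of Figure~\ref{TD2} strictly decreases $\#O_{W}$ by at least one (singly-lined arrows) or by two (doubly-lined arrows). Lemmas~\ref{Init2}--\ref{Conf112} exhaustively cover all outgoing mega-cycle transitions from every reachable configuration in $\{\mathsf{Init}, \mathsf{Conf\ 1}, \dots, \mathsf{Conf\ 11}\}$ while $D>1$, so the mega-cycle framework applies throughout this regime. Combining these lemmas with Lemma~\ref{Reduction-by-one-for-alg-2}, which guarantees that a singly-colored border node makes inward progress upon activation and eventually absorbs the nearest occupied node with White robots, and with fairness of the scheduler, I obtain that as long as $\#O_{W}\ge 2$ and $D>1$, some mega-cycle transition must eventually fire; in particular, the system cannot stall in any state from which $\#O_{W}$ would fail to drop further.

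Second, I would establish monotonicity: no action of Algorithm~2 can cause $\#O_{W}$ to \emph{increase} between two mega-cycle boundaries. No rule ever repaints a robot back to White, and the only recolorings that touch White robots (rules R4a and R4b) convert White to Red or Blue at border nodes, which can only reduce the count or leave it unchanged. A border move via any rule from R2 or R3 fires only at a singly-colored border node (guards $[R!]$ or $[B!]$), so no White robot is ever transported to the destination node and the count cannot rise as a side-effect.

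Combining these two observations, $\#O_{W}$ is a non-negative integer that strictly decreases across consecutive mega-cycles, starting from $O_{init}-2$. Hence after at most $O_{init}-2$ mega-cycles it drops strictly below~$2$ and is therefore at most one, as claimed. The main obstacle---verifying that every reachable configuration with $D>1$ and $\#O_{W}\ge 2$ actually belongs to the diagram of Figure~\ref{TD2} and admits an outgoing transition in it---is already discharged by Lemmas~\ref{Init2}--\ref{Conf112}; beyond that, the argument is a standard well-founded termination, which I would present in one or two lines.
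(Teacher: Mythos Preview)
Your proposal is correct and follows essentially the same approach as the paper: the paper's own proof is a two-line appeal to Lemmas~\ref{Init2}--\ref{Conf112}, noting that each mega-cycle decreases $\#O_{W}$ by at least one. Your version is more explicit about monotonicity between mega-cycle boundaries and about why the transition diagram applies whenever $\#O_{W}\ge 2$, but these are elaborations of the same argument rather than a different route.
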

\begin{proof}
By Lemmas~\ref{Init2}-\ref{Conf112}, in each mega-cycle, the number of occupied nodes with White robots between two borders decreases by at least one.
Thus, the lemma holds.
\end{proof}


\begin{lemma}[Lemma 13 in the main part]\label{EO2}
Let $h$ be the number of occupied nodes where an original border robot $r_h$ absorbed White robots in $\mathcal{G}^\prime$ from the initial configuration and let $u_h$ denote the current border node $r_h$ is located.    
If $h$ is odd (resp. even), $r_h$'s light is Blue (resp. Red) when $r_h$ comes into $u_h$.
\end{lemma}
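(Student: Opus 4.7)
The plan is to prove the lemma by induction on $h$. The key observation is that, in Algorithm 2, the light of a Red or Blue border robot toggles exactly when it absorbs White robots: rules R2a-1, R2a-2 (applicable to a Red border whose target neighboring node contains no White robot) and R3a-1, R3a-2 (the Blue analogue) move $r_h$ without touching its color, whereas R2b and R3b---the only two rules that modify a Red/Blue border's color during movement---fire precisely when the adjacent occupied node hosts a White robot, i.e.\ at an absorption event counted by $h$. Each invocation of R2b or R3b flips Red $\leftrightarrow$ Blue, so absorption events and color-toggle events are in one-to-one correspondence.

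For the base case $h=0$, $r_h$ has not yet absorbed any White robot and still sits at its original border node. Initially its color is White; by Lemma~\ref{NB2} only border robots can act, and the sole rule enabled at a singly-colored White border is R1, which sets the color to Red. Hence when $h=0$ (even) the color is Red, as required. For the inductive step, assume that after the $(h-1)$-th absorption $r_h$'s light is Red if $h-1$ is even and Blue if $h-1$ is odd. Between this absorption and the next, $r_h$ can only advance through R2a-$\ast$ or R3a-$\ast$, none of which alters its light; rules R4a, R4b may change the color of fellow-travelers that joined at intermediate border nodes, but they cannot act on $r_h$ itself (which is already Red or Blue). By definition of $h$, the next absorption move fires either R2b or R3b and flips the color, yielding Blue when $h$ is odd and Red when $h$ is even.

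The main obstacle is to justify that no rule other than R2b/R3b secretly modifies $r_h$'s light while the absorption count is being built up. One must verify by case inspection that R4a and R4b only apply to White robots (so never to $r_h$), and that the R5a, R5b-$\ast$ family becomes enabled only in the neighboring-border regime that Lemma~\ref{RB32} situates at the very end of the execution---hence strictly after all absorption events counted by $h$ have occurred, so these rules cannot interfere with the alternation. A short inspection of the mega-cycle transitions summarized in Lemmas~\ref{Init2}--\ref{Conf112}, combined with Lemma~\ref{Reduction-by-one-for-alg-2} (which guarantees that every border move is either a color-preserving hop over empty or non-White neighboring nodes, or a color-toggling White absorption), then propagates the parity invariant from one mega-cycle to the next and completes the induction.
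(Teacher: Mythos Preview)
The paper actually states Lemma~\ref{EO2} without proof; the only guidance is the analogous Lemma~\ref{EO} for Algorithm~1, whose two-line proof simply cites the transition lemmas and observes that the border color alternates Red/Blue at each hop. Your induction on $h$ is therefore not competing with any explicit argument in the paper---it is filling in what the authors leave implicit---and the approach you take is the natural one: identify R2b/R3b as the exact rules that both (i) trigger precisely when the target node contains a White robot and (ii) toggle the color, while R2a-$\ast$/R3a-$\ast$ preserve the color.

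Your argument is correct. The one place worth tightening is the sentence ``between this absorption and the next, $r_h$ can only advance through R2a-$\ast$ or R3a-$\ast$.'' For this you implicitly need the fact that if the target node $v$ was originally occupied, then $v$ still hosts a White robot at the instant $r_h$ takes its snapshot. This holds because as long as some robot remains on the old border node $u$, node $v$ is \emph{not} a border (it has $u$ on one side within distance $\phi$), so the White robots on $v$ cannot fire R4 and therefore remain White; hence every robot leaving $u$ toward such a $v$ necessarily matches R2b/R3b and toggles. Conversely, if $v$ was originally empty, the first movers use R2a-1/R3a-1 and keep their color, so $v$ then contains robots of the \emph{same} color as $u$, enabling the stragglers to match R2a-2/R3a-2 (whose guard requires the same color, not the opposite, on the neighbor). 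Making this case split explicit would close the only gap a reader might perceive; the rest of your reasoning---the base case via R1, the irrelevance of R4 to $r_h$, and the confinement of the R5 family to the $D\le 1$ endgame by Lemma~\ref{RB32}---is sound.
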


It can be easily verified by Figures~\ref{TD} and~\ref{TD2} (Lemmas~\ref{Init2}-\ref{Conf112}) and Lemmas~\ref{RB32}--\ref{EO2} that the following configurations occur 
when 
$\#O_{W}$ becomes at most one.

\begin{enumerate}
    \item {\sf Conf 1}, {\sf Conf 2}, {\sf Conf 5} and {\sf Conf 6} ($D \geq 2$ and $\#O_{W}=1$).
    \item  {\sf Conf 3} ($D \geq 1$ and $\#O_{W}=0$) and {\sf Conf 9} ($D \geq 2$ and $\#O_{W}=0$), and {\sf Conf 10} ($D \geq 2$ and $\#O_{W}=0$).
\end{enumerate}

In the former case, for configurations {\sf Conf 1,2,5,6}($D=2$ and $\#O_{W}=1$), we have the following lemma, where {\sf Conf RW-MB} and {\sf Conf BW-MR} have been defined in the proof for Algorithm~1.
In {\sf Conf 3}($D=1$), both border nodes
may contain White robots with outdated views
and contain no White robots (Figure~\ref{fig:RWBW}).

\begin{figure}
    \centering
    \begin{minipage}[t]{0.2\linewidth}\centering
    \includegraphics[keepaspectratio, scale=0.15]{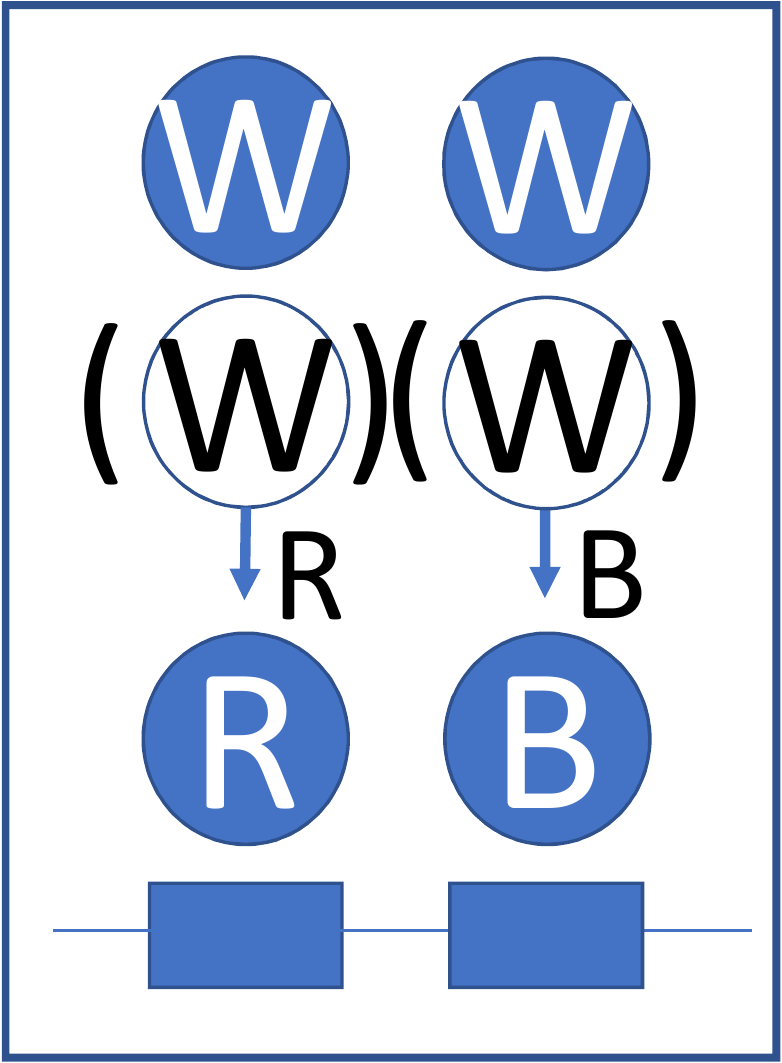} 
    \end{minipage}
    \begin{minipage}[t]{0.2\linewidth}\centering
    \includegraphics[keepaspectratio, scale=0.15]{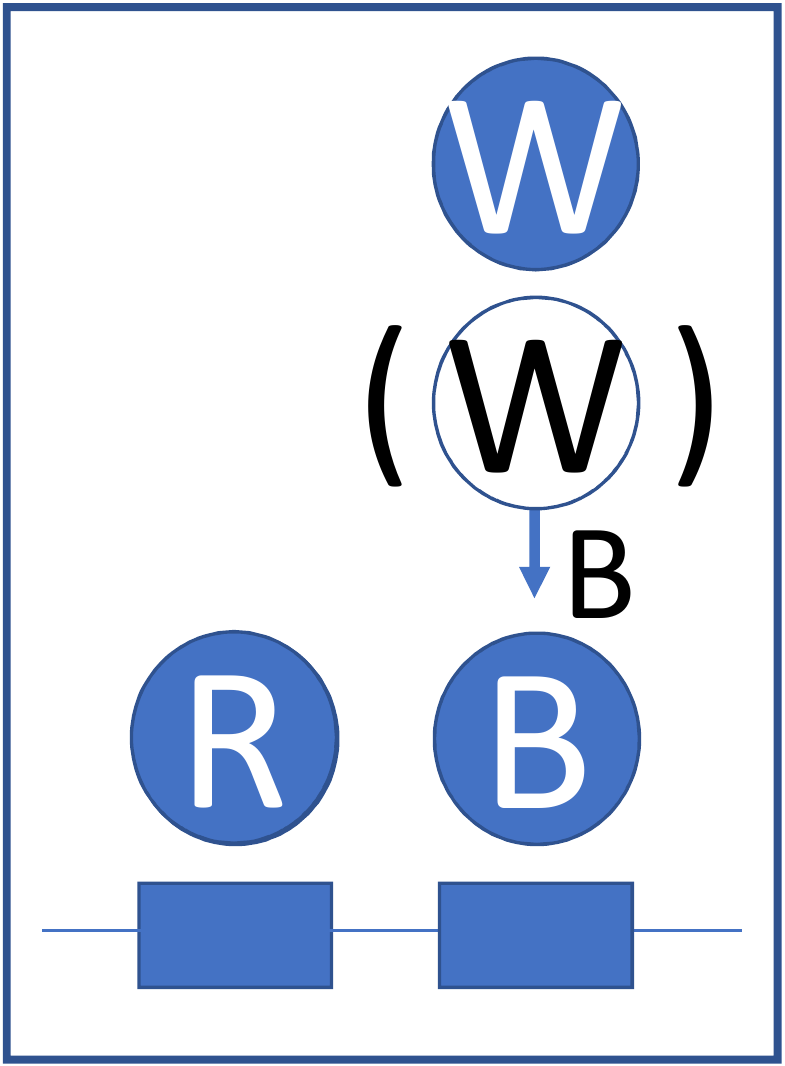} 
    \end{minipage}
    \begin{minipage}[t]{0.2\linewidth}\centering
    \includegraphics[keepaspectratio, scale=0.15]{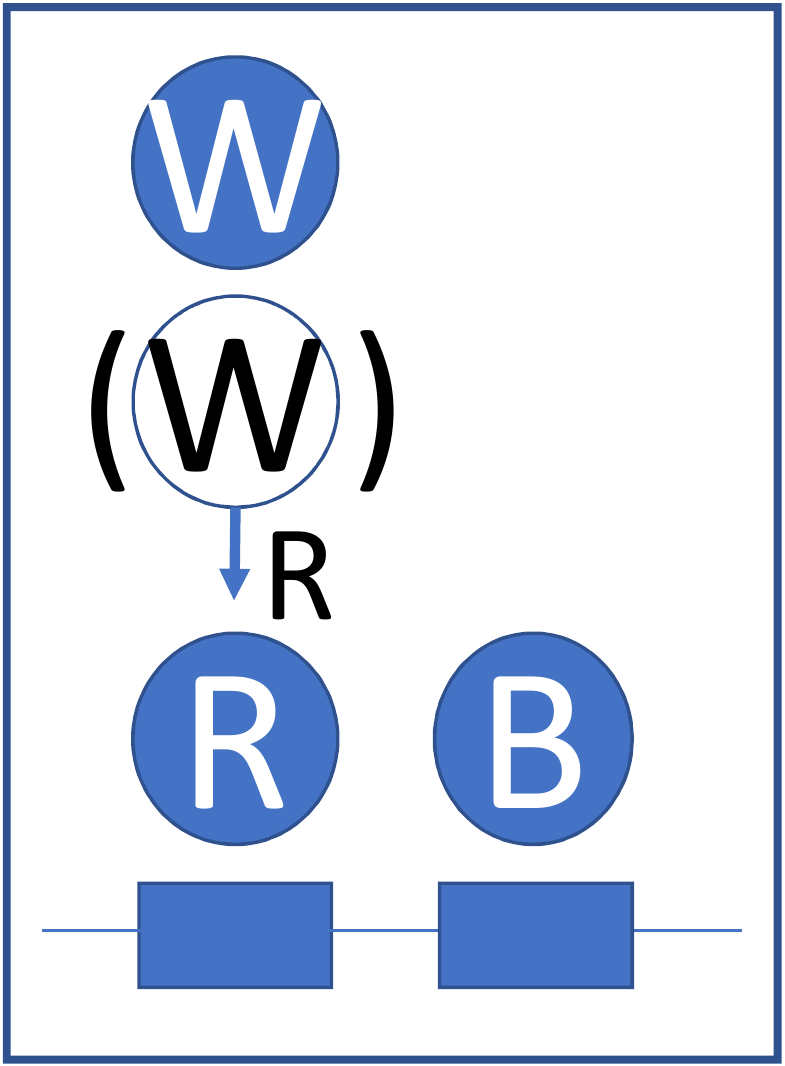}
    \end{minipage}
    \begin{minipage}[t]{0.2\linewidth}\centering
    \includegraphics[keepaspectratio, scale=0.15]{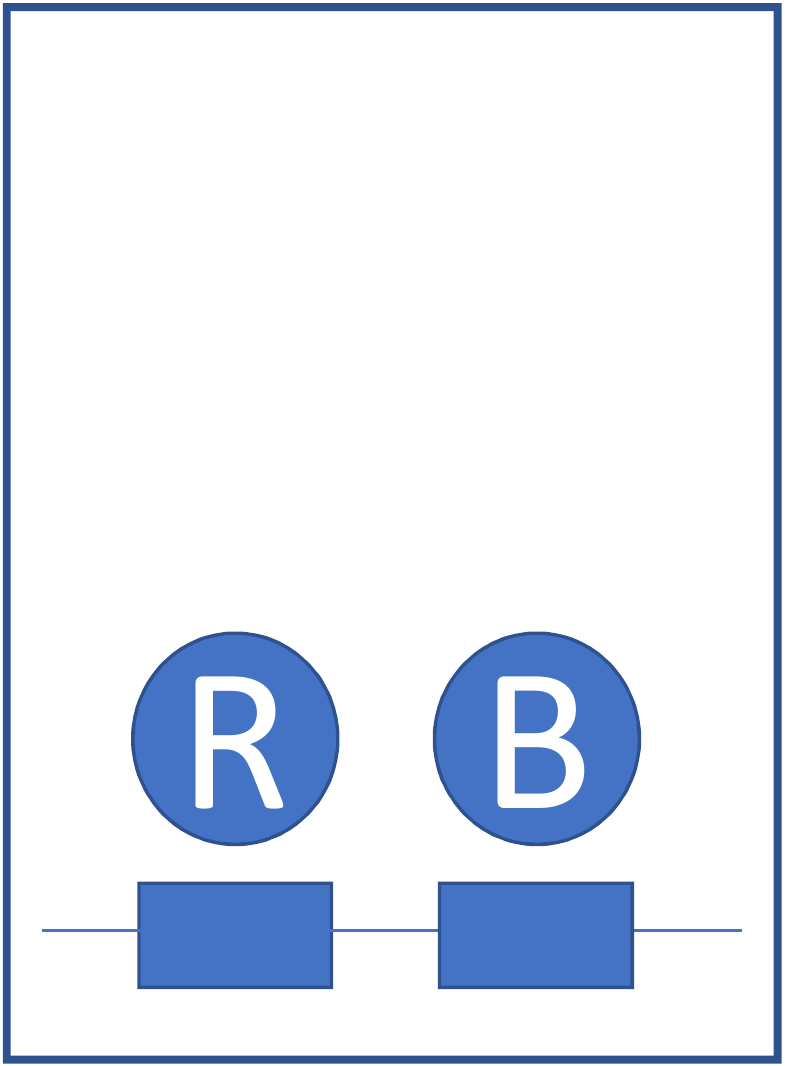}
    \end{minipage}
    \caption{The patterns of the configurations {\sf Conf 3}($D=1$).}
    \label{fig:RWBW}
\end{figure}

\begin{lemma}[Lemma 14 in the main part]\label{Conf1256andDeq2}
\begin{enumerate}
    \item {\sf Conf 1}($D=2$ and $\#O_{W}=1$) $\rightarrow^*$ {\sf Conf RW-MB}, or {\sf Conf 3}($D=1$)
    \item {\sf Conf 2}($D=2$ and $\#O_{W}=1$) $\rightarrow^*$ {\sf Conf BW-MR}, or {\sf Conf 3}($D=1$)
    \item {\sf Conf 5}($D=2$ and $\#O_{W}=1$) $\rightarrow^*$ {\sf Conf BW-MR}, or {\sf Conf 3}($D=1$)
    \item {\sf Conf 6}($D=2$ and $\#O_{W}=1$) $\rightarrow^*$ {\sf Conf RW-MB}, or {\sf Conf 3}($D=1$)
\end{enumerate}
\end{lemma}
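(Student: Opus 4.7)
The plan is to analyze each of the four starting configurations separately, tracking how the border robots interact with the unique White-occupied middle node. Since $D=2$ forces the single non-border node in $\mathcal{G}^\prime$ to be the middle, and $\#O_{W}=1$ forces that node to carry the White robots, the structure is tightly constrained: two border nodes, each with specific color sets depending on the starting configuration, plus one central node with Whites.

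First I would observe that by Lemma~\ref{NB2} only border robots act, so in these states the only applicable rules are R4a/R4b (White robots at a border adopting the border's color), R2b/R3b (a singly-colored border absorbing the White middle node while flipping color), and possibly R2a-*/R3a-* for outdated robots still completing an earlier movement. Because the middle node is occupied with at least one White robot and $\phi\ge 2$ guarantees each border robot sees past the middle, any border move into the middle is done by R2b or R3b; in particular, the move is always paired with a color flip and an absorption of the middle Whites. Thus each mega-cycle here either reduces $D$ to $1$ by having one border absorb, or reduces it by both borders absorbing into the middle simultaneously.

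I would then perform a case split. In {\sf Conf 1}($D=2$, $\#O_{W}=1$), both borders host Blue and White robots; the White border robots turn Blue by R4b, after which a border becomes singly-colored Blue and fires R3b, flipping to Red, moving to the middle and absorbing the White middle robots. If only one border completes this before the other, the resulting configuration has $D=1$ with one node carrying Red together with the former middle Whites (plus any outdated robots still transitioning via R4a/R3a-*), and the other node still Blue; invoking the ``outdated peers catch up'' argument, this matches {\sf Conf RW-MB} exactly. If both borders absorb simultaneously, both sides converge onto the middle and the state reached is {\sf Conf 3}($D=1$). The treatments of {\sf Conf 2}($D=2$), {\sf Conf 5}($D=2$) and {\sf Conf 6}($D=2$) are symmetric: in each case the border whose robots can reach the White middle applies R2b or R3b as appropriate, producing either the matching {\sf Conf BW-MR} / {\sf Conf RW-MB} (when one side completes first) or {\sf Conf 3}($D=1$) (when both sides complete together). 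The parities required for the Red/Blue assignment in the target configurations are consistent thanks to Lemma~\ref{EO2}.

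The main obstacle I anticipate is the bookkeeping of outdated views. Specifically, in {\sf Conf 5} and {\sf Conf 6}, one border is still in the middle of a movement initiated in an earlier mega-cycle, so there may be robots with outdated views that still fire R3a-* or R2a-*, and some White border robots may not have adopted the border color yet. I would handle this by showing that any outdated activity eventually completes \emph{without} creating new borders, altering the color structure outside its own border, or reactivating before Lemma~\ref{Reduction-by-one-for-alg-2} finishes the current absorption; hence the outdated robots simply catch up to their non-outdated peers. Once this ``convergence of outdated robots'' claim is in hand, the final $D=1$ snapshot has exactly the structure (a Red/Blue singly-colored-or-White-absorbing pair) recorded in the definitions of {\sf Conf RW-MB}, {\sf Conf BW-MR}, and {\sf Conf 3}($D=1$), completing the four cases.
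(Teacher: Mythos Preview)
Your overall approach matches the paper's: restrict to border robots via Lemma~\ref{NB2}, observe that the only applicable rules are R4a/R4b (White border robots adopt the border color) and R2b/R3b (a singly-colored border moves into the White middle while flipping color), and then case-split on the relative progress of the two borders. However, your case split for {\sf Conf 1} is inverted, and the ``both absorb simultaneously'' branch is ill-posed.

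Concretely: you claim that ``one border completes before the other'' yields {\sf Conf RW-MB} while ``both absorb simultaneously'' yields {\sf Conf 3}($D=1$). But if both borders fully move into the middle, $D=0$, not $1$, so the second branch cannot produce {\sf Conf 3}($D=1$). The paper's split is different: let $t$ be the first instant at which some border becomes singly-colored Blue. If \emph{both} borders are singly-colored at $t$, both fire R3b; $D$ drops to $1$ when one side has fully arrived at $u_j$, and at that instant the other side is mid-move (a mix of outdated Blue and Red), which is exactly the ``MB'' side of {\sf Conf RW-MB}. If only one border, say $u_i$, is singly-colored at $t$, then $u_i$ fires R3b while $u_k$ is still running R4b; when $D$ becomes $1$, either $u_i$ has finished first and $u_k$ still sits with Blue (and possibly White) robots \emph{not yet moving}---this is {\sf Conf 3}($D=1$)---or $u_k$ has overtaken $u_i$ (possible under ASYNC) and $u_i$ is mid-move, giving {\sf Conf RW-MB}. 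The discriminating feature is whether the lagging border has \emph{begun} its R3b move (yielding the ``MB'' node of {\sf Conf RW-MB}) or not (yielding the static Blue node of {\sf Conf 3}($D=1$)); your description ``the other node still Blue $\Rightarrow$ {\sf Conf RW-MB}'' gets this exactly backwards, since a static Blue border is the hallmark of {\sf Conf 3}($D=1$), not of the moving-Blue side of {\sf Conf RW-MB}.
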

\begin{proof}
{\bf Case 1}({\sf Conf 1}($D=2$ and $\#O_{W}=1$)).
Let $u_i$ and $u_k$ be nodes occupied by Blue and White robots and let $u_j$ be the node occupied by White robots neighboring to $u_i$ and $u_k$.
In $u_i$ and $u_k$, only White robots execute R4b and change their color to Blue, and then, when there are no White robots, Blue robots will execute R3b and change their color to Red and move to $u_j$.
Let $t$ be the first time such that $u_i$ or $u_k$ has no White robots. 

If both $u_i$ and $u_k$ have no White robots at $t$, $u_i$ and $u_k$ contain only Blue robots and they execute R3b, change their color to Red and move to $u_j$. 
Thus, when the distance between the borders becomes one, the configuration becomes {\sf Conf RW-MB}.
Otherwise, without loss of generality, $u_i$ contains only Blue robots and $u_k$ contains Blue and White robots without outdated views and White robots with outdated views. 
Since Blue robots in $u_i$ execute R3b, they change their color to Red and move to $u_j$.
Since White robots in $u_k$ execute R4b, after $u_k$ contains no White robots, Blue robots in $u_k$ execute R3b, change their color to Red and move to $u_j$.
When the distance between the borders becomes one, the configuration becomes {\sf Conf 3}($D=1$)
if all robots in $u_i$ move to $u_j$ earlier, and becomes {\sf Conf RW-MB} if all robots in $u_k$ move to $u_j$ earlier. \\

\noindent
{\bf Case 2}({\sf Conf 2}($D=2$ and $\#O_{W}=1$)), {\bf Case 3}({\sf Conf 5}($D=2$ and $\#O_{W}=1$)) and {\bf Case 4}({\sf Conf 6}($D=2$ and $\#O_{W}=1$)) can be proved similarly to {\bf Case 1}.


\end{proof}


Otherwise ($D \geq 3$), we have the following transitions.

\begin{lemma}[Lemma 15 in the main part]\label{Conf1256Dgeq3}
\begin{enumerate}
    \item {\sf Conf 1}($D \geq 3$ and $\#O_{W}=1$) $\xrightarrow{MC}^*$ {\sf Conf 3}($D \geq 1$ and $\#O_{W}=0$), or {\sf Conf 10}($D \geq 2$ and $\#O_{W}=0$)
    \item {\sf Conf 2}($D \geq 3$ and $\#O_{W}=1$)$\xrightarrow{MC}^*$ {\sf Conf 3}($D\geq 1$ and $\#O_{W}=0$), or {\sf Conf 9}($D \geq 2$ and $\#O_{W}=0$)
    \item {\sf Conf 5}($D \geq 3$ and $\#O_{W}=1$)$\xrightarrow{MC}^*$ {\sf Conf 3}($D\geq 1$ and $\#O_{W}=0$), or {\sf Conf BW-MR}($D=1$ and $\#O_{W}=0$)
    \item {\sf Conf 6}($D \geq 3$ and $\#O_{W}=1$)$\xrightarrow{MC}^*$ {\sf Conf 3}($D\geq 1$ and $\#O_{W}=0$), or {\sf Conf RW-MB}($D=1$ and $\#O_{W}=0$)
\end{enumerate}
\end{lemma}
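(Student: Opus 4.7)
The plan is to proceed by a unified argument across all four cases, leveraging the strict monotonicity of $\#O_W$ established in Lemma~\ref{RB32}: since $\#O_W$ starts at $1$, exactly one mega-cycle can execute before $\#O_W$ reaches $0$, and that mega-cycle must be the one in which the unique non-border White node is absorbed.

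For Case~1 I would apply Lemma~\ref{Conf12} to enumerate the admissible one-mega-cycle successors of {\sf Conf 1}, namely {\sf Conf 2}, {\sf Conf 3}, {\sf Conf 7}, and {\sf Conf 10}. The targets {\sf Conf 2} and {\sf Conf 7} both require \emph{both} borders to fire R3b during the mega-cycle, and thus both borders to have a non-border White neighbor at their firing instants. With $D\geq 3$ and only one non-border White node, the guards of R3b cannot be satisfied simultaneously on opposite sides, ruling those two targets out and leaving {\sf Conf 3} and {\sf Conf 10}. The entry constraint $D\geq 3$ further guarantees the exit bounds $D\geq 1$ (respectively $D\geq 2$). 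Case~2 is obtained by the symmetric color-swap argument, invoking Lemma~\ref{Conf22} and R2b in place of Lemma~\ref{Conf12} and R3b, and yielding {\sf Conf 3} or {\sf Conf 9}.

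Cases~3 and~4 are subtler because {\sf Conf 5} and {\sf Conf 6} are already asymmetric: one side has completed the previous mega-cycle while the other is still mid-motion. During the single remaining mega-cycle, the completed side absorbs the unique non-border White (via R2b in Case~3, R3b in Case~4) while the still-in-transit side keeps firing R2a-1/R2a-2 (respectively R3a-1/R3a-2), since its forward path contains no White robot and no second absorption can be triggered. This unrestricted traversal within one mega-cycle is precisely what allows $D$ to drop from $\geq 3$ all the way to $1$, landing in {\sf Conf BW-MR} (respectively {\sf Conf RW-MB}); if the two borders instead halt simultaneously while $D>1$, the result is {\sf Conf 3}. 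I would invoke Lemmas~\ref{Conf52} and~\ref{Conf62} to bound the raw successors and then use the ``only one absorption possible'' argument from Case~1 to eliminate the alternatives that would have required two distinct White neighbors.

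The principal obstacle will be the color bookkeeping in Cases~3 and~4: one has to verify, using the parity-of-hops invariant of Lemma~\ref{EO2}, that the in-transit Red (respectively Blue) border arriving adjacent to the just-absorbed Blue (respectively Red) border yields exactly the mixture of outdated and non-outdated robots called for by the definitions of {\sf Conf BW-MR} and {\sf Conf RW-MB}, rather than some other color pattern that the lemma would not recognize as a valid end state.
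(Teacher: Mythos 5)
The first thing to note is that the paper contains \emph{no} proof of this lemma: it is stated bare, preceded only by the generic remark that transitions among {\sf Init} and {\sf Conf 1--7} ``can be proved similarly'' to the Algorithm~1 lemmas. So you are not reproducing or diverging from an existing argument; you are supplying one. Your skeleton is the right one and fits the paper's architecture: since $\#O_{W}=1$ and $\#O_{W}$ is monotone (Lemma~\ref{RB32}), exactly one absorption remains, so one starts from the successor lists of Lemmas~\ref{Conf12}, \ref{Conf22}, \ref{Conf52}, \ref{Conf62} and prunes the successors that would require two absorptions.

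Two steps, however, do not hold up as written. First, the assertion that with one non-border White node and $D\geq 3$ ``the guards of R3b cannot be satisfied simultaneously on opposite sides'' is false: the mega-cycle ends only at the first absorption, and R3a-1/R3a-2 moves do not end it, so both Blue borders can close in until $D=2$ with the unique White node $w$ in the middle, at which point \emph{both} are adjacent to $w$ and both may become enabled for R3b on the \emph{same} node (the White robots at $w$ stay White until $w$ becomes a border, by Lemma~\ref{NB2}). The conclusion that {\sf Conf 2} and {\sf Conf 7} are excluded can still be rescued --- a double absorption of the same node collapses the two borders onto $w$, giving $D=0$ and hence (eventual) gathering rather than a two-border configuration --- but that is an argument you must make explicitly, and symmetrically for R2b in Case~2; it does not follow from ``only one White node exists.'' Second, in Cases~3 and~4 the snapshot at the instant the mega-cycle ends can perfectly well be ``absorbing border settled as Blue$+$White, other border Red and still in transit, $D\geq 2$'' --- this is exactly {\sf Conf 4}, which Lemma~\ref{Conf52} lists as a successor of {\sf Conf 5} but which does not appear in the conclusion of the present lemma. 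Since $\#O_{W}=0$ afterwards, no further mega-cycle occurs, so reaching {\sf Conf 3} or {\sf Conf BW-MR} requires reasoning about the subsequent non-mega-cycle evolution: the in-transit Red border keeps firing R2a-1/R2a-2 until either it re-settles while $D\geq 2$ (yielding {\sf Conf 3}, after which Lemma~\ref{Conf3Deq2} takes over) or $D$ reaches $1$ (yielding {\sf Conf BW-MR}). Your phrase ``if the two borders instead halt simultaneously while $D>1$'' describes no mechanism the rules provide and should be replaced by this case split. The color bookkeeping you flag at the end is, by contrast, unproblematic: with a single remaining absorption the parity argument of Lemma~\ref{EO2} degenerates to checking one application of R2b (Red~$\to$~Blue) or R3b (Blue~$\to$~Red), which matches the definitions of {\sf Conf BW-MR} and {\sf Conf RW-MB}.
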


We can prove that  {\sf Conf BW-MR}, {\sf Conf RW-MB} and {\sf Conf 3}($D=1$) 
become gathering configuration in the following lemma.

\begin{lemma}[Lemma 16 in the main part]\label{Conf-RMB-BMR}
From configurations {\sf Conf BW-MR}, {\sf Conf RW-MB} and {\sf Conf 3}($D=1$), 
gathering is achieved.
\end{lemma}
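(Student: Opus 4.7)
I plan to prove Lemma~\ref{Conf-RMB-BMR} by case analysis on the three starting configurations, showing in each case that the system converges to a gathering configuration by first draining any remaining White robots via R4a/R4b, then designating a single Purple robot at one of the two adjacent borders via one of R5b-1, R5b-2, R5b-3, and finally transporting every other robot onto that Purple node via R5a.

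\textbf{Case {\sf Conf 3}($D=1$).} I would enumerate the four subpatterns in Figure~\ref{fig:RWBW} (RWBW, R-BW, RW-B, RB). In each subpattern, before the border nodes are singly-colored, only R4a and R4b are enabled (no R2, R3 or R5 rule fires because the border is not singly-colored in the sense required by their guards, and the Whites see a differently colored neighbor on the border). A straightforward progress argument, analogous to the one used in Lemma~\ref{Reduction-by-one-for-alg-2}, shows that after finite time every White robot has been converted, leaving the pure RB pattern at $D=1$. At that point R5b-1 fires at the singly-colored Blue border, producing one Purple robot. Then R5a is immediately enabled at the singly-colored Red border and transports the Red robots onto the Purple node, completing the gathering.

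\textbf{Cases {\sf Conf BW-MR} and {\sf Conf RW-MB}.} The additional wrinkle here is the outdated in-transit robots at the ``MR'' or ``MB'' border. I would first show that every outdated Blue robot eventually completes its pending move (keeping its color), and every outdated Red/Blue robot that committed to R2b/R3b back when it last saw a White neighbor eventually finishes its color change and move. In parallel, the White robots on the opposite border are absorbed by R4b or R4a. After finite time, one border is singly-colored (Blue for {\sf Conf BW-MR}, Red for {\sf Conf RW-MB}) and the other border has a stable composition matching exactly one of R5b-1, R5b-2, R5b-3. The designated Blue robot then turns Purple, R5a takes over, and gathering is reached as above.

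\textbf{Main obstacle.} The core difficulty is ensuring that the Purple designation happens at a unique, stable node and is not subverted by robots with outdated views. Concretely, one must verify that (i) R5b-1/2/3 cannot fire simultaneously at both borders, which follows from the fact that only Blue robots ever become Purple together with the parity invariant of Lemma~\ref{EO2} that at $D=1$ the two adjacent borders hold robots of opposite original colors; (ii) a robot that has already been R5a-transported onto the Purple node does not re-activate any rule but R0, which follows from the $\emptyset^{\phi-1}$ emptiness requirement in R5a's guard ceasing to hold from the new side; and (iii) an outdated Blue robot waking after the designation and applying R5b-1 from a stale view merely recolors itself Purple, which is harmless since by that time it is already co-located with the Purple robot at the gathering node. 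Pinning down these three invariants is where most of the work will lie; once they are in hand, the case-by-case verification reduces to routine bookkeeping about which rules remain enabled across interleavings.
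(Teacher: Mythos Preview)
Your three-phase plan (drain Whites, then R5b, then R5a) does not match the interleaving the paper actually analyzes, and it misses a branch.

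For {\sf Conf BW-MR} the paper does not wait until the BW border $u_j$ is White-free before anything else happens at the MR border $u_i$. It observes that, once $u_i$ is singly-colored Red, those Red robots fire R2b (the neighbor $u_j$ still contains White), turn Blue, and move into $u_j$, \emph{concurrently} with R4b at $u_j$. The proof then splits: if every robot of $u_i$ has moved into $u_j$ before $u_j$ loses its last White, gathering is already achieved at $u_j$ and no R5b step is ever taken; otherwise, once $u_j$ is White-free, R5b-1 or R5b-2 applies (and in the residual subcase where $u_i$ contains only Blue, those are outdated and already heading to $u_j$). Your plan only covers the second branch and implicitly assumes the MR border remains non-empty with a ``stable composition'' after draining. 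The same structure governs {\sf Conf RW-MB} and, by reduction, {\sf Conf 3}($D=1$): the paper does not argue that a pure RB pattern is necessarily reached, but instead takes the first time $t$ at which some border is White-free and, if the other border still has White at $t$, falls back to the {\sf Conf BW-MR} analysis rather than waiting for both borders to finish.

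Two smaller corrections to your obstacle list. For (ii), R0 requires emptiness on \emph{both} sides, so it does not fire at the Purple node while the other border is still occupied; the reason a robot just arrived at the Purple node stays put is that this node is now multiply-colored (defeating the singly-colored guards of R2, R3, R5b-1, R5b-2) and the remaining border carries no Purple (defeating R5a in that direction). For (i), invoking Lemma~\ref{EO2} is heavier than needed: every R5b guard requires the executing robot to be Blue, and in all three starting configurations only one of the two adjacent borders hosts non-outdated Blue robots, so only that side can ever turn Purple.
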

\begin{proof}
{\bf Case 1:}({\sf Conf BW-MR})
Let $u_i$ be the node occupied by Red robots and Blue robots with outdated view and Red robots without outdated views and let $u_j$ be the node occupied by White and Blue robots without outdated views.
In $u_j$, White robots execute R4b and change their color to Blue. At the same time, Red robots in $u_i$ execute R2b, change their color to Blue and move to $u_j$.

If all robots in $u_i$ have executed R2b before White robots do not exist in $u_j$, gathering is achieved in $u_j$ because Blue robots in $u_j$ cannot execute any rule and White robots in $u_j$ do not move.
Otherwise, that is, there do not exist White robots in $u_j$ before all robots in $u_i$ move by R2b, Blue robots in $u_j$ execute R5b-2 (if $u_i$ contains Blue and Red robots), R5b-1 (if $u_i$ contains only Red robots), or no rules (if $u_i$ contains only Blue robots). In the first and second cases, Blue robots in $u_j$ change their color to Purple, Red robots without outdated views execute R5a and move to $u_j$, and other robots with outdated views move to $u_j$. And
in the third case, all Blue robots in $u_i$ are with outdated views and will move to $u_j$. Thus, gathering is achieved.

\vspace{0.3\baselineskip}
\noindent{\bf Case 2:}({\sf Conf RW-MB}) This case can be proved similarly to Case 1. 

\vspace{0.3\baselineskip}
\noindent{\bf Case 3:}({\sf Conf 3}($D=1$)) 
Let $u_i$ be the node occupied by Red robots without outdated views and White robots with and/or without outdated views, and let $u_j$ be the node occupied by Blue robots without outdated views and White robots with and/or without outdated views.

White robots in $u_i$ (resp. $u_j$) execute R4a (resp. R4b) and change their color to Red (resp. Blue). Then since $u_i$ or $u_j$ does not contain White robots, let $t$ be the first time such that either $u_i$ or $u_j$ does not have White robots.
If both borders do not have White robots at time $t$, $u_i$ contains only Red robots and $u_j$ contains only Blue robots at $t$\footnote{In this configuration, both borders contain no White robots}. Thus, Blue robots change their color to Purple by R5b-1, and all Red robots in $u_i$ execute R5a and move to $u_j$, and gathering achieved.
In the case that there exist White robots in $u_i$ or $u_j$, these configurations become the same as those in {\bf Case 1} and therefore, will become gathering configurations. 

In $u_j$, White robots execute R4b and change their color to Red. At the same time, Red robots in $u_i$ execute R2b, change their color to Blue and move to $u_i$.
\end{proof}

Then by Lemmas~\ref{Conf1256andDeq2}-\ref{Conf-RMB-BMR}, it is sufficient to consider the configurations {\sf Conf 3}($D\geq 2$ and $\#O_{W}=0$), {\sf Conf 9}($D \geq 2$ and $\#O_{W}=0$), and {\sf Conf 10}($D \geq 2$ and $\#O_{W}=0$) for the former case.

The latter case has the following transitions. Note that, these transitions do not reduce $\#O_{W}$ and just reduces the distance between the two borders.
Note also that, the destinations of these transitions do not contain any White robots.

\begin{lemma}[Lemma 17 in the main part]\label{Conf3Deq2}
\begin{enumerate}
    \item {\sf Conf 3} ($D \geq 3$ and $\#O_{W}=0$) 
    $\rightarrow^*$ {\sf Conf 3}(($D=2$ and $\#O_{W}=0$) or {\sf Conf 3}($D=1$)
    \item {\sf Conf 3}($D \geq 3$ and $\#O_{W}=0$)$\rightarrow^*$ {\sf Conf 10}($D=2$ and $\#O_{W}=0$)
    \item {\sf Conf 3}($D \geq 3$ and $\#O_{W}=0$)$\rightarrow^*$ {\sf Conf 9}($D=2$ and $\#O_{W}=0$)
\end{enumerate}
\end{lemma}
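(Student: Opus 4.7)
My plan is to show three reachability claims by exhibiting, for each of them, a concrete scheduling of the asynchronous scheduler that drives the system from \textsf{Conf 3}($D \geq 3$ and $\#O_W = 0$) into the target configuration. The proof structure mirrors that of Lemmas~\ref{Conf1256andDeq2} and~\ref{Conf1256Dgeq3}, except that here no White robots remain to be absorbed, so the only active rules are the ``move without color change'' rules R2a-1, R2a-2, R3a-1, R3a-2.

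First I would set up the common framework. Since $\#O_W = 0$ and we are in a \textsf{Conf 3} pattern, by Lemma~\ref{EO2} and the singly-colored assumption of \textsf{Conf 3}, the two border nodes $u_i, u_k$ are singly-colored in distinct colors Red and Blue, and every non-border occupied node in $\mathcal{G}'$ carries only non-White robots. Consequently, no rule among R1, R2b, R3b, R4a, R4b is ever enabled during the considered fragment of execution. The border robots can therefore only fire R2a-$*$ or R3a-$*$, and by Lemma~\ref{Reduction-by-one-for-alg-2} each such completed border movement decreases $D$ by exactly one while preserving $\#O_W = 0$ and the invariant that borders remain singly-colored Red/Blue.

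Next I would treat each claim in turn by choosing a specific scheduler. For claim~(1), I schedule the two borders to move strictly sequentially: each border robot completes Look, Compute, and Move before the opposite border begins its next cycle. This ensures no robot ever holds an outdated view, so the configuration stays in the \textsf{Conf 3} pattern after every full border step. Iterating, $D$ drops by one at a time until it reaches $2$ (yielding \textsf{Conf 3}($D=2$ and $\#O_W = 0$)) or, continuing one more step, $1$ (yielding \textsf{Conf 3}($D=1$)). For claim~(2), I use the same sequential scheduling to drive the system down to $D = 3$, but then let one border perform Look and Compute while interleaving the other border's full movement first; the resulting configuration at $D = 2$ has one border in the ``clean'' state and the other carrying robots with outdated views, which is precisely the \textsf{Conf 10}($D = 2$ and $\#O_W = 0$) pattern (matching the color arrangement dictated by Lemma~\ref{EO2}). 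Claim~(3) is symmetric: by swapping which border is interrupted and invoking Lemma~\ref{EO2} again to fix the colors, the resulting arrangement matches \textsf{Conf 9}($D = 2$ and $\#O_W = 0$) rather than \textsf{Conf 10}.

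The main obstacle I anticipate is the bookkeeping needed to certify that the target pattern really is reached with the correct color assignment on the correct side. In particular, one must check that the color carried by each border robot when it freezes with an outdated view is consistent with the color pattern prescribed for \textsf{Conf 9} versus \textsf{Conf 10}, and that the non-border occupied nodes between the two borders still satisfy the \textsf{Conf 9}/\textsf{Conf 10} constraints (no White robots present and the correct Red/Blue distribution). This is handled by appealing again to Lemma~\ref{EO2}, which fixes the color of any border robot deterministically from the number of White-absorbing steps it has made, together with the fact that along the scheduled execution $\#O_W = 0$ is preserved and no border ever absorbs White robots, so colors are not altered. Once this bookkeeping is done, the three transitions are established and the lemma follows.
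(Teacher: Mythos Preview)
Your proposal rests on a misreading of the lemma. You treat the three items as three separate \emph{existential} reachability claims (``there exists a schedule reaching each target'') and prove each by exhibiting a particular scheduler. But throughout this paper, a list $A \rightarrow^* B_1$, $A \rightarrow^* B_2$, $A \rightarrow^* B_3$ is a single \emph{universal} claim: from $A$, every fair asynchronous execution eventually reaches one of $B_1$, $B_2$, $B_3$. That is how all the transition lemmas are proved (see e.g.\ the proofs of Lemma~\ref{Init} and Lemma~\ref{Conf1256andDeq2}, which perform a case analysis over \emph{all} possible interleavings rather than exhibit one), and it is what the overall correctness argument needs --- the sentence immediately following the lemma says ``the correctness proof completes if we can show that these three configurations become gathering ones,'' which only makes sense if every execution from \textsf{Conf 3}($D \geq 3$, $\#O_W = 0$) is funnelled into one of them.

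Your constructed schedules therefore do not address the statement. What is required is a closure-plus-progress argument: (i) since no White robots remain, only R2a-$*$ and R3a-$*$ are enabled, so by Lemma~\ref{Reduction-by-one-for-alg-2} every completed border move reduces $D$ by at least one without altering border colors; (ii) for \emph{any} interleaving the scheduler chooses, the configuration after each such reduction is one of \textsf{Conf 3}, \textsf{Conf 9}, or \textsf{Conf 10} (according to whether neither, one, or the other border still carries robots with outdated views), so the system is confined to $\{\textsf{Conf 3}, \textsf{Conf 9}, \textsf{Conf 10}\}$ with $\#O_W = 0$; (iii) since $D$ strictly decreases, the system reaches $D = 2$ (or $D = 1$) in one of these three patterns. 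The paper itself omits the proof, but the intended argument is an exhaustive case analysis of this kind, not the construction of three particular executions.
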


Since {\sf Conf 9}($D \geq 2$ and $\#O_{W}=0$)(resp.   {\sf Conf 10}($D \geq 2$ and $\#O_{W}=0$)) becomes {\sf Conf 3}(($D=2$ and $\#O_{W}=0$) or {\sf Conf 3}($D=1$), or   {\sf Conf 9}($D=2$ and $\#O_{W}=0$) (resp. {\sf Conf 9}($D=2$ and $\#O_{W}=0$)), the correctness proof completes if we can show that these three configurations become gathering ones.

\begin{lemma}[Lemma 18 in the main part]\label{conf3-9-10}
{\sf Conf 3}($D=2$ and $\#O_{W}=0$),  {\sf Conf 9}($D=2$ and $\#O_{W}=0$), and {\sf Conf 10}($D=2$ and $\#O_{W}=0$) become gathering configurations.
\end{lemma}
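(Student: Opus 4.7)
The plan is first to extract a parity property from the hypothesis $O_{init}$ odd, and then to analyze each of the three configurations in turn.

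Parity step: Since $O_{init}$ is odd, the number of non-border occupied nodes in $\mathcal{G}'$ is $O_{init}-2$, also odd. By Lemma~\ref{EO2}, when $\#O_W=0$ holds (so both original border robots have collectively absorbed all non-border white-occupied nodes), one border has absorbed an even count of whites and is therefore singly-colored Red, while the other has absorbed an odd count and is singly-colored Blue. Hence in all three target configurations the two borders are singly-colored with opposite colors $R$ and $B$.

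Case {\sf Conf 3}($D=2$, $\#O_W=0$): The middle node between the two singly-colored borders is empty. The Red border is enabled by R2a-1 and the Blue border by R3a-1, both moving inward without changing color. If both fire simultaneously, gathering is achieved at the middle node. Otherwise, without loss of generality the Red border moves first, producing the adjacent pair with Red at $u_{i+1}$ and Blue at $u_{i+2}$. Then Blue's view matches the guard of R5b-1, so it turns Purple. Red's view then matches R5a, so Red moves onto the Purple node and gathering is achieved.

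Cases {\sf Conf 9}($D=2$, $\#O_W=0$) and {\sf Conf 10}($D=2$, $\#O_W=0$): The border colors follow from the same parity argument, but because the preceding mega-cycle ended with one border absorbing while the other was traversing empty nodes, there may remain outdated robots that looked before the mega-cycle boundary. Any pending outdated action falls into two families: an R2a/R3a-style move that merely completes the traversal of an empty neighbor without changing color, or an R2b/R3b-style move that was computed when the then-neighbor still contained White. The plan is to argue that once these pending actions settle, the resulting configuration coincides with an instance of {\sf Conf 3}($D\le 2$, $\#O_W=0$) (possibly already $D=1$), whereupon the {\sf Conf 3} analysis above concludes gathering; simultaneous arrival at the middle node again gives gathering directly.

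The main obstacle is the outdated-action analysis for {\sf Conf 9} and {\sf Conf 10}. One must verify, for every possible scheduling, that no outdated rule application inserts a robot of the wrong color into a border node in a way that would disable R5a, R5b-1, R5b-2, or R5b-3 at the instant they are needed, and conversely that no R5 rule fires prematurely on a still-mixed node. Handling this systematically amounts to enumerating the small number of pending-action patterns at the mega-cycle boundary and checking each against the rule guards; the work is routine but case-heavy, and the parity fact above is what guarantees that R5b-1 (or one of its siblings) is ultimately the rule that activates rather than a degenerate pattern.
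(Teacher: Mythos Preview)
Your parity step draws the wrong conclusion. Lemma~\ref{EO2} tells you the \emph{color of the border robot that arrives}, but it does not make the border node singly-colored: in {\sf Conf 3} (and in the left border of {\sf Conf 9}, {\sf Conf 10}) the node that was just absorbed still contains White robots that have not yet executed R4. The paper's proof explicitly starts each case by letting those White robots run R4a/R4b before the R2/R3 moves can fire, and the guards of R2a-1, R3a-1, R5b-1 all require the current node to be singly-colored, so your analysis of {\sf Conf 3}($D=2$) cannot begin where you begin it.

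Even after fixing that, your two-branch split (``both fire simultaneously'' vs.\ ``Red moves first'') is too coarse for ASYNC. The paper must handle the situation where some, but not all, robots from each border have reached the middle node $u_j$; then $u_j$ is multiply-colored with both Red and Blue, and the remaining Blue border robots fire R5b-2 (not R5b-1). When instead only one side has reached $u_j$ and the other still holds White robots, the configuration is {\sf Conf 3}($D=1$), and the paper closes by invoking Lemma~\ref{Conf-RMB-BMR}. For {\sf Conf 9}, the middle node $u_j$ may already host Red robots from the ``working'' border, so when the Blue side arrives it is on a node containing Red, and R5b-3 is the rule that fires; your reduction ``once pending actions settle we are back in {\sf Conf 3}'' does not capture this, and the case you defer as ``routine but case-heavy'' is exactly where the distinctions among R5b-1, R5b-2, R5b-3 are needed.
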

\begin{proof}
{\bf Case-1}({\sf Conf 3}($D=2$)):
Let $u_i$ and $u_k$ be the nodes occupied by Red and White robots and Blue and White robots, respectively, and let $u_j$ is an empty neighboring node to $u_i$ and $u_k$.
In $u_i$ (resp. $u_k$), White robots execute R4a (resp. R4b) and change their color to Red (resp. Blue), and then, when there is no White robot, Red robots (resp. Blue robots) 
execute R2a-1 (resp. R3a-1) and move to $u_j$ without changing their color. Then $u_j$ contains Red and Blue robots.
Since R5a and R5b cannot apply to configurations with $D=2$, the distance of the two borders eventually becomes one.
Let $t$ be a time such that all robots in either border node move to $u_j$. If all robots in both borders move to $u_j$ at time $t$,
gathering is achieved.
Otherwise, there are two cases,
{\bf (Case 1-1)} all robots in $u_i$ move to
$u_j$ and {\bf (Case 1-2)} all robots in $u_k$ move to $u_j$.

{\bf (Case 1-1)}: 
%
%
If all robots in $u_i$ move to $u_j$ at $t$, $u_j$ contains all Red robots located in $u_i$. In this case, we can consider the following two subcases:
\begin{itemize}\setlength{\parskip}{0cm}\setlength{\itemsep}{0cm}
\item If $u_j$ contains some Blue robots located in $u_k$ at $t$, they have non-outdated views and there are only Blue robots in $u_k$. Then Blue robots in $u_k$ can execute R5b-2 and change their color to Purple, and all robots in $u_j$ execute R5a and move to $u_k$, gathering is achieved.
\item Otherwise, that is, $u_j$ contains only Red robots, and there are Blue and (possibly empty) White robots in $u_k$, containing White ones with outdated views\footnote{These robots will only change their color to Blue in $u_k$.}.
Since $u_k$ contains Blue robots and (possibly empty) White robots, the configuration is {\sf Conf 3}($D=1$). Then gathering is achieved by Lemma~\ref{Conf-RMB-BMR}. 
\end{itemize} 

\vspace{-0.2\baselineskip}
{\bf (Case 1-2)}: This case can be proved similarly.

\vspace{0.3\baselineskip}
\noindent
{\bf Case-2}({\sf Conf 9}($D=2$ and $\#O_{W}=0$)): 
Let $u_i$ be nodes occupied by Blue and White robots without outdated views, let $u_k$ be nodes occupied by Red robots with and without outdated views, and let $u_j$ is a node occupied by Red robots without outdated views or an empty node neighboring to $u_i$ and $u_k$.
In $u_i$, White robots execute R4b and change their color to Blue, and then, when there is no White robot, Blue robots 
execute R3a-2 and move to $u_j$ without changing their color. Then $u_j$ contains Red and Blue robots.
In $u_k$, Red robots without outdated views execute R2a-2 and move to $u_j$ without changing their color. Since R5a and R5b cannot apply to configurations with $D=2$, the distance of the two borders eventually becomes one.
Let $t$ be a time such that all robots in either border node move to $u_j$. If all robots in both borders move to $u_j$ at time $t$,
gathering is achieved.
Otherwise, there are two cases,
{\bf (Case 2-1)} all robots in $u_i$ move to
$u_j$ and {\bf (Case 2-2)} all robots in $u_k$ move to $u_j$.

{\bf (Case 2-1)}: 
If all robots in $u_i$ move to $u_j$ at $t$, $u_j$ contains all robots located in $u_i$ and some (possibly empty) Red robots located in $u_k$ and they have non-outdated views. The border node $u_k$ contains Red robots with and without outdated views at $t$. Then Blue robots in $u_j$ execute R5b-3 and change their color to Purple, and Red robots without outdated views in $u_k$
move to $u_j$ by R5a or Red robots with outdated views are moving to $u_j$. Thus gathering is achieved.

{\bf (Case 2-2)}: 
If all Red robots in $u_k$ move to $u_j$ at $t$, $u_j$ contains all Red robots located in $u_k$.
In this case, we can consider the following two subcases:
\begin{itemize}\setlength{\parskip}{0cm}\setlength{\itemsep}{0cm}
\item If $u_j$ contains some Blue robots located in $u_i$ at $t$, they have non-outdated views and there are only Blue robots on $u_i$. Then, Blue robots in $u_i$ can execute R5b-2 and change their color to Purple. Thus, all robots gather in $u_i$ by R5a.
\item Otherwise, that is, $u_j$ contains only Red robots, and there are Blue and (possibly empty) White robots in $u_i$, containing White ones with outdated views. Then, the configuration is {\sf Conf 3}($D=1$). Then, the gathering is achieved by Lemma~\ref{Conf-RMB-BMR}.
\end{itemize}

\noindent
{\bf Case-3}({\sf Conf 10}($D=2$ and $\#O_{W}=0$)) can be proved similarly to Case 2.

\end{proof}
%

By the above discussion, we can derive the following theorem.
\begin{theorem}[Theorem 19 in the main part]
Gathering is solvable in full-light of 4 colors when $M_{init}$ is even and $O_{init}$ is odd.
\end{theorem}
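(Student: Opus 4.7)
The plan is to derive the theorem by chaining together the convergence lemmas established for Algorithm~2, organized around the mega-cycle transition diagram of Figure~\ref{TD2}. First, Lemma~\ref{NB2} ensures that only border robots can ever be enabled, so the dynamics reduce to an evolution of the two border positions together with the colors they host. Combined with Lemma~\ref{Reduction-by-one-for-alg-2}, which controls how a border advances in a single mega-cycle, this allows me to invoke Lemma~\ref{RB32}: starting from any admissible initial configuration, the quantity $\#O_{W}$ (the number of non-border occupied nodes containing a White robot) strictly decreases across mega-cycles and eventually reaches a value of at most one.

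Once $\#O_{W}\le 1$, I would argue that the execution is confined to the finite set of configurations appearing in the transition diagram, namely {\sf Conf 1}, {\sf Conf 2}, {\sf Conf 5}, {\sf Conf 6} (each with $\#O_{W}=1$) and {\sf Conf 3}, {\sf Conf 9}, {\sf Conf 10} (with $\#O_{W}=0$). For the first four, Lemmas~\ref{Conf1256andDeq2} and~\ref{Conf1256Dgeq3} drive every execution, possibly via additional applications of the transition lemmas, into one of the terminal situations {\sf Conf BW-MR}, {\sf Conf RW-MB}, {\sf Conf 3}($D=1$), or {\sf Conf 3}/{\sf Conf 9}/{\sf Conf 10} at $D=2$ with $\#O_{W}=0$. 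Lemma~\ref{Conf3Deq2} reduces the remaining case {\sf Conf 3}($D\ge 3$, $\#O_{W}=0$) into the same terminal family, and Lemmas~\ref{Conf-RMB-BMR} and~\ref{conf3-9-10} close the argument by showing that gathering is achieved from every such terminal configuration.

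The parity hypothesis $M_{init}$ even and $O_{init}$ odd plays its role precisely at the moment the two borders meet. By Lemma~\ref{EO2}, when an original border robot reaches its current border node after absorbing $h$ occupied White nodes, its light is Red if $h$ is even and Blue if $h$ is odd. Since each of the $O_{init}-2$ interior occupied nodes is eventually absorbed by exactly one border, the two individual absorption counts $h_1, h_2$ satisfy $h_1+h_2 = O_{init}-2$, an odd number; hence $h_1$ and $h_2$ have opposite parity and the meeting borders carry opposite colors (one Red, one Blue). This is exactly the configuration required for Rules R5b-1 through R5b-3 to mark a Blue robot with Purple, after which R5a brings every remaining robot to the Purple site, ruling out the edge-symmetric obstruction that would occur if both $M_{init}$ and $O_{init}$ were even.

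The principal difficulty in carrying out the write-up is the bookkeeping of outdated robots during the final contraction: once the borders become neighboring, the two border nodes typically contain a mix of non-outdated and outdated Red, Blue and White robots that may act in any interleaving, and every such pattern must still converge to a Purple-marked gathering configuration. This obstacle is already absorbed by the exhaustive subcase analyses inside Lemmas~\ref{Conf-RMB-BMR} and~\ref{conf3-9-10}, so my proof simply needs to certify that every execution enters one of the hypotheses of these lemmas before invoking them. Assembling the above ingredients then yields that from every admissible initial configuration with $M_{init}$ even and $O_{init}$ odd, every execution reaches a configuration in which all robots share a single node and remain there, which establishes the theorem.
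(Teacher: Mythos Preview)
Your proposal is correct and follows essentially the same route as the paper: the paper's own proof is literally ``By the above discussion, we can derive the following theorem,'' and you have unpacked that discussion by chaining Lemmas~\ref{NB2}, \ref{Reduction-by-one-for-alg-2}, \ref{RB32}, \ref{EO2}, \ref{Conf1256andDeq2}--\ref{conf3-9-10} in the intended order. Your explicit parity computation $h_1+h_2=O_{init}-2$ is a welcome clarification of what the paper only asserts (``It can be easily verified \ldots\ that the following configurations occur''); just be careful to apply it at the moment $\#O_W=0$ rather than loosely ``when the borders meet,'' and to note separately (as the paper does in the paragraph preceding Lemma~\ref{conf3-9-10}) that {\sf Conf 9} and {\sf Conf 10} with $D>2$ and $\#O_W=0$ reduce to $D=2$ before Lemma~\ref{conf3-9-10} applies.
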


It is an interesting open question whether gathering is solvable or not in full-light of 3 colors when $M_{init}$ is even and $O_{init}$ is odd.

\section{Discussion}
\label{sec:imposs}
In this section, we discuss the gathering problem in other cases.
First, we consider the case that $M_{init}$ and $O_{init}$ are even.

\begin{definition}[Definition 20 in the main part]
A configuration is \emph{edge-view-symmetric} if there exist at least two distinct nodes hosting each at least one robot, and an edge $(u_i,u_{i+1})$ such that, for any integer $k\geq0$, and for any robot $r_1$ at node $u_{i-k}$, there exists a robot $r_2$ at node $u_{i+k+1}$ such that $\mathcal{V}_1=\mathcal{V}_2$.
\end{definition}

\begin{theorem}[Theorem 21 in the main part]
\label{thm:edge-view-symmetry}
Deterministic gathering is impossible from any edge-view-symmetric configuration.
\end{theorem}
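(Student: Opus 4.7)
My plan is to use the classical symmetry-preservation argument with an adversarial scheduler. Fix an edge-view-symmetric configuration $C_0$ with symmetry edge $e=(u_i,u_{i+1})$, and let $\sigma$ denote the reflection of the ring through $e$, i.e., $\sigma(u_{i-k})=u_{i+k+1}$ for every $k\geq 0$. By hypothesis there is a pairing that associates each robot $r_1$ at $u_{i-k}$ with a robot $r_2$ at $u_{i+k+1}$ such that $\mathcal V_1=\mathcal V_2$. Since robots are disoriented, I am free to take the local orientation at $r_2$ to be the $\sigma$-image of the one at $r_1$; this is precisely what makes the two views literally equal as sequences.

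Next I would exhibit an admissible (and fair) ASYNC schedule that activates every matched pair $(r_1,r_2)$ simultaneously, performing their Look, Compute and Move phases in lockstep. Because the algorithm is deterministic and the two robots of a pair observe identical views, they compute the same light update and the same local ``forward'' move. Their local frames being mirror images of each other in the global frame, their global moves are $\sigma$-reflected and their color updates coincide. I would then prove by induction on rounds that the invariants ``the configuration is edge-view-symmetric around $e$'' and ``the induced pairing still satisfies $\mathcal V_1=\mathcal V_2$'' are preserved.

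Finally, observe that $\sigma$ fixes no node, since its axis is an edge. Because the initial configuration contains at least two distinct occupied nodes by the definition of edge-view-symmetry, and because $\sigma$ sends each occupied node to a distinct occupied node at every step, every reachable configuration under this schedule contains at least two distinct occupied nodes. Gathering is therefore never reached, contradicting correctness of the supposed algorithm.

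The main obstacle is formalizing the claim that paired robots actually perform $\sigma$-reflected actions, which hinges on a careful treatment of disorientation: the algorithm is a function of the view expressed in the robot's own local frame, and no global chirality is available. Once one fixes, for every matched pair, local frames that are global reflections of each other, the color update transfers verbatim, and a move toward the neighbor $u_{i-k+1}$ from $r_1$'s frame corresponds to a move toward $u_{i+k}$ from $r_2$'s frame, i.e., toward the respective mirror neighbor of $e$. The remaining verifications (that paired nodes retain identical multisets of lights, that empty/occupied status is preserved by $\sigma$, and that views around the updated pairs are again mirrored) follow routinely from this observation.
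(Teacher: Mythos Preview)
Your argument is correct and follows essentially the same strategy as the paper's proof: exhibit a fair adversarial schedule that activates symmetrically placed robots in lockstep so that edge-view-symmetry is an invariant, and observe that no edge-view-symmetric configuration is gathered. The paper phrases the activation slightly differently---it selects, at each step, all robots of a given color at $u_{i-k}$ together with all robots of that same color at $u_{i+k+1}$, rather than fixing a one-to-one pairing---but this is only a cosmetic difference, since in this model views depend only on the node and the surrounding color sets, not on individual robot identities; your more explicit treatment of disorientation and of why the global moves are $\sigma$-reflected is a welcome elaboration of a point the paper leaves implicit.
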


\begin{proof}
Let us first observe that a gathered configuration is \emph{not} edge-view-symmetric (by definition).

Now, we show that starting from any edge-view-symmetric configuration, the scheduler can preserve an edge-view-symmetric configuration forever.
Suppose we start from an edge-view-symmetric configuration for edge $(u_i,u_{i+1})$.
Anytime a robot $r_1$ at node $u_{i-k}$ is enabled, for some integer $k\geq0$, execute the Look phase of all robots at $u_{i-k}$ with color $L_1$ (all those robots have the same view as $r_1$, and thus obtain the same information), and the Look phase of all robots at node $u_{i+k+1}$ with color $L_1$ (all those robots have the same view as $r_1$, and thus obtain the same information). Now, the scheduler executes the Compute phase of all aforementioned robots (they thus obtain the same (possibly new) color and the same move decision). Last, execute the Move phase of all those robots, since their move decision was the same in the Compute phase. 
The scheduler can remain fair by executing robots in a double round robin order (first by hosting node, second by robot color), yet, the execution contains only edge-view-symmetric configurations, hence never reaches gathering. 
\end{proof}

\begin{corollary}[Corollary 22 in the main part]
\label{cor:imp-ee}
Starting from a configuration where $M_{init}$ is even and $O_{init}$ is even, and all robots have the same color, deterministic gathering is impossible.
\end{corollary}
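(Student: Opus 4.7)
The plan is to derive Corollary 22 directly from Theorem 21 by exhibiting, for every pair of even values $M_{init}=2m$ and $O_{init}=2o$ meeting the paper's standing assumptions, one explicit initial configuration that is edge-view-symmetric in the sense of Definition 20. A single such witness suffices to establish impossibility for that parameter pair, so the bulk of the work reduces to choosing a placement of uniformly colored robots that is palindromic around an edge.

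Concretely, I would fix a sufficiently large ring, place the two border nodes at $u_{0}$ and $u_{2m-1}$, and then fill in $o-1$ symmetric pairs of interior nodes by picking offsets $0 < a_{1} < \cdots < a_{o-1} < m$ and putting one White robot at each of $u_{a_{j}}$ and $u_{2m-1-a_{j}}$ (for $o=1$ the interior is empty and only the two border nodes host robots). The offsets and the ring size can be chosen so that consecutive occupied nodes are within distance $\phi$ and the complementary hole has size strictly larger than $\phi$. This takes care of all background hypotheses: the visibility graph is connected, $u_{0}$ and $u_{2m-1}$ are the only border nodes, and $\phi \geq H_{init} \geq 1$ holds. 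By construction $\mathcal{X}_{m-1-k} = \mathcal{X}_{m+k}$ for every $k \geq 0$, and every nonempty $\mathcal{X}_{j}$ equals $\{W\}$.

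To finish, I would verify that this configuration is edge-view-symmetric with respect to the middle edge $(u_{m-1}, u_{m})$. For any robot $r_{1}$ at $u_{m-1-k}$, the mirror robot $r_{2}$ at $u_{m+k}$ receives an observation sequence that is the pointwise reflection of $\mathcal{V}_{1}$; since robots are disoriented and every visible color is White, these two views are indistinguishable, so $\mathcal{V}_{1} = \mathcal{V}_{2}$ in the sense intended by Definition 20. Invoking Theorem 21 on this configuration then rules out deterministic gathering and yields Corollary 22.

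The main obstacle, such as it is, is not a deep difficulty but rather the bookkeeping needed to reconcile the explicit witness with every model-level hypothesis (exactly two border nodes, connected visibility graph, $H_{max} > \phi$, $\phi \geq H_{init}$). Once the offsets $\{a_{j}\}$ and the ambient ring size are chosen with enough slack, the palindromic structure forces the symmetry condition immediately from the uniformity of the single color White, so no further technical work beyond citing Theorem 21 is required.
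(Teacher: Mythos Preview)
Your proposal is correct and follows essentially the same approach as the paper: both argue that for any even $M_{init}$ and $O_{init}$ one can exhibit an edge-view-symmetric initial configuration (all robots White, placement palindromic across a central edge) and then invoke Theorem~21. The paper's proof is a one-line existence claim, while you spell out an explicit witness and check the standing hypotheses, but the underlying idea is identical.
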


\begin{proof}
When $M_{init}$ is even and $O_{init}$ is even, if all robots initially share the same color, it is possible to construct an edge-view-symmetric initial configuration, from which deterministic gathering is impossible.
\end{proof}

\begin{corollary}
Starting from a configuration where $M_{init}$ is even and $O_{init}$ is even, and all initial colors are shared by at least two robots, there exist initial configurations (e.g. edge-view-symmetric configurations) that a deterministic algorithm cannot gather.
\end{corollary}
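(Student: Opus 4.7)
The plan is to apply Theorem~\ref{thm:edge-view-symmetry} to an explicit family of initial configurations meeting all the hypotheses. Fix any even $M_{init}$ and any even $O_{init} \leq M_{init}$, and pick a target symmetry edge $(u_i, u_{i+1})$. Place the two border nodes equidistantly from this edge, at $u_{i - M_{init}/2 + 1}$ and $u_{i + M_{init}/2}$; this is possible precisely because $M_{init}$ is even, which makes the midpoint of the border-to-border path fall on an edge rather than a node. Next, choose any set $S$ of $O_{init}/2$ positions among $\{u_{i - M_{init}/2 + 1}, \ldots, u_i\}$ that includes the left border node, and mark a node occupied if and only if either it or its mirror across $(u_i, u_{i+1})$ belongs to $S$.

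On each pair of mirror occupied nodes, place the same multiset of colored robots; the color choices can be arbitrary, but the mirroring ensures that every color appearing anywhere in the configuration is carried by at least two robots, which is exactly the hypothesis of the corollary. To verify that the configuration is edge-view-symmetric in the sense of Definition~20, consider any two mirror robots $r_1$ at $u_{i-k}$ and $r_2$ at $u_{i+k+1}$. By construction, for each offset $j$ with $|j| \leq \phi$, the color multisets satisfy $\mathcal{X}_{i-k+j} = \mathcal{X}_{i+k+1-j}$, so the two robots — being anonymous and disoriented — have identical views $\mathcal{V}_1 = \mathcal{V}_2$. Thus the configuration is edge-view-symmetric, and Theorem~\ref{thm:edge-view-symmetry} gives that no deterministic algorithm can gather from it.

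The construction is routine; the only point worth checking is compatibility with the standing model assumption $\phi \geq H_{init}$. Placing the $O_{init}/2$ occupied positions of $S$ consecutively starting from the left border makes $H_{init} = 1$, so any visibility $\phi \geq 1$ suffices and the constructed configurations genuinely lie within the model. Hence the stated family of bad initial configurations exists, establishing the corollary. The main (and only) delicate step is ensuring that the color-pairing hypothesis is fulfilled simultaneously with edge-view-symmetry, but the mirror construction discharges both at once because the same multiset is placed on paired nodes.
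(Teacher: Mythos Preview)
Your approach—construct an explicit edge-view-symmetric configuration and invoke Theorem~\ref{thm:edge-view-symmetry}—is exactly the paper's, carried out in far more detail than the paper's one-line argument. One small slip: placing the $O_{init}/2$ positions of $S$ consecutively from the left border leaves a central gap between the two mirrored clusters of roughly $M_{init}-O_{init}$ empty nodes, so $H_{init}=1$ holds only when $O_{init}=M_{init}$; this is harmless, since the corollary is merely an existence claim (take $\phi$ large enough, or spread the occupied positions more evenly within $S$), and Theorem~\ref{thm:edge-view-symmetry} itself does not rely on the hypothesis $\phi\geq H_{init}$.
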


\begin{proof}
When $M_{init}$ is even and $O_{init}$ is even, if all colors are initially shared by at least two robots, it is possible to construct an edge-view-symmetric initial configuration, from which deterministic gathering is impossible.
\end{proof}

Next, we consider the case that $O_{init}=2$ and $K=1$.
\begin{corollary}
If the initial configuration contains only two occupied positions and robots never change their colors, gathering is impossible.
\end{corollary}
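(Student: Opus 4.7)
My plan is to show that, for every initial configuration $C_0$ with $O_{init}=2$ in which robots never change their colors (so effectively $K=1$), an adversarial ASYNC scheduler can prevent gathering. Let $u_a$ and $u_b$ be the two occupied nodes of $C_0$, all hosting robots of a single color $W$, and let $d$ be the distance between $u_a$ and $u_b$ in $\mathcal{G}'$.

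The odd-$d$ case is immediate: the midpoint of the $u_a$-to-$u_b$ path in $\mathcal{G}'$ is an edge, and reflecting through this edge exchanges $u_a$ with $u_b$ while leaving all other (empty) nodes fixed. Because every robot has color $W$, the two border groups have matching views in the sense used in Theorem~\ref{thm:edge-view-symmetry}'s proof, so $C_0$ is edge-view-symmetric and Theorem~\ref{thm:edge-view-symmetry} immediately precludes gathering.

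The even-$d$ case is the heart of the proof. Here the midpoint of $\mathcal{G}'$ is a node rather than an edge, so $C_0$ is not itself edge-view-symmetric. My plan is for the scheduler to desynchronize the two sides in order to steer the execution into a reachable edge-view-symmetric configuration, and then to invoke Theorem~\ref{thm:edge-view-symmetry}'s preservation strategy from that intermediate configuration. Concretely, the scheduler activates all robots at $u_a$ simultaneously, running their Look, Compute and Move phases in lock-step while leaving the $u_b$-robots entirely unactivated, so that no stale decisions are pending on the $u_b$ side. Because every $u_a$-robot has the same asymmetric view and colors cannot change, the algorithm returns a single common decision, which is either stay, move one step toward $u_b$, or move one step away from $u_b$. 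If the decision is stay, then the $u_b$-side when later activated would also decide stay by the same reasoning, so the scheduler freezes $C_0$ forever by continuing to activate the two sides symmetrically. If the decision is to move toward $u_b$, the $u_a$-group shifts to $u_{a+1}$, giving a configuration with two occupied nodes at distance $d-1$, which is odd and hence edge-view-symmetric, and Theorem~\ref{thm:edge-view-symmetry} then blocks gathering from there. If the decision is to move away, the $u_a$-group shifts to $u_{a-1}$ and the new short-arc distance is $d+1$, again odd; if the two groups remain mutually visible the configuration is edge-view-symmetric and Theorem~\ref{thm:edge-view-symmetry} applies, and otherwise each resulting border sees only the symmetric view $\emptyset^{\phi}[W]\emptyset^{\phi}$, on which a deterministic disoriented algorithm can only output the unique symmetric movement (stay), so the two groups remain forever separated.

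The main obstacle is the even-$d$ case, because $C_0$ itself does not satisfy the hypothesis of Theorem~\ref{thm:edge-view-symmetry}. The critical insight I exploit is that the ASYNC scheduler's freedom to activate arbitrary subsets of robots lets it move just one border group by a single hop, flipping the parity of the distance and producing a reachable edge-view-symmetric configuration, from which Theorem~\ref{thm:edge-view-symmetry}'s symmetry-preservation argument then blocks gathering forever.
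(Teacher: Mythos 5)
Your proof is correct and follows essentially the same strategy as the paper's: observe that the odd-distance case is already edge-view-symmetric, and in the even-distance case have the scheduler activate only one border group so that any nontrivial move flips the parity and lands in an edge-view-symmetric configuration, to which Theorem~\ref{thm:edge-view-symmetry} applies (with the stay-forever case handled separately). Your extra treatment of the move-away case possibly disconnecting the visibility graph is a minor refinement the paper glosses over, but it does not change the argument.
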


\begin{proof}
Suppose for the purpose of contradiction that there exists a gathering algorithm from an initial configuration with two occupied positions and all robots have the same color.
By Theorem~\ref{thm:edge-view-symmetry}, this implies that the configuration is \emph{not} edge-view-symmetric, yet the number of nodes between the two locations (i.e., $M_{init}$) is odd. So, the robots at both occupied locations execute exactly the same algorithm when activated by the scheduler. 

Since by assumption, robots never change color, they can either move or not move, and if they move they may move toward the other location or further from the other location. If robots don't move, then the reached configuration is the same as the initial one, where gathering is not achieved, hence the assumption is false.
If robots move away, then the scheduler activates only robots at one location, as a result, the reached configuration is edge-view-symmetric. 
Similarly, if robots move toward the other location, then the scheduler activates only robots at one location, and the reached configuration is edge-view-symmetric. 
Overall, a contradiction, hence the corollary holds.
\end{proof}

We now study the impact of an important property our algorithms satisfy: cautiousness~\cite{BPT10j}.
\begin{definition}[Definition 23 in the main part]
A gathering algorithm is \emph{cautious} if, in any execution, the direction to move is only toward other occupied nodes, i.e.,
robots are not adventurous and do not want to expand the covered area.
\end{definition}

Note that the algorithms we provide in previous sections, only border robots move, and they only move toward occupied other nodes, hence our algorithms are cautious.

\begin{lemma}[Lemma 24 in the main part]
\label{lem:morethantwoborders}
A cautious algorithm that starts from a configuration with more than two borders cannot solve gathering. 
\end{lemma}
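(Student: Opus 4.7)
The plan is to show that when the visibility graph initially has more than two borders, the occupied nodes split into at least two connected components whose spans are pairwise separated by empty arcs of length strictly greater than $\phi$, and that a cautious algorithm can never shrink these inter-component gaps. Since gathering requires all robots to eventually occupy a single node, at least two distinct components must merge, which we will show to be impossible.

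First, I would argue that having more than two border nodes implies that the visibility graph has at least two connected components. On a ring, a single connected component of occupied nodes (joined by visibility edges of length at most $\phi$) contributes exactly two border nodes, namely the two extremities of the arc of occupied positions. Hence, if there are $b > 2$ border nodes, the occupied positions split into at least $\lceil b/2 \rceil \geq 2$ maximal visibility-connected arcs $C_1, C_2, \ldots$, and any two consecutive arcs are separated on the ring by an empty arc of length strictly greater than $\phi$ (otherwise the arcs would merge into a single component).

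Second, I would prove by induction on the number of Move phases completed that the span of each component $C_k$ (the minimal arc of the ring containing all its nodes) can only shrink, never expand. For a border robot $r$ of $C_k$, the view along the outward direction is entirely empty within distance $\phi$ (since the gap to the neighboring component exceeds $\phi$ by the induction hypothesis). By cautiousness, $r$ moves only in a direction where it sees an occupied node, hence only inward, so the outward extremity of the span of $C_k$ does not expand. Non-border robots of $C_k$ obviously stay within the span delimited by the two border nodes of $C_k$. Consequently, for every $k$, the span of $C_k$ at any later time is contained in its initial span, and the gap separating consecutive components remains strictly positive (in fact, strictly greater than $\phi$).

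The main subtlety, and the step I expect to require the most care, is handling outdated views in the ASYNC model: a robot may decide its move based on a snapshot taken much earlier. I would handle this by noting that the induction hypothesis guarantees the separating gap has not shrunk between the snapshot time and the move time, so the snapshot contains only nodes of $C_k$; the cautious choice points inward with respect to $C_k$, and after an atomic one-edge move the robot still lies in the (current or former) span of $C_k$. Hence components never merge, at least two nonempty components persist for the entire execution, and gathering cannot be reached, contradicting the assumption that a cautious gathering algorithm exists for such an initial configuration.
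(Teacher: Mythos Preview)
Your proposal is correct and follows essentially the same approach as the paper: both arguments observe that more than two borders forces at least two visibility-disconnected arcs separated by gaps exceeding $\phi$, and that cautiousness confines each arc's robots to (at most) its original span so the arcs can never merge. Your version is simply a more detailed elaboration of the paper's brief sketch, in particular your explicit treatment of outdated views in ASYNC; just be careful that the right induction invariant is ``the span stays inside the \emph{initial} span'' rather than ``the span is monotone non-increasing step by step,'' since in ASYNC a pending move can momentarily re-expand the current span---your parenthetical ``(current or former) span'' already captures this.
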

\begin{proof}
Since robots are not aware of $R$, the total number of robots, the algorithm must work irrespective of $R$.
As there can be only an even number of borders in a configuration (all robots have the same visibility range), having more than two borders implies having at least two distinct parts $\mathcal{A}$ and $\mathcal{B}$, separated in each of the two directions of the ring by at least $k>\phi$ empty nodes (That is, $H_{init}>\phi$).

Since the algorithm is cautious, the robots in $\mathcal{A}$ must occupy positions that are within the borders of $\mathcal{A}$ in the remaining of the execution. Similarly, the robots in $\mathcal{B}$ must occupy positions that within the borders of $\mathcal{B}$ in the remaining of the execution. As a result, robots in $\mathcal{A}$ and in $\emph{B}$ never merge, and thus gathering is not achieved. 
\end{proof}

\begin{lemma}[Lemma 25 in the main part]
\label{lem:noborder}
A cautious algorithm that starts from a configuration with no border cannot solve gathering. 
\end{lemma}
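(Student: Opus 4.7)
The plan is to show that under the paper's cautiousness definition — clarified by its remark that in the paper's cautious algorithms ``only border robots move, and they only move toward occupied other nodes'' — no robot is enabled to move from any no-border configuration (symmetric or not), so the configuration is frozen and gathering cannot be achieved. This gives a uniform obstruction that applies to \emph{every} no-border starting configuration, not just to particularly symmetric ones.

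First, I would unpack the no-border assumption: by definition, every occupied node has at least one visible occupied node on each of its two sides within distance $\phi$, so the set of border robots is empty and the occupied positions wrap cyclically around the ring with every hole of size at most $\phi$. I would then pin down the operational reading of cautiousness by combining (i) the formal definition (``direction to move is only toward other occupied nodes''), (ii) the informal gloss (``robots are not adventurous and do not want to expand the covered area''), and (iii) the paper's explicit characterization that its cautious algorithms move \emph{only} border robots. Read together, these fix the meaning that a cautious move is only available to a border robot, whose view unambiguously singles out a direction (toward the occupied side, away from the empty side). A non-border robot, with visible occupied nodes on both sides, has no such unambiguous cautious direction; moving it either way would be ``adventurous'' in the paper's terminology and hence outside the scope of cautiousness.

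Applying this to any no-border initial configuration, including asymmetric ones, yields the conclusion directly: no robot is a border robot, so no cautious movement rule is ever enabled for any robot's view. The configuration therefore remains unchanged throughout the execution under any scheduler. Since a no-border configuration on a ring of size at least $\phi+2$ necessarily has at least two distinct occupied positions (required by the cyclic wrap-around with gaps bounded by $\phi$), the frozen configuration is not a gathered one, and gathering is never achieved.

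The main obstacle is justifying the ``border-only'' reading of cautiousness from the paper's definition: taken strictly literally, ``direction to move is only toward other occupied nodes'' is met in any no-border configuration by movement in either direction, since occupied nodes lie on both sides of every robot. The plan resolves this by invoking the paper's own explicit characterization of its cautious algorithms as the canonical operational meaning, which is what makes the argument apply uniformly to every no-border starting configuration. If a more rigorous derivation is required, this reading can be backed up by an adversarial-scheduler argument that, for any purportedly cautious algorithm permitting non-border moves, preserves the no-border invariant forever by careful interleaving of activations — exploiting the cyclic wrap-around and, in the ASYNC model, outdated-view manipulations — thereby preventing gathering regardless of the algorithm's specific rules and regardless of whether the initial configuration is symmetric.
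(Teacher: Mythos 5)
Your main line of argument rests on a reading of ``cautious'' that the paper's definition does not support, and the lemma's difficulty lies exactly where that reading breaks down. Definition~23 constrains only the \emph{direction} of movement (toward other occupied nodes); it says nothing about which robots may move. The sentence you cite (``only border robots move, and they only move toward occupied other nodes, hence our algorithms are cautious'') is an observation that the paper's two specific algorithms happen to satisfy the definition --- it is not a characterization of the class of cautious algorithms. In a no-border configuration the occupied nodes wrap around the entire ring with every hole of size at most $\phi$, so \emph{every} robot has occupied nodes within view on both sides, and a move in either direction is ``toward other occupied nodes.'' Cautiousness therefore imposes no restriction whatsoever in such a configuration: nothing is frozen, and your central claim that no rule is ever enabled does not follow. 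You yourself flag this, but the fallback you offer (an adversary that ``preserves the no-border invariant forever'') is only a hope, and it is not clear it can be realized: the adversary controls scheduling, not the algorithm's decisions, and fairness forces enabled moves to eventually occur.

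The paper's actual proof goes the other way around: it \emph{grants} that robots move, observes that to gather the algorithm must at some point create a hole of size exceeding $\phi$ (hence two borders) --- otherwise the occupied set wraps the ring forever and gathering never happens --- and then attacks that first hole-creating step. Since robots know neither $N$ nor $R$ and have visibility $\phi$, one doubles the ring and repeats the configuration $c_i$ twice; the critical robot $r_j$ now has two indistinguishable copies $r_j'$ and $r_j''$, and activating both simultaneously creates \emph{four} borders, which is fatal by Lemma~24. That indistinguishability/doubling construction is the missing idea in your proposal; without it (or an equivalent mechanism) the lemma is not established for arbitrary, possibly asymmetric, no-border configurations.
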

\begin{proof}
Since robots are not aware of $R$, the total number of robots, the algorithm must work irrespective of $R$.
Suppose for the purpose of contradiction, that there exists a gathering algorithm for an initial configuration with no border. If the algorithm never creates a hole (and hence two borders), gathering is never achieved. Hence, there exists a step in the execution that creates two borders from a given configuration $c_i$ by the move of robot $r_j$. Now consider configuration $c_i'$ defined as follows: the new ring is twice as big, configuration $c_i$ is repeated twice on the new ring (assuming robot $r_j$ is in the "middle" of all robots in $c_i$). So, the new configuration $c_i'$ contains two robots with the same view as $r_j$, say $r_j'$ and $r_j''$. Now, execute simultaneously the two robots $r_j'$ and $r_j''$. They induce a configuration with four borders, from which no cautious algorithm can recover by Lemma~\ref{lem:morethantwoborders}, a contradiction.  
\end{proof}

Note that Lemmas~\ref{lem:morethantwoborders} and \ref{lem:noborder} justify our hypothesis that the initial configuration has exactly two borders, as those are the only solvable starting configurations by a cautious algorithm.

\begin{lemma}
\label{lem:monotonic}
A cautious algorithm that disconnects the initial visibility graph cannot solve gathering.
\end{lemma}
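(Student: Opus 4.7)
The plan is to show that once a cautious algorithm creates a step at which the visibility graph becomes disconnected, the resulting partition of robots into visibility components is preserved in every subsequent configuration, which precludes all robots ever sharing a single node. I begin by isolating the first step $c_i \longrightarrow c_{i+1}$ at which disconnection occurs. In $c_{i+1}$ there exist at least two nonempty connected components $\mathcal{A}$ and $\mathcal{B}$ of the visibility graph, separated on the ring by two empty arcs of size strictly greater than $\phi$, so no robot of $\mathcal{A}$ sees any robot of $\mathcal{B}$ and conversely.

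I then invoke the property recalled right after Definition~23: a cautious protocol never expands the span of the visibility graph. I use this to prove, by induction on the step index, the invariant that every robot located in $\mathcal{A}$ at time $i+1$ remains inside the short ring arc spanned by $\mathcal{A}$ in $c_{i+1}$, and symmetrically for $\mathcal{B}$. The inductive step is clean: a currently-border robot of the $\mathcal{A}$-component sees no occupied node outside its arc, since the two external holes remain longer than $\phi$ by the invariant, so cautiousness forces its move to be inward; a currently-non-border robot of $\mathcal{A}$ only sees occupied neighbors within $\mathcal{A}$'s arc, so its cautious move also stays in that arc. Because the arcs of $\mathcal{A}$ and $\mathcal{B}$ are disjoint in $c_{i+1}$ and only shrink, the robots of $\mathcal{A}$ and $\mathcal{B}$ never share a node, contradicting gathering.

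The main obstacle I anticipate is the bookkeeping on the ring: one must consistently identify the \emph{short} arc of each component and verify that the two $>\phi$ holes on either side of $\mathcal{A}$ persist under the evolving cautious moves, which requires a careful statement of the invariant so that shrinkage of one arc does not inadvertently invade the other. A more economical route avoids most of this by reducing directly to Lemma~\ref{lem:morethantwoborders} whenever the resulting components carry at least three border nodes in total, since the cautiousness argument used there does not rely on the configuration being initial, only on the current configuration and on cautiousness as an invariant of the algorithm. The residual corner case of two singleton components is handled on the spot: a cautious robot with an all-empty view has no other occupied node to move toward and hence cannot move, so two isolated singletons stay forever apart, again ruling out gathering.
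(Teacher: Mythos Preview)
Your proposal is correct. The paper's own proof is precisely your ``more economical route'': it observes that once the visibility graph disconnects into two parts $\mathcal{A}$ and $\mathcal{B}$ separated on both sides by holes longer than $\phi$, and then invokes Lemma~\ref{lem:morethantwoborders} directly to conclude that the parts can never merge. Your primary approach instead unrolls that lemma's separation argument as an explicit induction on steps, which is the same idea written out in full. One point worth noting: you are more careful than the paper about the corner case of two singleton components, where the literal hypothesis ``more than two borders'' of Lemma~\ref{lem:morethantwoborders} is not met; the paper glosses over this, relying implicitly on the fact that the \emph{proof} of that lemma (cautious robots confined between $>\phi$ holes cannot escape) applies regardless of the border count. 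Your explicit treatment of isolated singletons closes that small gap cleanly.
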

\begin{proof}
Since robots are not aware of $R$, the total number of robots, the algorithm must work irrespective of $R$. Suppose for the purpose of contradiction that there exists a cautious gathering algorithm that disconnects the visibility graph at some point in the execution. Without loss of generality, the visibility graph now consists of two distinct parts $\mathcal{A}$ and $\mathcal{B}$, separated in each of the two directions of the ring by at least $k>\phi$ empty nodes (That is, $H_{init}>\phi$). By Lemma~\ref{lem:morethantwoborders}, the algorithm cannot solve gathering, a contradiction.
\end{proof}

\begin{theorem}[Theorem 26 in the main part]
For $M_{init}$ even and $O_{init}$ even, 
there exists no cautious gathering algorithm with $\phi=1$, even when the initial configuration is \emph{not} edge-view-symmetric. 
\end{theorem}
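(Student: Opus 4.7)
The plan is to derive a contradiction: assuming a cautious algorithm $\mathcal{A}$ achieves gathering in this setting, one constructs an adversarial execution that stays forever in a non-gathered configuration.

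First, exploit the structural consequences of the hypotheses. Connectivity of the visibility graph together with $\phi=1$ forces $M_{init}=O_{init}$ and places the occupied nodes in a single contiguous block. Cautiousness of $\mathcal{A}$ combined with Lemma~\ref{lem:monotonic} implies this block must remain contiguous throughout the execution; hence the only way the number $O$ of occupied nodes can decrease is by a border-vacating move, in which every robot currently on a border node leaves it for the adjacent inner node, since vacating any interior node would open a gap of length $2>\phi$. Therefore $O$ descends in unit steps, and starting from an even $O_{init}$, every execution must pass through $O=2$ before reaching $O=1$.

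Next, inspect the situation at $O=2$. With only two adjacent occupied nodes, the left border's view is $(\emptyset,[S_\ell],S_r)$ and the right border's view is $(S_\ell,[S_r],\emptyset)$; under the disoriented semantics these views coincide iff the two border color sets satisfy $S_\ell=S_r$. In that equal case, a deterministic rule prescribes identical actions to both borders, and the only such common actions (both stay, both swap toward each other, or both re-color identically) either leave two nodes occupied or disconnect the block, which is forbidden. Hence gathering can be completed only through an $O=2$ configuration whose two border color sets differ.

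The main obstacle is then to show that, for any cautious $\mathcal{A}$, there exists a non-edge-view-symmetric initial configuration with $M_{init}=O_{init}$ even on which every execution of $\mathcal{A}$ arrives at $S_\ell=S_r$ the first time $O=2$ is reached. The proposed strategy is a mirrored-activation induction starting from a non-edge-view-symmetric pattern such as $L_1 L_2 L_1 L_2$ on four contiguous nodes. The scheduler activates the two ends of the block in mirrored Look/Compute/Move triplets, and the decisive role of $\phi=1$ is that a border's color update depends only on its own color set and that of its single neighbor --- a two-element local window. By induction on the border-vacating steps, one shows that under the mirrored schedule the two local windows at the two ends remain equal at every vacating moment, so the two ends acquire identical colors. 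When $O$ first reaches $2$, the two surviving borders therefore have $S_\ell=S_r$, and by the preceding paragraph the scheduler then preserves the edge-view-symmetric state indefinitely in the manner of the proof of Theorem~\ref{thm:edge-view-symmetry}, never achieving gathering and contradicting the correctness of $\mathcal{A}$.

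The hardest part is formalizing the mirrored-activation invariant cleanly: one must handle the asynchronous interleavings between the two ends (for instance, when one end is still in its Compute phase while the other has already completed its Move) and verify that the induction propagates through every local rule $\mathcal{A}$ can employ. The essential use of $\phi=1$ is that the color a vacating border acquires cannot depend on the asymmetric interior of the block, so the mirrored activations genuinely force symmetric color evolution at the two ends.
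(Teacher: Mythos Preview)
Your proof plan has a genuine gap at its core step, the mirrored-activation invariant. In the configuration $L_1 L_2 L_1 L_2$ that you propose as the starting point, the two borders do \emph{not} have equal local windows: the left border's view is $(\emptyset,[L_1],L_2)$, while the right border's view, after the disorientation flip, is $(\emptyset,[L_2],L_1)$. Since $L_1\neq L_2$, these are distinct views, and a deterministic algorithm is free to prescribe different actions to them (say, the $L_1$-border moves in and the $L_2$-border stays). Your inductive claim that ``the two local windows at the two ends remain equal at every vacating moment'' therefore fails already at the base case, and there is no mechanism in your argument that would force the two ends to converge to identical color sets by the time $O=2$. More generally, a non-edge-view-symmetric block of even length with $\phi=1$ will typically have borders with different views, so a mirrored schedule alone cannot manufacture the symmetry you need.

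The paper's proof takes a quite different route that sidesteps this difficulty entirely. Rather than trying to steer an execution toward a symmetric $O=2$ state, it first analyzes the $O=2$ case to extract structural information about the hypothetical algorithm: among all singly-colored pairs $(L_1,L_2)$, there must exist one for which (under synchronous activation) the $L_1$-border moves toward the $L_2$-border while the $L_2$-border stays; otherwise the two-node configuration would oscillate or freeze forever. Having fixed such a pair, the paper then builds the six-node configuration $L_2\,L_1\,L_2\,L_1\,L_1\,L_2$. Here \emph{both} borders have the view $(\emptyset,[L_2],L_1)$ and hence, by the extracted rule, do not move; every interior robot is blocked because any move would disconnect the visibility graph (forbidden for a cautious algorithm by Lemma~\ref{lem:monotonic}). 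The configuration is therefore a deadlock, yet one checks directly that it is not edge-view-symmetric. This yields the impossibility without any dynamic mirrored-schedule argument.
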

\begin{proof}
Let us consider the case where $\phi=1$, 
so robots can see only colors on neighboring nodes.
Let us consider a set of two consecutive nodes $u_i$ and $u_{i+1}$ that are both occupied by robots (so, $M_{init}=O_{init}=2$). Assume that each position's color is singly colored. Obviously, the color of robots at $u_i$ must be different from the color of robots at $u_{i+1}$ (otherwise, the configuration would be edge-view-symmetric, and gathering would be impossible). Now, suppose all robots are executed synchronously; at least one of the following events happens:
\begin{enumerate}
\item Robots at $u_i$ move to $u_{i+1}$ (possibly changing colors), and robots at $u_{i+1}$ do \emph{not} move (possibly changing colors),
\item Robots at $u_{i+1}$ move to $u_i$ (possibly changing colors), and robots at $u_i$ do \emph{not} move (possibly changing colors),
\item Robots at $u_i$ move to $u_{i+1}$ (possibly changing colors), and robots at $u_{i+1}$ move to $u_i$ (possibly changing colors).
\end{enumerate}
In Case $3$, at least one of the two groups of robots \emph{must} change its color, otherwise, we obtain the same configuration, and the execution goes forever without gathering. Also, Case 3 cannot repeat forever otherwise the robots never gather. Overall, there exists a combination of colors $L_1$ and $L_2$ such that Cases 1 or 2 occurs. From this point, robots may not move anymore since the gathering algorithm is cautious. Without loss of generality, consider that robots with color $L_1$ move to the position occupied by robots with color $L_2$.

Now, consider a configuration with $M_{init}=O_{init}=6$ such that the sequence of colors is as follows: 
$L_2 L_1 L_2 L_1 L_1 L_2$. Border robots cannot move since their view is the same as in the situation with $2$ occupied nodes we presented above. Non-border robots cannot move at it would disconnect the visibility graph, which prevents gathering by a cautious algorithm by Lemma~\ref{lem:monotonic}. So, the algorithm never moves from this configuration where six positions are occupied, and hence never gathers the robots.
\end{proof}

\section{Conclusion}\label{sec:conc}
We presented the first gathering algorithms for myopic luminous robots in rings. One algorithm considers the case where $M_{init}$ is odd, while the other is for the case where $O_{init}$ is odd. The hypotheses used for our algorithms closely follow the impossibility results found for the other cases. 

Some interesting questions remain open:
\begin{itemize}
    \item Are there any deterministic algorithms for the case where $M_{init}$ and $O_{init}$ are even (such solutions would have to avoid starting or ending up in an edge-view-symmetric situation)?
    \item Are there any algorithms for the case where $M_{init}$ (resp. $O_{init}$) is odd that use fewer colors than ours? The current lower bound for odd $M_{init}$ (resp. $O_{init}$) is $2$ (resp. $3$), but our solutions use $3$ (resp. $4$) colors.
    \item Are there any algorihtms for ring gathering that are \emph{not} cautious (a positive answer would enable starting configurations with a number of borders different from $2$)? 
\end{itemize}
\bibliographystyle{plainurl}
\bibliography{biblio2}

\end{document}